\newcommand{\re}{{\rm e}}
\newcommand{\ri}{{\rm i}}
\newcommand{\e}{\mathbf{e}_1}
\newcommand{\mb}{{\mathsf{b}}}
\def\IZ{{\mathbb Z}}
\def\IR{{\mathbb R}}
\def\IC{{\mathbb C}}
\def\IH{{\mathbb H}}
\def\IQ{{\mathbb Q}}
\def\IN{{\mathbb N}}
\def\IP{{\mathbb P}}
\def\IF{{\mathbb F}}
\newcommand{\mO}{\mathsf{O}}
\newcommand{\mrho}{\mathsf{\rho}}
\newcommand{\mx}{\mathsf{x}}
\newcommand{\my}{\mathsf{y}}
\newcommand{\CA}{{\cal A}}
\newcommand{\CC}{{\cal C}}
\newcommand{\CK}{{\cal K}}
\newcommand{\CL}{{\cal L}}
\newcommand{\CO}{{\cal O}}
\newcommand{\CS}{{\cal S}}
\newcommand{\CV}{{\cal V}}
\newcommand{\be}{\begin{equation}}
\newcommand{\ee}{\end{equation}}
\newcommand{\ba}{\begin{aligned}}
\newcommand{\ea}{\end{aligned}}
\newcommand{\borel}{\mathcal{B}}
\newtheorem{theorem}{Theorem}[section]
\newtheorem{prop}[theorem]{Proposition}
\newtheorem{lemma}[theorem]{Lemma}
\newtheorem{cor}[theorem]{Corollary}
\newtheorem{definition}{Definition}[section]
\newtheorem{conjecture}{Conjecture}
\newtheorem{rmk}{Remark}[section]
\title{\huge{\textbf Strong-weak symmetry and quantum modularity of resurgent topological strings on local $\IP^2$}}
\author{Veronica Fantini$^a$ and Claudia Rella$^b$}
\affiliation{${}^a$IH\'ES, 91440 Bures-sur-Yvette, France \\ ${}^b$D\'epartement de Physique Th\'eorique, Universit\'e de Gen\`eve, CH-1211 Gen\`eve, Switzerland}
\emailAdd{fantini@ihes.fr}
\emailAdd{claudia.rella@unige.ch}
\abstract{Quantizing the mirror curve to a toric Calabi--Yau threefold gives rise to quantum operators whose fermionic spectral traces produce factorially divergent formal power series in the Planck constant and its inverse. These are conjecturally captured by the Nekrasov--Shatashvili and standard topological string free energies, respectively, via the TS/ST correspondence. 
The resurgent structures of the first fermionic spectral trace of local $\mathbb{P}^2$ in both weak and strong coupling limits were solved exactly by the second author in~\cite{Rella22}.
Here, we argue that a full-fledged strong-weak resurgent symmetry is at play, exchanging the perturbative/non-perturbative contributions to the holomorphic and anti-holomorphic blocks in the factorization of the spectral trace.
This relies on a global net of relations connecting the perturbative series and the discontinuities in the dual regimes, which is built upon the analytic properties of the $L$-functions with coefficients given by the Stokes constants and the $q$-series acting as their generating functions. Then, we show that the latter are holomorphic quantum modular forms for $\Gamma_1(3)$ and are reconstructed by the median resummation of their asymptotic expansions.}
\gdef\@fpheader{\null}
\begin{document}

\maketitle
\flushbottom

\section{Introduction}\label{sec:intro}
The A-model topological string theory compactified on a local Calabi--Yau (CY) threefold $X$ is defined perturbatively by a worldsheet genus expansion in the string coupling constant $g_s$ and can be expressed in terms of the enumerative invariants of the background geometry. The fixed-genus topological string free energies $F_g(\bm{t})$, $g \ge 0$, are well-defined expansions in a common region of convergence near the large radius limit $\Re(t_i)\gg 1$ in the moduli space of K\"ahler structures of $X$. Yet, at a fixed point $\bm{t}$ in the convergence region, $F_g(\bm{t})$ diverges factorially as $(2g)!$~\cite{lecturesM}, signaling the presence of exponentially small corrections in $g_s$.

A non-perturbative definition of the topological string on $X$ has been proposed in~\cite{GHM, CGM2}. By mirror symmetry, the B-model topological string theory on the mirror $\hat{X}$ is described by the theory of deformations of its complex structures~\cite{M, BKMP}. This is encoded in algebraic equations parametrizing a family of Riemann surfaces $\Sigma$ of genus $g_\Sigma$ embedded in $\IC^* \times \IC^*$, whose quantization leads naturally to quantum operators acting on the real line. A Planck constant $\hbar \in \IR_{>0}$ is introduced as a quantum deformation parameter. The conjectural statement of~\cite{GHM, CGM2}, known as Topological String/Spectral Theory (TS/ST) correspondence, implies a strong-weak coupling duality $\hbar \propto g_s^{-1}$ and leads to exact formulae for the fermionic spectral traces $Z_X(\bm{N},\hbar)$, $\bm{N} \in \IN^{g_\Sigma}$, of these quantum-mechanical operators. These are well-defined functions of $\hbar$ and are expressed in terms of the total grand potential of the topological string on $X$. In particular, the A-model total free energies of both the conventional and Nekrasov--Shatashvili (NS) topological strings on $X$, that is, the generating functions of the Gromov--Witten and refined BPS invariants, are needed to define the total grand potential and can be regarded as non-perturbative corrections of one another in the appropriate regimes~\cite{SWH, GG}.
Thus, the TS/ST correspondence provides a way to access the non-perturbative effects associated with the factorial divergence of the topological string perturbation series in the spirit of large-$N$ gauge/string dualities. After the original proposal of~\cite{M, Marino:2007te, Marino:2008ya}, growing evidence indicates that the theory of resurgence~\cite{EcalleI} can be applied to obtain a systematic understanding of the hidden non-perturbative sectors of topological string theory. 

The machinery of resurgence uniquely associates a divergent formal power series with a collection of non-analytic, exponential-type corrections paired with a non-trivial set of complex numbers, known as Stokes constants, which capture information about the large-order behavior of the original asymptotic series and its additional non-perturbative sectors in a mathematically precise way. See, \emph{e.g.},~\cite{ABS, diver-book, Dorigoni} for reviews on resurgence and~\cite{lecturesM, bookM} for its application to gauge and string theories.
For instance, following the work of~\cite{C-SESV1, C-SESV2}, the resurgent structure of the topological string has recently been investigated by means of an operator formulation of the BCOV holomorphic anomaly equations~\cite{BCOV1, BCOV2} satisfied by the conventional closed topological string free energies on toric CY threefolds~\cite{GuM2}, which has later been extended to arbitrary CY threefolds~\cite{GKKM, IM} and Walcher’s real topological string~\cite{MS}. 
These works produced formal solutions for the multi-instanton trans-series extensions of the free energies that are conjecturally related to the resurgent structure of their perturbative genus expansion in a specific way, although a determination of which of the possible Borel singularities are realized and the values of their Stokes constants are missing. 
The same methods have been applied to the refined topological string and its NS limit on toric CY threefolds~\cite{GuM3, AMP, Gu23, GuGuo}. The Stokes constants in these resurgent structures are generally unknown, and only a few have been computed numerically. 
Yet, growing evidence supports a conjectural identification between them and enumerative invariants counting BPS states of the topological string. The case of the resolved conifold has been studied in~\cite{PS, ASTT, GHN, AHT}.

Simultaneously, the alternative approach of~\cite{GuM, Rella22}, which we follow and promote in this paper, investigates the resurgent structure of the topological string on a toric CY threefold $X$ via the factorially divergent numerical power series obtained by asymptotically expanding the fermionic spectral traces $Z_X(\bm{N}, \hbar)$ at fixed $\bm{N}$ in the limits of $\hbar \rightarrow 0$ and $\hbar \rightarrow \infty$. Crucially, this leads, in particular, to peacock patterns of singularities in the Borel plane and infinite sets of generally calculable and rational Stokes constants. Using the TS/ST correspondence, the dual $\hbar$-regimes of the fermionic spectral traces above are explicitly related to the free energies of the NS and conventional topological string theories on $X$, respectively. 

The power of studying the resurgence of the fermionic spectral traces is manifest in the leading work of the second author on the local $\IP^2$ geometry~\cite{Rella22}, which is the benchmark example among all non-compact CY threefolds. 
Indeed, the complete resurgent structures of the logarithm of the first fermionic spectral trace of local $\IP^2$ in both weak and strong limits in $\hbar$ have been exactly solved, resulting in proven analytic formulae for the Stokes constants.
These have a transparent and straightforward arithmetic meaning as divisor sum functions, while their generating series are known in closed form in terms of $q$-Pochhammer symbols. In addition, the Stokes constants are the coefficients of two explicit $L$-functions, whose analytic number-theoretic properties underlie a global net of relations connecting the resurgent structures at weak and strong coupling. These novel results are the starting ground for the investigation performed in this paper. The arithmetic duality discovered in~\cite{Rella22} is completed and upgraded into a beautiful full-fledged symmetry. 
We call it the \emph{strong-weak resurgent symmetry}. In a nutshell, it is an exact symmetry between the resurgent structures in the strong and weak coupling regimes building upon the interplay of $q$-series and $L$-functions and revolving around the central role played by the Stokes constants. 
This newly found symmetry acts at the level of the perturbative/non-perturbative contributions to the holomorphic and anti-holomorphic blocks in the factorization of the spectral trace in a precise way. As a side effect, we shed some light on one of the formal dissimilarities between topological strings and complex Chern--Simons theory pointed out in~\cite{GuM}. 

The richness of the resurgence of the spectral trace of local $\IP^2$ leads us to investigate further the properties of the generating functions of the weak and strong coupling Stokes constants, which we denote by $f_0$ and $f_\infty$, respectively. 
On the one hand, we prove that $f_0$ and $f_\infty$ and their images under Fricke involution are \emph{holomorphic quantum modular forms} of weight zero for the congruence subgroup $\Gamma_1(3) \subset \mathsf{SL}_2(\IZ)$ (see Theorems~\ref{thm:f0,finf-quantum modular} and~\ref{thm:fricke-quantum modular}). Notice that the group $\Gamma_1(3)$ is deeply related to the geometry of the problem since the moduli space parametrizing the complex structures of the mirror of local $\IP^2$ is the compactification of the quotient $\IH/\Gamma_1(3)$, where $\IH$ denotes the upper half of the complex plane. Moreover, the generating functions of Gromov--Witten invariants of local $\IP^2$ are known to be quasi-modular forms for $\Gamma_1(3)$~\cite{ABK,CI,Bousseau20}.
On the other hand, we study the summability properties of the asymptotic expansions of $f_0$ and $f_\infty$, which are explicitly governed by the perturbative series of the logarithm of the spectral trace in the two regimes in $\hbar$. We prove that the asymptotic expansion of $f_0$ is resummable and precisely reconstructs the generating function through the \emph{median resummation} (see Theorem~\ref{thm:median-sum-Sinf}). A similar result is only conjectured for $f_\infty$ (see Conjecture~\ref{conj:median-sum-S0}) and supported by numerical evidence.

In the companion paper~\cite{FR1maths}, we lift the paramount features of the spectral trace of local $\IP^2$ to a generic paradigm of \emph{modular resurgence}, which revolves around the functional equation satisfied by the resurgent $L$-functions. This leads us to present and discuss new perspectives on the resurgence of divergent formal power series and quantum modularity and to introduce the notion of a \emph{modular resurgent structure}.

This paper is organized as follows. In Section~\ref{sec:background}, we review the basic ingredients in the construction of the fermionic spectral traces of local $\IP^2$ and their conjectural resurgent structures in the weak and strong coupling regimes in $\hbar$. We then focus on the logarithm of the first fermionic spectral trace and review the key results obtained by the second author in~\cite{Rella22}. In Section~\ref{sec:1}, we delve into the analytic number-theoretic characterization of the dual resurgent structures discovered in~\cite{Rella22} and complete it into a full-fledged strong-weak resurgent symmetry. As we take the perspective of the Stokes constants, their generating functions, and their $L$-functions, we find new exact relations intertwining the dual regimes in $\hbar$. 
In Section~\ref{sec:summability-and-QM}, we study the quantum modularity properties of the generating functions of the Stokes constants and the summability of their asymptotic expansions. 
Finally, in Section~\ref{sec:conclusion}, we conclude and mention further perspectives to be addressed in future works. There are two Appendices.

\section{Resurgence of the spectral trace of local \texorpdfstring{$\IP^2$}{P2}}\label{sec:background}
In this section, we review the construction of the fermionic spectral traces from the quantization of the mirror curve of the local $\IP^2$ geometry and their connection to the topological string theory compactified on the same background via the conjectural TS/ST correspondence~\cite{GHM, CGM2}. We refer to~\cite{reviewM} for a thorough overview. We describe the recent proposal of~\cite{GuM, Rella22} on the resurgence of the asymptotic series obtained from the weak and strong coupling expansions of the fermionic spectral traces. Finally, we summarize the exact results of~\cite{Rella22} on the dual resurgent structures of the first fermionic spectral trace and their startling number-theoretic properties.

\subsection{Geometric setup and the fermionic spectral traces}\label{sec:geometry}
The local $\IP^2$ geometry is the simplest example of a toric del Pezzo Calabi--Yau (CY) threefold.\footnote{A toric del Pezzo CY threefold is defined as the total space of the canonical line bundle on a toric del Pezzo surface. Examples of toric del Pezzo surfaces are the projective surface $\IP^2$, the Hirzebruch surfaces $\IF_n$ for $n=0,1,2$, and the blowups of $\IP^2$ at $n$ points, usually denoted by $\mathcal{B}_n$, for $n=1,2,3$.} 
It is defined as the total space of the canonical line bundle on the projective surface $\IP^2$, that is, 
\be
X = \CO(-3) \rightarrow \IP^2 \, .
\ee
Mirror symmetry pairs $X$ with a mirror CY threefold $\hat{X}$ in such a way that the theory of variations of complex structures of the mirror $\hat{X}$ is encoded in the one-parameter family of algebraic equations
\be \label{eq: mcP2}
O_{\IP^2}(x,y) + \kappa = \re^x + \re^y + \re^{-x-y} + \kappa = 0 \, , \quad x,y \in \IC \, , 
\ee
where $\kappa$ is the complex deformation parameter of $\hat{X}$. Note that there are no mass parameters. 
Eq.~\eqref{eq: mcP2} describes a genus-one Riemann surface $\Sigma$ embedded in $\IC^* \times \IC^*$, which is known as the mirror curve of $X$ and determines the B-model topological string theory on $\hat{X}$~\cite{M, BKMP}. 
\begin{rmk}
Toric del Pezzo surfaces are classified by reflexive polyhedra in two dimensions. The polyhedron $\Delta_{\IP^2}$ associated with the projective surface $\IP^2$ is the convex hull of the origin together with the two-dimensional vectors
\be \label{eq: vecP2}
\nu^{(1)} = (1,0)\, , \quad  \nu^{(2)} = (0,1) \, , \quad \nu^{(3)} = (-1,-1) \, ,
\ee
as shown in Fig.~\ref{fig: polyP2}. Note that the function $O_{\IP^2}(x,y)$ in Eq.~\eqref{eq: mcP2} can be written as
\be
O_{\IP^2}(x, y) = \sum_{i=1}^{3} \exp \left( \nu_1^{(i)} x + \nu_2^{(i)} y \right) \, ,
\ee
where $\nu^{(i)} = (\nu_1^{(i)},\nu_2^{(i)})$, $i=1,2,3$.
\begin{figure}[htb!]
\center
 \includegraphics[width=0.3\textwidth]{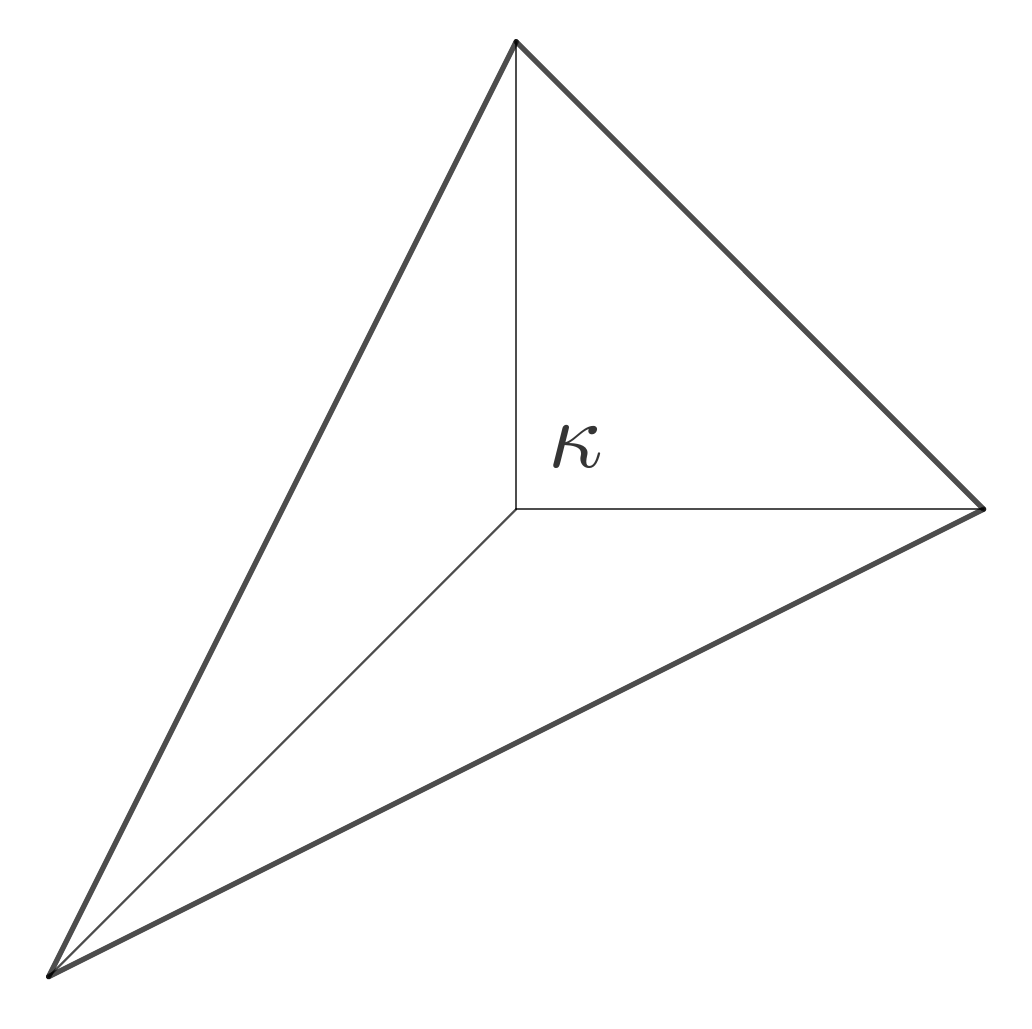}
 \caption{Toric diagram of the local $\IP^2$ geometry. We show the vectors in Eq.~\eqref{eq: vecP2} and the polyhedron $\Delta_{\IP^2}$. The complex modulus $\kappa$ corresponds to the internal vertex of the diagram.}
 \label{fig: polyP2}
\end{figure}
\end{rmk}
The Batyrev coordinate $z = 1/\kappa^3$ is related to the K\"ahler parameter of $X$ via the classical mirror map 
\be \label{eq: MM}
- t(z) = \log z + \tilde{S}(z) \, , 
\ee
where $\tilde{S}(z)$ is a power series in $z$ with finite radius of convergence. 
Following a choice of symplectic basis $\{A, B\}$ of one-cycles on the spectral curve $\Sigma$, the classical periods of the meromorphic differential one-form $\lambda = y(x)dx$, where the function $y(x)$ is locally defined by Eq.~\eqref{eq: mcP2}, satisfy
 \be \label{eq: periods}
 t(z) \propto \oint_{A} \lambda \, , \quad \partial_{t} F_0(z) \propto \oint_{B} \lambda \, ,
  \ee
where the function $F_0(z)$ is the classical prepotential of local $\IP^2$~\cite{CGPO}, representing the genus-zero amplitude of the B-model topological string on $\hat{X}$---or, equivalently, the generating function of the genus-zero Gromow--Witten invariants of $X$ convoluted with the mirror map. 
In the parametrization of the moduli space given by the Batyrev coordinate, the Picard--Fuchs differential equation associated with the mirror of local $\IP^2$ is 
\be \label{eq: PF}
 \CL_z \Pi = \left( \Theta^3-3z(3 \Theta+1)(3 \Theta+2)\Theta \right) \Pi = 0 \, ,
 \ee
where $\Theta = z d/d z$ and $\Pi$ is the full period vector of $\lambda$. The Picard--Fuchs differential operator $\CL_z$ has three singular points, which are the large radius point at $z=0$, the conifold point at $z=-1/27$, and the orbifold point at $1/z=0$. In addition, by looking at the monodromy data of the Picard--Fuchs equation, the moduli space of complex structures on $\hat{X}$ can be identified with the compactified quotient $\overline{\IH/\Gamma_1(3)}$, where $\IH$ denotes the upper half-plane and $\Gamma_1(3)$ is the congruence subgroup
\be \label{eq: G13-definition}
\Gamma_1(3):=\left\lbrace \begin{pmatrix}
    a & b\\
    c & d
\end{pmatrix}\in\mathsf{SL}_2(\IZ)\, \Big\vert\, a,d \equiv_3 1 \,, c\equiv_3 0\right\rbrace \, ,
\ee
which is generated by the elements
\be \label{eq: G13-generators}
T=\begin{pmatrix}
        1 & 1 \\
        0 & 1
    \end{pmatrix} \, , \quad  \gamma_3=\begin{pmatrix}
        1 & 0 \\
        3 & 1
    \end{pmatrix} \,.
\ee
We refer to~\cite[Section~10.1]{CI} for a detailed discussion.

The canonical Weyl quantization scheme of~\cite{GHM, CGM2} applied to the function $O_{\IP^2}(x,y)$ in Eq.~\eqref{eq: mcP2}, together with an appropriate choice of reality conditions for the variables $x,y \in \IC$, produces the Hermitian quantum operator\footnote{The spectral theory of $\mO_{\IP^2}$ has been studied numerically in~\cite{HW}.}
\be \label{eq: opP2}
\mO_{\IP^2}(\mx, \my) = \re^{\mx} + \re^{\my} + \re^{-\mx-\my} \, ,
\ee
acting on $L^2(\IR)$, where $\mx, \, \my$ are self-adjoint Heisenberg operators on the real line satisfying the commutation relation $[ \mx , \, \my ] = \ri \hbar$. A Planck constant $\hbar \in \IR_{>0}$ is introduced as a quantum deformation parameter. 
Consequently, the classical mirror map $t(z)$ in Eq.~\eqref{eq: MM} is promoted to the quantum mirror map
 \be \label{eq: qMM}
 - t(z, \hbar) = \log z + \tilde{S}(z, \hbar) \, ,
 \ee
which reproduces the conventional mirror map in the semiclassical limit $\hbar \rightarrow 0$ and is determined as an A-period of a quantum-corrected version of the differential $\lambda$ obtained via the all-orders perturbative WKB approximation~\cite{MiMo, ACDKV}.
It was conjectured in~\cite{GHM, CGM2} and later proven in~\cite{KM} that the inverse operator 
\be
\rho_{\IP^2} = \mO_{\IP^2}^{-1}
\ee 
is positive-definite and of trace class. 
Its spectral (or Fredholm) determinant
\be
\Xi_{\IP^2}(\kappa, \hbar) = \mathrm{det} \left(1 + \kappa \rho_{\IP^2} \right)
\ee
is an entire function of $\kappa$~\cite{CGM2, CGuM}, while its fermionic spectral traces $Z(N, \hbar)$, where $N$ is a non-negative integer, are well-defined functions of $\hbar \in \IR_{>0}$ and can be computed explicitly. In particular, they appear in the convergent power series expansion of the analytically continued spectral determinant around the orbifold point $\kappa = 0$, that is, 
\be \label{eq: expXi}
\Xi_{\IP^2}(\kappa, \hbar) = 1 + \sum_{N=1}^{\infty} Z(N, \hbar) \, \kappa^{N} \, .
\ee
Besides, classical results in Fredholm's theory provide explicit determinant expressions for the fermionic spectral traces, which can be regarded as multi-cut matrix model integrals~\cite{MZ, KMZ}. In particular, we have the multi-dimensional integral representation
\be \label{eq: traces-def}
\ba
Z(N, \hbar) &= \frac{1}{N!} 
\sum_{\sigma \in \CS_N} (-1)^{\epsilon(\sigma)} \int \prod_{i=1}^N \rho_{\IP^2}(x_i, x_{\sigma(i)}) \, d^N x \\
&=\frac{1}{N!} \int \mathrm{det} \left( \rho_{\IP^2}(x_i, x_j) \right) \, d^N x \, ,
\ea
\ee
where $\CS_N$ is the $N$-th permutation group, $\epsilon(\sigma)$ is the signature of the element $\sigma \in \CS_N$, and $\rho_{\IP^2}(x_i, x_j)$ denotes the integral kernel of the operator $\rho_{\IP^2}$. This is known explicitly as
\be \label{eq: P2kernel}
\rho_{\IP^2}(x, \, y) = \frac{\re^{\pi \mb (x+y) /3}}{2 \mb \cosh(\pi (x-y)/\mb + \ri \pi /6)} \frac{\Phi_{\mb} (y + \ri \mb /3) }{\Phi_{\mb} (x - \ri \mb/3)} \, ,
\ee
where $\mb$ is related to $\hbar$ by 
\be 
2 \pi \mb^2 = 3 \hbar
\ee
and $\Phi_{\mb}$ denotes Faddeev's quantum dilogarithm (see Appendix~\ref{app: Faddeev} for its definition and some of its fundamental properties).
Note that the integral kernel in Eq.~\eqref{eq: P2kernel} is well-defined for 
\be \label{eq: complexhbar}
\hbar \in \IC'=\IC \backslash \IR_{\le 0} \, ,
\ee
since $\Phi_{\mb}$ can be analytically continued to all values of $\mb$ such that $\mb^2 \notin \IR_{\le 0}$. Consequently, the same holds for the fermionic spectral traces in Eq.~\eqref{eq: traces-def}.

The conjectural statement of the TS/ST correspondence of~\cite{GHM, CGM2}, which is now supported by an increasing amount of evidence obtained in applications to concrete examples, implies an exact formula for the fermionic spectral traces of local $\IP^2$ in terms of the topological string amplitudes on the same background.
Namely, we have that\footnote{
Although being initially defined for non-negative integer values of $N$, the Airy-type integral in Eq.~\eqref{eq: contour} naturally extends the fermionic spectral traces to entire functions of $N \in \IC$~\cite{CGM}.}
\be \label{eq: contour}
 Z(N, \hbar) = \frac{1}{2 \pi \ri} \int_{- \ri \infty}^{\ri \infty} d \mu  \, \exp(J(\mu, \hbar) -  \mu N) \, , \quad N \in \IN \, ,
\ee
where the integration contour along the imaginary axis can be suitably deformed to guarantee convergence, the chemical potential $\mu$ is related to the complex modulus by $\kappa = \re^{\mu}$, and $J(\mu, \hbar)$ is the total grand potential of topological string theory on $X$.
We recall that this is given by the sum
\be \label{eq: J_total}
J(\mu, \hbar) = J^{\rm WS}(\mu, \hbar) + J^{\rm WKB}(\mu, \hbar) \, ,
\ee
where the worldsheet component $J^{\rm WS}(\mu, \hbar)$ encodes the non-perturbative corrections in $\hbar$ due to complex instantons, while the WKB component $J^{\rm WKB}(\mu, \hbar)$ contains the perturbative contributions in $\hbar$ to the quantum-mechanical spectral problem in the mirror B-model~\cite{HMMO}.
Precisely, $J^{\rm WS}(\mu, \hbar)$ can be expressed in terms of the all-genus generating functional of Gopakumar--Vafa invariants of $X$~\cite{GV, PT}, which is captured by the total free energy of the A-model standard topological string, while $J^{\rm WKB}(\mu, \hbar)$ is a function of the all-order generating functional of refined BPS invariants of $X$~\cite{NS, CKK, NO}, which is captured by the total free energy of the A-model Nekrasov--Shatashvili (NS) topological string\footnote{The standard and NS topological string partition functions can be engineered as two one-parameter specializations of the partition function of the refined topological string theory compactified on a toric CY background~\cite{N}.}.
As we will discuss in Section~\ref{sec: resurgent_strings}, only one of the two components of the total grand potential above contributes to each of the dual perturbative regimes of $\hbar\to 0$ and $\hbar \to \infty$.
Moreover, and importantly, the TS/ST correspondence implies the identification
\be \label{eq: duality}
g_s = \frac{4 \pi^2}{\hbar} \, ,
\ee
where $g_s$ is the conventional topological string coupling constant. The statement above can be interpreted as a strong-weak coupling duality between the spectral theory of the operator $\rho_{\IP^2}$ arising from the quantization of the mirror curve $\Sigma$ and the standard topological string theory on $X$. We will return to this in Section~\ref{sec: physics_arg}. 

\begin{rmk}
In its original formulation, the TS/ST correspondence of~\cite{GHM, CGM2}, and therefore also the statement in Eq.~\eqref{eq: contour}, applies to $\hbar \in \IR_{>0}$.  
The issue of the complexification of $\hbar$, or, equivalently, of $g_s$, in the context of the TS/ST correspondence, has been addressed in various studies since~\cite{Ka, KS1, KS2, GM2}. As we mentioned above, the integral kernel of local $\IP^2$ in Eq.~\eqref{eq: P2kernel} admits a well-defined analytic continuation to $\hbar \in \IC'$. Moreover, the analytically continued fermionic spectral traces obtained in this way match the natural analytic continuation of the total grand potential of local $\IP^2$ to complex values of $\hbar$ such that $\Re (\hbar) >0$. The TS/ST correspondence is, then, still applicable.
\end{rmk}

\subsection{Resurgent structures in the weak and strong coupling regimes}\label{sec: resurgent_strings}
In light of the strong-weak duality between the topological string coupling $g_s$ and the quantum deformation parameter $\hbar$ in Eq.~\eqref{eq: duality}, the regimes of $\hbar \rightarrow 0$ and $\hbar \rightarrow \infty$ of the spectral theory correspond to the strong and weak coupling limits of the dual topological string theory on local $\IP^2$, respectively. Besides, due to the functional forms of the worldsheet and WKB grand potentials appearing in Eq.~\eqref{eq: J_total}, there are appropriate scaling regimes in the coupling constants in which only one of the two components effectively contributes to the total grand potential. In the standard double-scaling limit~\cite{MP, KKN, GM}
\be \label{eq: SS1}
\hbar \rightarrow \infty \, , \quad \mu \rightarrow \infty \, , \quad \frac{\mu}{\hbar} = \zeta \; \; \mathrm{fixed} \, , 
\ee 
the quantum mirror map $t(z, \hbar)$ in Eq.~\eqref{eq: qMM} becomes trivial, and the total grand potential in Eq.~\eqref{eq: J_total} reduces to its worldsheet component, which has the all-genus asymptotic expansion
\be \label{eq: tHooft}
J^{\rm WS}(\zeta, \hbar) \, \sim \, \sum_{g =0}^{\infty} J_g(\zeta) \, \hbar^{2-2g} \, ,
\ee
where $J_g(\zeta)$ is essentially the genus $g$ free energy of the conventional topological string in the large radius limit $\Re(t) \gg 1$ of the moduli space of $X$ after the so-called B-field has been turned on.
Similarly, in the dual semiclassical regime
\be \label{eq: DS1}
\hbar \rightarrow 0 \, , \quad \mu \; \; \mathrm{fixed} \, ,
\ee
the quantum mirror map $t(z, \hbar)$ in Eq.~\eqref{eq: qMM} becomes classical by construction, and the total grand potential in Eq.~\eqref{eq: J_total} retains only the WKB contribution, which can be formally expanded in powers of $\hbar$ as
\be \label{eq: dual}
J^{\text{WKB}}(\mu, \hbar) \, \sim \, \sum_{n=0}^{\infty} J^{\rm NS}_n(\mu) \, \hbar^{2n-1} \, ,
\ee
where the fixed-order WKB grand potential $J^{\rm NS}_n(\mu)$ is a function of the order $n$ free energy of the NS topological string at large radius and its derivative in the K\"ahler parameter $t$.
Evidence suggests that both asymptotic series in Eqs.~\eqref{eq: tHooft} and~\eqref{eq: dual} diverge factorially~\cite{HMMO}. 
Thus, their non-perturbative completions will contain exponentially small effects in $\hbar$ and $1/\hbar$, respectively. See the discussion in Section~\ref{sec: physics_arg}.

Due to the TS/ST correspondence, there are related 't Hooft limits for the fermionic spectral traces that extract the perturbative all-genus and all-order expansions of the conventional and NS topological strings on local $\IP^2$, respectively. In the standard 't Hooft limit associated with Eq.~\eqref{eq: SS1}, that is, 
\be \label{eq: SS2}
\hbar \rightarrow \infty \, , \quad N \rightarrow \infty \, , \quad \frac{N}{\hbar} = \lambda \; \; \text{fixed} \, , 
\ee
the saddle-point evaluation of the integral in Eq.~\eqref{eq: contour} represents a symplectic transformation from the large radius point in moduli space to the so-called maximal conifold point\footnote{The maximal conifold point can be defined as the unique point in the conifold locus of moduli space where its connected components intersect transversally.}~\cite{CGM2, CGuM}.
Specifically, it follows from the geometric formalism of~\cite{ABK} that the 't Hooft parameter $\lambda$ is the flat coordinate corresponding to the maximal conifold frame of local $\IP^2$. In the recent work of~\cite{Rella22}, a dual WKB double-scaling regime associated with Eq.~\eqref{eq: DS1} is studied, that is, 
\be \label{eq: DS2}
\hbar \rightarrow 0 \, , \quad N \rightarrow \infty \, , \quad N \hbar = \sigma \; \; \text{fixed} \, ,
\ee
so that the symplectic transformation of the total grand potential at large radius encoded in the integral in Eq.~\eqref{eq: contour} can be interpreted again as a suitable change of frame in moduli space. The dual 't Hooft coupling $\sigma$ is shown to be a simple function of the periods in Eq.~\eqref{eq: periods} and the modular coordinate $\partial_t^2 F_0(z)$ which appears in the study of the modular properties and BPS spectrum of local $\IP^2$~\cite{CI, Bousseau20, Bousseau22}. 

Let us go back to the fermionic spectral traces $Z(N, \hbar)$, $N \in \IZ_{>0}$, in Eq.~\eqref{eq: traces-def} and perturbatively expand them in the limits $\hbar \rightarrow 0$ and $\hbar \rightarrow \infty$ with $N$ fixed. Following~\cite{GuM, Rella22}, we construct the two families of asymptotic series
\begin{subequations} \label{eq: pert_exp}
\begin{align} 
\log Z(N, \hbar) &\sim \phi_N(\hbar) \, \quad \quad \, \mathrm{for} \; \; \; \hbar \rightarrow 0 \, , \label{eq: pert_exp_0} \\ 
\log Z(N, \hbar) &\sim \psi_N(\hbar^{-1}) \, \quad \, \mathrm{for} \; \; \; \hbar \rightarrow \infty  \, , \label{eq: pert_exp_inf}
\end{align}
\end{subequations}
indexed by the non-negative integer $N$. Under the assumption that the formal power series in Eq.~\eqref{eq: pert_exp} are Gevrey-1 and simple resurgent, which is explicitly verified in the examples considered in this paper, the theory of resurgence can be applied to give us access to the non-analytic sectors that are hidden in perturbation theory. We briefly recall some of the basic notions of resurgence in Appendix~\ref{app: resurgence}. In order to perform a resurgent analysis of the fermionic spectral traces, we consider their analytic continuation to $\hbar \in \IC'$.
A conjectural proposal for the resurgent structure of the asymptotic series in Eq.~\eqref{eq: pert_exp} has been recently formulated\footnote{The conjecture of~\cite{GuM, Rella22} describes a universal structure underlying the resurgence of the strong and weak coupling perturbative expansions of the fermionic spectral traces for all local CY threefolds. Explicit results in support of the conjecture are obtained in the examples of local $\IP^2$ and local $\IF_0$.} in~\cite{GuM, Rella22}.
Let us briefly recall its main statements. For fixed $N \in \IZ_{>0}$, each of the formal power series in Eq.~\eqref{eq: pert_exp_0}, which emerge in the semiclassical limit $\hbar \rightarrow 0$ of the spectral theory, is associated with a minimal resurgent structure $\mathfrak{B}_{\phi_N}$ of the particular form
\be
\mathfrak{B}_{\phi_N} = \{ \Phi_{\sigma, n ; N}(\hbar) \}_{\sigma = 0, \dots, l_0 ; \, n \in \IZ} \, , \quad \Phi_{\sigma, n ; N}(\hbar) = \phi_{\sigma ; N}(\hbar) \re^{- n \frac{\mathcal{A}_0}{\hbar}} \, ,
\ee
where the non-negative integer $l_0$ and the complex constant $\mathcal{A}_0$ depend on the choice of $N$. Namely, for each value of $\sigma \in \{ 0 , \dots , l_0 \}$, a Gevrey-1 asymptotic series $\phi_{\sigma ; N}(\hbar)$ resurges from the original perturbative expansion $\phi_N(\hbar) = \phi_{0 ; N}(\hbar)$ and the singularities of its Borel transform $\hat{\phi}_{\sigma ; N}(\zeta)$ are located along infinite towers in the Borel $\zeta$-plane so that every two singularities in the same tower are spaced by an integer multiple of $\mathcal{A}_0$. 
The global arrangement of the complete set of Borel singularities of $\mathfrak{B}_{\phi_N}$ is known as a peacock pattern\footnote{Peacock patterns are typically observed in theories controlled by a quantum curve in exponentiated variables, including complex Chern--Simons theory on the complement of a hyperbolic knot~\cite{GGuM, GGuM2}.}. 
In this way, each asymptotic series $\phi_{\sigma ; N}(\hbar)$ gives rise to an infinite family of basic trans-series $\Phi_{\sigma, n ; N}(\hbar)$, $n \in \IZ$.
The corresponding infinite-dimensional matrix of Stokes constants $\mathcal{S}_{\phi_N}$ can be written as
\be
\mathcal{S}_{\phi_N} = \{ S_{\sigma, \sigma', n ; N} \in \IQ \}_{\sigma, \sigma' = 0, \dots, l_0 ; \, n \in \IZ} \, ,
\ee
after fixing a canonical normalization of the asymptotic series $\phi_{\sigma ; N}(\hbar)$. It is conjectured that the Stokes constants $S_{\sigma, \sigma', n ; N}$ are closely related to non-trivial sequences of integers, thus representing a new class of enumerative invariants of the topological string on local $\IP^2$. Furthermore, they can be naturally organized as the coefficients of generating functions that are expressible in the form of $q$-series and are uniquely determined by the original perturbative expansion $\phi_N(\hbar)$. Schematically, 
\be
S_{\sigma \sigma' ; N}(q) = \sum_{n \in \IZ}  S_{\sigma \sigma' , n ; N} \, q^n \, .
\ee
The conjectural proposal of~\cite{GuM, Rella22} for the formal power series in Eq.~\eqref{eq: pert_exp_inf}, which emerge in the strongly-coupled limit $\hbar \rightarrow \infty$ of the spectral theory, is entirely analogous. At fixed $N \in \IZ_{>0}$, the minimal resurgent structure $\mathfrak{B}_{\psi_N}$ is given by
\be
\mathfrak{B}_{\psi_N} = \{ \Psi_{\sigma, n ; N}(\hbar^{-1}) \}_{\sigma = 0, \dots, l_\infty ; \, n \in \IZ} \, , \quad \Psi_{\sigma, n ; N}(\hbar^{-1}) = \psi_{\sigma ; N}(\hbar^{-1}) \re^{- n \mathcal{A}_\infty \hbar} \, ,
\ee
where again $l_\infty \in \IZ_{>0}$ and $\mathcal{A}_\infty \in \IC$ depend on the choice of $N$, resulting in a dual peacock arrangement of singularities in the complex Borel plane. The infinite-dimensional matrix of Stokes constants $\mathcal{S}_{\psi_N}$ is instead
\be
\mathcal{S}_{\psi_N} = \{ R_{\sigma, \sigma', n ; N} \in \IQ \}_{\sigma, \sigma' = 0, \dots, l_\infty ; \, n \in \IZ} \, ,
\ee
after fixing the normalization of the asymptotic series $\psi_{\sigma ; N}(\hbar^{-1})$. Once more, the Stokes constants $R_{\sigma, \sigma', n ; N}$ provide a new non-trivial class of integer invariants of the geometry and can be naturally organized into the generating $q$-series
\be
R_{\sigma \sigma' ; N}(q) = \sum_{n \in \IZ}  R_{\sigma \sigma' , n ; N} \, q^n \, ,
\ee
which are uniquely determined by the original perturbative expansion $\psi_N(\hbar^{-1})$. 

\begin{rmk}
The work of~\cite{GuM} refers to the exponential of the series in Eq.~\eqref{eq: pert_exp_inf}, that is, the perturbative expansion in $g_s$ of the fermionic spectral traces directly. As pointed out in~\cite{Rella22}, however, the arithmetic properties underlying the resurgent structures of the series in Eq.~\eqref{eq: pert_exp} for $N=1$ become less readily accessible after exponentiation. For example, the Stokes constants for $\exp (\phi_1(\hbar))$ are complex numbers with no manifest interpretation beyond the fact that they can be expressed in terms of the Stokes constants of $\phi_1(\hbar)$ through a closed partition-theoretic formula~\cite{Rella22}. Similarly, the exact number-theoretic duality between the weak and strong coupling regimes emerges when looking at the logarithm of the spectral trace, which is akin to considering the free energy instead of the partition function. Let us stress that, as shown explicitly in~\cite{Rella22}, the analytic solution to the resurgent structures of the series in Eq.~\eqref{eq: pert_exp} can be translated into results on the corresponding exponentiated series by applying the tools of alien calculus~\cite{ABS, diver-book, Dorigoni, lecturesM}.   
\end{rmk}

\begin{rmk}\label{rmk:perturbative/non-perturbative}
Let us stress that the TS/ST statement in Eq.~\eqref{eq: contour} and the strong-weak coupling duality in Eq.~\eqref{eq: duality} map the semiclassical $\hbar$-series in Eq.~\eqref{eq: pert_exp_0} into a dual perturbative $\hbar$-expansion that is described by the NS limit of the refined topological string on local $\IP^2$. Analogously, the asymptotic series for $\hbar \rightarrow \infty$ in Eq.~\eqref{eq: pert_exp_inf} is associated with a dual perturbative expansion in the weakly coupled regime $g_s \rightarrow 0$ that is captured by the conventional topological string on the same geometry. Notably, the standard and NS topological string free energies can be regarded as non-perturbative corrections of one another in the appropriate regimes~\cite{SWH, GG}. The interplay of the two one-parameter specializations of the refined topological string theory on a local CY threefold with the perturbative and non-perturbative contributions to the fermionic spectral traces 
in the dual limits in $\hbar$ is discussed in Section~\ref{sec: physics_arg}.
\end{rmk}

\subsection{Exact solutions for the spectral trace}\label{sec:exact-solution}
In the rest of this paper, unless explicitly stated, we will focus on the spectral trace 
\be
Z_{\IP^2}(1, \hbar) = \mathrm{Tr}(\mrho_{\IP^2}) \, ,
\ee 
whose resurgent structures in the semiclassical limit $\hbar \rightarrow 0$ and the dual strongly-coupled limit $\hbar \rightarrow \infty$ have been obtained analytically in closed form in~\cite{Rella22}. This is the only example of a fermionic spectral trace of a local CY threefold to have been solved exactly.
Under the assumption that $\Re (\mb) >0$, it has the integral representation~\cite{MZ}
\be \label{eq: P2int}
\mathrm{Tr}(\mrho_{\IP^2}) = \frac{1}{\sqrt{3} \mb} \int_{\IR} \re^{2 \pi \mb x /3} \frac{\Phi_{\mb} (x + \ri \mb/3)}{\Phi_{\mb} (x - \ri \mb/3)} \, d x \, ,
\ee
which is an analytic function of $\hbar \in \IC'$. The integral above can be evaluated using the integral Ramanujan formula or analytically continuing $x$ to the complex domain, completing the integration contour from above, and summing over residues, yielding the closed formula~\cite{KM}
\be \label{eq: P2close}
\mathrm{Tr}(\mrho_{\IP^2}) = \frac{1}{\sqrt{3} \mb} \re^{- \frac{\pi \ri}{36} (12 c_{\mb}^2+4 \mb^2-3)} \frac{\Phi_{\mb} \left( c_{\mb} - \frac{\ri \mb}{3} \right)^2}{\Phi_{\mb} \left( c_{\mb} - \frac{2 \ri \mb}{3} \right)} = \frac{1}{\mb} \left| \Phi_{\mb} \left( c_{\mb} - \frac{\ri \mb}{3} \right)\right|^3 \, ,
\ee
where $c_{\mb} = \ri (\mb + \mb^{-1})/2$. 
Moreover, following~\cite{GuM}, the expression in Eq.~\eqref{eq: P2close} can be factorized into a product of $q, \tilde{q}$-series by applying the infinite product representation in Eq.~\eqref{eq: seriesPhib}. 
Namely, we have that 
\be
\Phi_{\mb} \left( c_{\mb} - \frac{\ri \mb}{3} \right) =\frac{(q^{2/3} ; \, q)_{\infty}}{(w^{-1} ; \, \tilde{q})_{\infty}}  \, , \quad \Phi_{\mb} \left( c_{\mb} - \frac{2 \ri \mb}{3} \right) = \frac{(q^{1/3} ; \, q)_{\infty}}{(w ; \, \tilde{q})_{\infty}} \, ,
\ee
where $(x q^{\alpha}; \, q)_{\infty}$ is the quantum dilogarithm defined in Eq.~\eqref{eq: dilog}. Therefore,
\be \label{eq: P2fact}
\mathrm{Tr}(\mrho_{\IP^2}) = \frac{1}{\sqrt{3} \mb} \re^{- \frac{\pi \ri}{36} \mb^2 + \frac{\pi \ri}{12} \mb^{-2} + \frac{\pi \ri}{4}}  \frac{(q^{2/3} ; \, q)_{\infty}^2}{(q^{1/3} ; \, q)_{\infty}} \frac{(w ; \, \tilde{q})_{\infty}}{(w^{-1} ; \, \tilde{q})_{\infty}^2} \, ,
\ee
where we have introduced
\be \label{eq: q-var}
q = \re^{2 \pi \ri \mb^2} = \re^{3 \ri \hbar} \, , \quad \tilde{q} = \re^{- 2 \pi \ri / \mb^{2}} = \re^{- 4 \pi^2 \ri /(3 \hbar)} \, , \quad w = \re^{2 \pi \ri /3} \, .
\ee
Note that the factorization into holomorphic/anti-holomorphic blocks in Eq.~\eqref{eq: P2fact} is not symmetric in $q, \, \tilde{q}$.
We further assume that $\Im(\mb^2) > 0$, and therefore $\Im(\hbar) > 0$, which implies $|q|, |\tilde{q}| < 1$, so that the $q, \tilde{q}$-series converge. 

Let us now recall some of the novel results of~\cite{Rella22} on the exact resurgent structures of the perturbative expansions of $\mathrm{Tr}(\mrho_{\IP^2})$ in both weak and strong $\hbar$-regimes and the analytic number-theoretic duality relating them, which is the starting point for the investigation performed in this paper. In the following, we will summarize part of~\cite[Section 4]{Rella22}. We invite the interested reader to consult the original reference for a complete and detailed description with proofs.

\subsubsection*{The limit $\hbar \rightarrow 0$}
The all-orders perturbative expansion in the limit $\hbar \rightarrow 0$ of the spectral trace of local $\IP^2$ in Eq.~\eqref{eq: P2fact} is obtained by applying the asymptotic expansion formula for the quantum dilogarithm in Eq.~\eqref{eq: logPhiK} to the holomorphic $q$-series component\footnote{The $\tilde{q}$-series giving the anti-holomorphic block in the factorized expression in Eq.~\eqref{eq: P2fact} converge for $\hbar \rightarrow 0$.}. Namely, 
\be \label{eq: expP2}
\mathrm{Tr}(\mrho_{\IP^2}) \sim \frac{\Gamma\left(\frac{1}{3}\right)^3}{6 \pi \hbar} \exp \left( 3 \sum_{n = 1}^{\infty} (-1)^{n-1} \frac{B_{2n} B_{2n+1}(2/3)}{2n (2n+1)!} (3 \hbar)^{2n} \right) \, ,
\ee
where $B_n(x)$ is the $n$-th Bernoulli polynomial, $B_n = B_n(0)$ is the $n$-th Bernoulli number, and $\Gamma(x)$ is the gamma function.
Let us denote by $\phi(\hbar)$ the formal power series appearing in the exponent in Eq.~\eqref{eq: expP2}, that is, 
\be \label{eq: phiP2}
\phi(\hbar) = \sum_{n=1}^{\infty} a_{2n} \hbar^{2n} \in \IQ[\![\hbar]\!] \, , \quad a_{2n} = (-1)^{n-1} \frac{B_{2n} B_{2n+1}(2/3)}{2n (2n+1)!} 3^{2n+1}  \quad n \ge 1 \, .
\ee
Its perturbative coefficients satisfy the factorial growth
\be
|a_{2n}| \le (2n)! \mathcal{A}_0^{-2n}  \quad n \gg 1 \, , \quad \mathcal{A}_0 = \frac{4 \pi^2}{3} \, ,
\ee
and $\phi(\hbar)$ is thus a Gevrey-1 asymptotic series. The Borel transform $\hat{\phi}(\zeta)$ can be explicitly resummed into a well-defined, exact function of $\zeta$ that is manifestly simple resurgent. Its singularities are simple poles\footnote{We apply the conventions of Appendix~\ref{app: resurgence}, which differ slightly from those adopted in~\cite{Rella22} due to a change in the definition of the Borel and Laplace transforms. In particular, with the conventions of~\cite{Rella22}, the singularities of the Borel transforms of the series $\phi(\hbar), \psi(\tau)$ are logarithmic branch points.} located along the imaginary axis at
\be \label{eq: zetan}
\zeta_n = \mathcal{A}_0 \ri n \, , \quad n \in \IZ_{\ne 0} \, ,
\ee 
tracing two Stokes lines at the angles $\pm \pi/2$. This is the simplest occurrence of the peacock pattern described in Section~\ref{sec: resurgent_strings}.

The local expansion of the Borel transform $\hat{\phi}(\zeta)$ at $\zeta = \zeta_n$, $n \in \IZ_{\ne 0}$, is given by
\be \label{eq: explocalP2}
\hat{\phi}(\zeta) =  - \frac{S_n}{2 \pi \ri (\zeta - \zeta_n)} + \text{regular in $\zeta-\zeta_n$} \, ,
\ee
where $S_n$ is the corresponding Stokes constant.
It follows that all the Stokes constants can be accessed analytically.
In particular, the normalized Stokes constant $S_n/S_1$, where $S_1 = 3 \sqrt{3} \ri$, is given by the divisor sum function
\be \label{eq: formulaS1}
\frac{S_n}{S_1} = \sum_{\substack{d | n \\ d \equiv_3 1}} \frac{1}{d} - \sum_{\substack{d | n \\ d \equiv_3 2}} \frac{1}{d} = \sum_{\substack{d | n}} \frac{1}{d} \, \chi_{3,2}(d) \in \IQ_{>0} \, , \quad n \in \IZ_{\ne 0} \, , 
\ee
where $d | n$ denotes the positive integer divisors $d$ of $n$ and $\chi_{3,2}(n)$ is the unique non-principal Dirichlet character modulo $3$, that is, 
\be \label{eq: character32}
\chi_{3,2} (n) = 
\begin{cases}
0 & \quad \text{if} \quad n \equiv_3 0 \\
1 & \quad \text{if} \quad n \equiv_3 1 \\
-1 & \quad \text{if} \quad n \equiv_3 2 
\end{cases} \, , \quad n \in \IZ \, .
\ee
It follows that $S_n/S_1$ is a multiplicative arithmetic function and satisfies $S_{-n} = S_{n}$.
Note that we can introduce the sequence of integers
\be \label{eq: intStokesP2zero}
\alpha_n = \frac{S_n}{S_1} n \, , \quad n \in \IZ_{\ne 0} \, , 
\ee
which satisfies $\alpha_n \in \IZ_{>0}$ for $n \in \IZ_{> 0}$ and $\alpha_{-n} = -\alpha_{n}$. The pattern of singularities in the Borel plane and the associated $\alpha_n$, $n \in \IZ_{\ne 0}$ are shown in Fig.~\ref{fig: peacockP2} on the left.

\begin{figure}[htb!]
\center
\includegraphics[width=0.2\textwidth]{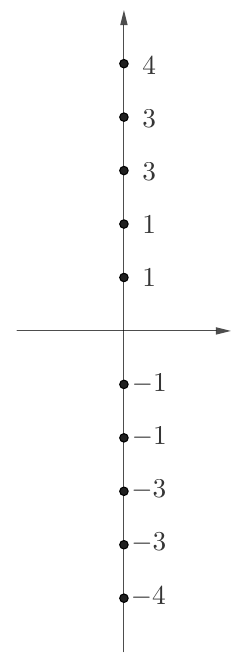} \quad \quad \quad
\includegraphics[width=0.2\textwidth]{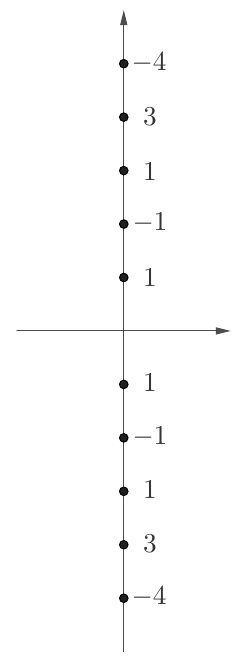}
\caption{On the left, the first few singularities of the Borel transform of the asymptotic series $\phi(\hbar)$, defined in Eq.~\eqref{eq: phiP2}, and the associated integer constants $\alpha_n$, $n \in \IZ_{\ne 0}$, defined in Eq.~\eqref{eq: intStokesP2zero}. On the right, the first few singularities of the Borel transform of the asymptotic series $\psi(\hbar)$, defined in Eq.~\eqref{eq: phiP2infty}, and the associated integer constants $\beta_n$, $n \in \IZ_{\ne 0}$, defined in Eq.~\eqref{eq: intStokesP2infty}.}
\label{fig: peacockP2}
\end{figure}

\subsubsection*{The limit $\hbar \rightarrow \infty$}
The all-orders perturbative expansion in the limit $\hbar \rightarrow \infty$ of the spectral trace of local $\IP^2$ in Eq.~\eqref{eq: P2fact} is obtained by applying the asymptotic expansion formula for the quantum dilogarithm in Eq.~\eqref{eq: logPhiNC} to the anti-holomorphic $\tilde{q}$-series component\footnote{The $q$-series giving the holomorphic block in the factorized expression in Eq.~\eqref{eq: P2fact} converge for $\hbar \rightarrow \infty$.}. Namely, 
\be \label{eq: expP2infty}
\mathrm{Tr}(\mrho_{\IP^2}) \sim \sqrt{\frac{2 \pi}{3^{5/2} \hbar}} \re^{-\frac{3 V}{4 \pi^2} \hbar} \exp \left( \sqrt{3} \sum_{n = 1}^{\infty} (-1)^{n-1} \frac{B_{2n} B_{2n-1}(2/3)}{(2n)! (2n-1)} \left( \frac{4 \pi^2}{\hbar} \right)^{2n-1} \right) \, ,
\ee
where $V=2 \Im(\mathsf{Li}_2(\re^{\pi \ri/3}))$.
Let us introduce the parameter 
\be \label{eq: tau}
\tau = -\frac{1}{\mb^2} = - \frac{2 \pi}{3\hbar}
\ee
and note that we can write $\tilde{q}= \re^{2 \pi \ri \tau}$ and $q = \re^{-2 \pi \ri /\tau}$.
We denote by $\psi(\tau)$ the formal power series appearing in the exponent in Eq.~\eqref{eq: expP2infty}, that is, 
\be \label{eq: phiP2infty}
\psi(\tau) = \sum_{n=1}^{\infty} b_{2n} \tau^{2n-1} \in \IQ[\pi, \sqrt{3}] [\![\tau]\!] \, , \quad b_{2n} = (-1)^{n} \sqrt{3} \frac{B_{2n} B_{2n-1}(2/3)}{(2n)! (2n-1)} (6 \pi)^{2n-1}  \quad n \ge 1 \, .
\ee
Its perturbative coefficients satisfy the factorial growth
\be
| b_{2n} | \le (2n)! \mathcal{A}_\infty^{-2n} \quad n \gg 1 \, , \quad \mathcal{A}_\infty = \frac{2 \pi}{3} = \frac{\mathcal{A}_0}{2 \pi} \, ,
\ee
and $\psi(\tau)$ is thus a Gevrey-1 asymptotic series. Analogously to the case of $\hbar \rightarrow 0$, the Borel transform $\hat{\psi}(\zeta)$ can be explicitly resummed into an exact function of $\zeta$ which is manifestly simple resurgent. Its singularities are simple poles located along the imaginary axis at
\be \label{eq: zetan-infty}
\eta_n = \mathcal{A}_\infty \ri n \, , \quad n \in \IZ_{\ne 0} \, ,
\ee
again tracing the only two Stokes lines at the angles $\pm \pi/2$.

The local expansion of the Borel transform $\hat{\psi}(\zeta)$ at $\zeta = \eta_n$, $n \in \IZ_{\ne 0}$, is given by
\be \label{eq: explocalP2infty}
\hat{\psi}(\zeta) = - \frac{R_n}{2 \pi \ri (\zeta - \eta_n)} + \text{regular in $\zeta-\eta_n$} \, ,
\ee
where $R_n$ is the corresponding Stokes constant. Once more, all the Stokes constants can be derived analytically in closed form. Indeed, the normalized Stokes constant $R_n/R_1$, where $R_1 = 3$, is given by the divisor sum function
\be \label{eq: formulaS1infty}
\frac{R_n}{R_1} = \sum_{\substack{d | n \\ d \equiv_3 1}} \frac{d}{n} - \sum_{\substack{d | n \\ d \equiv_3 2}} \frac{d}{n} = \sum_{\substack{d | n}} \frac{d}{n}\, \chi_{3,2}(d) \in \mathbb{Q}_{\ne 0} \, , \quad n \in \IZ_{\ne 0} \, ,
\ee
which is a multiplicative arithmetic function satisfying the property $R_{-n} = -R_{n}$. Note the strikingly simple arithmetic symmetry between the formulae in Eqs.~\eqref{eq: formulaS1} and~\eqref{eq: formulaS1infty}. 
Moreover, we can introduce a dual sequence of integers
\be \label{eq: intStokesP2infty}
\beta_n = \frac{R_n}{R_1} n \, , \quad n \in \IZ_{\ne 0} \, , 
\ee
which satisfies $\beta_n \in \IZ_{\ne 0}$ for $n \in \IZ_{> 0}$ and $\beta_{-n} = \beta_{n}$.
The pattern of singularities in the Borel plane and the associated $\beta_n$, $n \in \IZ_{\ne 0}$, are shown in Fig.~\ref{fig: peacockP2} on the right. 

\subsubsection*{Exact formulae for the discontinuities}
A notable consequence of the closed formulae in Eqs.~\eqref{eq: formulaS1} and~\eqref{eq: formulaS1infty} is that the Stokes constants $S_n, R_n$, $n \in \IZ_{>0}$, can be naturally organized as coefficients of exact generating functions given by the discontinuities of the corresponding asymptotic series $\phi(\hbar), \psi(\tau)$ across the positive imaginary axis. Explicitly, 
\begin{subequations} \label{eq: discontinuities-exact}
\begin{align}
\mathrm{disc}_{\frac{\pi}{2}} \, \phi(\hbar) &= \sum_{n=1}^{\infty} S_n \, \tilde{q}^n = - 3 \log \frac{(w ; \, \tilde{q})_{\infty}}{(w^{-1} ; \, \tilde{q})_{\infty}} -\ri \pi   \, ,  \label{eq: disc-weak}\\
\mathrm{disc}_{\frac{\pi}{2}} \, \psi(\tau) &= \sum_{n=1}^{\infty} R_n \, q^{n/3} = 3 \log \frac{(q^{2/3} ; \, q)_{\infty}}{(q^{1/3} ; \, q)_{\infty}} \, . \label{eq: disc-strong}
\end{align}
\end{subequations}
Observe that the quantum dilogarithms appearing in the generating function in Eq.~\eqref{eq: disc-weak} associated with the weak coupling asymptotic expansion $\phi(\hbar)$ are the same $\tilde{q}$-series that occur in the anti-holomorphic block of the spectral trace in Eq.~\eqref{eq: P2fact}. Similarly, the $q$-series in the generating function in Eq.~\eqref{eq: disc-strong} associated with the strong coupling asymptotic expansion $\psi(\tau)$ are those appearing in the dual holomorphic block of the spectral trace in Eq.~\eqref{eq: P2fact}\footnote{Disregarding constant prefactors, the $q, \tilde{q}$-series in Eq.~\eqref{eq: discontinuities-exact} differ from the logarithms of the holomorphic/anti-holomorphic blocks in Eq.~\eqref{eq: P2fact} by powers of two acting on the numerator/denominator, respectively.}. 

\begin{rmk}
The $q, \tilde{q}$-series in the factorized expression of the spectral trace of local $\IP^2$ encode the perturbative information in one asymptotic limit in $\hbar$, while being invisible to perturbation theory in the dual limit. Simultaneously, the same $q, \tilde{q}$-series supply the discontinuities in the opposite regimes.
Thus, the holomorphic block resurges from the asymptotics of the anti-holomorphic one and vice versa.
This two-way exchange of perturbative/non-perturbative content between the holomorphic/anti-holomorphic blocks is one of several exact cross-relations connecting the weak and strong coupling resurgent structures and paving the way towards the formulation of a global strong-weak resurgent symmetry, which we will describe in Section~\ref{sec: strong-weak}. 
\end{rmk}

\subsubsection*{A number-theoretic resurgent duality}
The standard large-$n$ asymptotics of the perturbative coefficients $a_{2n}, b_{2n}$, $n \in \IZ_{>0}$, in Eqs.~\eqref{eq: phiP2} and~\eqref{eq: phiP2infty} can be upgraded by systematically including the
contributions from all sub-dominant singularities in the Borel plane to give the exact large-order relations 
\begin{subequations} \label{eq: exactlarge}
\begin{align}
a_{2n} &= \frac{\Gamma(2n)}{\pi \ri \left(\CA_0 \ri \right)^{2n}} \sum_{m=1}^{\infty} \frac{S_m}{m^{2n}} \, , \label{eq: exactlarge-0} \\
b_{2n} &= \frac{\Gamma(2n-1)}{\pi \ri \left(\CA_{\infty} \ri \right)^{2n-1}} \sum_{m=1}^{\infty} \frac{R_m}{m^{2n-1}} \, , \label{eq: exactlarge-infty}
\end{align}
\end{subequations}
that is, up to the simple prefactors above, the perturbative coefficients of the series $\phi(\hbar), \psi(\tau)$ are given by the Dirichlet series encoding the corresponding weak and strong coupling Stokes constants evaluated at even and odd integer points, respectively. 

Amazingly, the Dirichlet series defined by the Stokes constants satisfy an Euler product expansion indexed by the set of prime numbers, are absolutely convergent in the right half-plane $\Re (s) >1$, and can be analytically continued to meromorphic functions on the whole complex $s$-plane. The resulting weak and strong coupling $L$-functions $L_0(s), L_{\infty}(s)$, $s \in \IC$, further admit the factorization
\begin{subequations} \label{eq: convolution2}
\begin{align}
L_0(s) &= \sum_{m=1}^{\infty} \frac{S_m}{m^{s}} =  S_1 L(s+1, \chi_{3,2}) \zeta(s) \, , \label{eq: convolution2-0} \\
L_{\infty}(s) &= \sum_{m=1}^{\infty} \frac{R_m}{m^{s}} = R_1 L(s, \chi_{3,2}) \zeta(s+1) \, , \label{eq: convolution2-infty}
\end{align}
\end{subequations}
where $L(s, \chi_{3,2})$ is the Dirichlet $L$-function of the primitive character $\chi_{3,2}$ and $\zeta(s)$ is the Riemann zeta function. 
Observe that the arithmetic duality\footnote{The symmetric decomposition of the Stokes constants into the Dirichlet convolution of simple arithmetic functions dictates the dual factorizations of the corresponding $L$-functions~\cite[Section 4.4]{Rella22}.} relating the weak and strong coupling Stokes constants $S_n, R_n$, $n \in \IZ_{> 0}$, translates at the level of the $L$-functions encoded in the perturbative coefficients into a symmetric unitary shift in the arguments of the factors in the RHS (right-hand-side) of Eq.~\eqref{eq: convolution2}. 
\begin{rmk}
If we extend the discrete index $n \in \IZ_{>0}$ of the sequences of perturbative coefficients $\{a_{2n}\}, \{b_{2n}\}$ to a continuous variable $s \in \IC$, then the exact large-order relations in Eq.~\eqref{eq: exactlarge} allow us, in principle, to analytically continue the perturbative coefficients to meromorphic functions throughout the complex $s$-plane as
\be
a_s = \frac{\Gamma(s)}{\pi \ri \left(\CA_0 \ri \right)^{s}} L_0(s) \, , \quad b_s = \frac{\Gamma(s-1)}{\pi \ri \left(\CA_{\infty} \ri \right)^{s-1}} L_{\infty}(s-1) \, .
\ee
A similar observation has been made in the recent work of~\cite{Vonk23} on the exact large-order relations for general resurgent trans-series.
\end{rmk}

\section{Resurgent \texorpdfstring{$L$}{L}-functions}\label{sec:1}
In this section, we complete the analytic number-theoretic duality of~\cite{Rella22}, which connects the exact resurgent structures of $\log \mathrm{Tr}(\mrho_{\IP^2})$ in the dual regimes of $\hbar \rightarrow 0$ and $\hbar \rightarrow \infty$, and upgrade it into a full-fledged exact \emph{strong-weak resurgent symmetry}. This unique global construction rests on the interplay of $q$-series and $L$-functions and revolves around the central role played by the weak and strong coupling Stokes constants. 

\subsection{Generating functions of the Stokes constants}\label{sec: generating-functs}
As it will be useful in the rest of this paper, let us now write the generating series of the weak coupling Stokes constants $S_n$, $n \in \IZ_{\ne 0}$, in Eq.~\eqref{eq: formulaS1} in the form 
\begin{equation}\label{eq:f_0}
    f_0(y):=\begin{cases}
    \displaystyle\sum_{n>0} S_n \, \re^{2\pi \ri n y} & \text{ if } \Im(y)>0 \\
    & \\
    -\displaystyle\sum_{n<0} S_n \, \re^{2\pi \ri n y} & \text{ if } \Im(y)<0
    \end{cases} \, ,
\end{equation}
which defines a holomorphic function on $\IC \setminus \IR$ with the periodicity and parity properties
\be\label{eq:f_0-symm}
f_0(y+1)=f_0(y) \, , \quad f_0(-y) = -f_0(y) \, .
\ee 
It follows that $f_0(y)$, $y \in \IC \setminus \IR$, need only be specified in the upper half of the complex $y$-plane, where it is known in closed form by means of the exact discontinuity formula in Eq.~\eqref{eq: disc-weak}. Namely, we have that~\cite[Corollary 4.8]{Rella22}
\be \label{eq:f_0-closed}
f_0(y) = - 3 \log \frac{(w ; \, \re^{2\pi \ri y})_{\infty}}{(w^{-1} ; \, \re^{2\pi \ri y})_{\infty}} -\ri \pi \, , \quad y \in \IH \, ,
\ee
where $w = \re^{2 \pi \ri /3}$ as before and therefore also
\be \label{eq: f0disc}
f_0\left(-\frac{2 \pi}{3 \hbar} \right) = \mathrm{disc}_{\frac{\pi}{2}} \, \phi(\hbar) \, , \quad \hbar \in \IH \, .
\ee
Analogously, we write the generating series of the strong coupling Stokes constants $R_n$, $n \in \IZ_{\ne 0}$, in Eq.~\eqref{eq: formulaS1infty} in the form 
\begin{equation}\label{eq:f_inf}
    f_\infty(y):=\begin{cases}
    \displaystyle\sum_{n>0} R_n \, \re^{2\pi \ri n y} & \text{ if } \Im(y)>0 \\
    & \\
    -\displaystyle\sum_{n<0} R_n \, \re^{2\pi \ri n y} & \text{ if } \Im(y)<0
    \end{cases} \, , 
\end{equation}
which defines a holomorphic function on $\IC \setminus \IR$ with the periodicity and parity properties\footnote{Note the exchange of even/odd parity between the Stokes constants and the corresponding generating functions in both weak and strong coupling limits.}
\be\label{eq:f_inf-symm}
f_\infty(y+1)=f_\infty(y) \, , \quad f_\infty(-y) = f_\infty(y) \, . 
\ee
Again, it follows that $f_\infty(y)$, $y \in \IC \setminus \IR$, need only be specified in the upper half of the complex $y$-plane, where it is known in closed form through the exact discontinuity formula in Eq.~\eqref{eq: disc-strong}. Explicitly, we have that~\cite[Corollary 4.19]{Rella22}
\be \label{eq:f_inf-closed}
f_\infty(y/3) = 3 \log \frac{(\re^{4\pi \ri y/3} ; \, \re^{2\pi \ri y})_{\infty}}{(\re^{2\pi \ri y/3} ; \, \re^{2\pi \ri y})_{\infty}} \, , \quad y \in \IH \, ,
\ee
and therefore also
\be \label{eq: finftydisc}
f_\infty\left(-\frac{1}{3 \tau} \right) = \mathrm{disc}_{\frac{\pi}{2}} \, \psi(\tau) \, , \quad \tau \in \IH \, .
\ee
\begin{rmk}
Let us now consider the discontinuities $\mathrm{disc}_{\frac{\pi}{2}}\, \phi(\hbar)$ and $\mathrm{disc}_{\frac{\pi}{2}}\, \psi(\tau)$ as functions of the variable $\tau \in \IH$ using Eqs.~\eqref{eq: disc-weak},~\eqref{eq: disc-strong}, and~\eqref{eq: tau}. Namely, we take the functions
    \be
        \left(\mathrm{disc}_{\frac{\pi}{2}}\, \phi\right)(\tau) = - 3 \log \frac{(w ; \, \tilde{q})_{\infty}}{(w^{-1} ; \, \tilde{q})_{\infty}} -\ri \pi \, , \quad 
        \left(\mathrm{disc}_{\frac{\pi}{2}}\, \psi\right)(\tau) = 3 \log \frac{(q^{2/3} ; \, q)_{\infty}}{(q^{1/3} ; \, q)_{\infty}} \, ,
    \ee
    where $\tilde{q}= \re^{2 \pi \ri \tau}$ and $q = \re^{-2 \pi \ri /\tau}$ as before. It follows that
    \be
    \left(\mathrm{disc}_{\frac{\pi}{2}}\, \phi\right)(\tau+1) = \left(\mathrm{disc}_{\frac{\pi}{2}}\, \phi\right)(\tau) \, , \quad \left(\mathrm{disc}_{\frac{\pi}{2}}\, \psi\right)\left(\frac{\tau}{3\tau+1}\right) = \left(\mathrm{disc}_{\frac{\pi}{2}}\, \psi\right)(\tau)\, ,
    \ee
    that is, the discontinuities are invariant under the action of the generators $T: \, \tau \mapsto \tau + 1$ and $\gamma_3: \, \tau \mapsto \tau/(3 \tau + 1)$ of the congruence subgroup $\Gamma_1(3)$, respectively. Recall that we have introduced $\Gamma_1(3)$ in Eq.~\eqref{eq: G13-definition} as it dictates the symmetries of the moduli space of complex structures of the mirror of local $\IP^2$. The same $\Gamma_1(3)$-structure crucially occurs in Section~\ref{sec:quantum-resurgent}, where the modular considerations above are upgraded to full quantum modularity statements on the generating functions of the Stokes constants $f_0(y)$ and $f_\infty(y)$, $y\in\IH$. 
\end{rmk}

\subsubsection*{Asymptotic series for the generating functions} 
Let us compute the all-orders perturbative expansions in the limit $y \rightarrow 0$ of the weak and strong coupling generating functions $f_0(y)$, $f_\infty(y)$, $y \in \IH$, in Eqs.~\eqref{eq:f_0-closed} and~\eqref{eq:f_inf-closed}, which we denote by $\tilde{f}_0(y)$, $\tilde{f}_\infty(y)$, respectively.
We apply the known asymptotic expansion formula for the quantum dilogarithm in Eq.~\eqref{eq: logPhiNC} to the closed expression for the generating function of the weak coupling Stokes constants in Eq.~\eqref{eq:f_0-closed} and explicitly evaluate the special functions that appear.
We find that 
\be \label{eq:asymp-f0}
\tilde{f}_0(y) = -\frac{\pi \ri}{2} - \frac{3}{2 \pi \ri y} \left(\mathrm{Li}_2(w) - \mathrm{Li}_2(w^{-1}) \right) - 2 \sqrt{3} \ri \sum_{n=1}^\infty (6\pi \ri y)^{2n-1}\frac{B_{2n} B_{2n-1}(2/3)}{(2n)!(2n-1)}\, ,
\ee
which simplifies into
\be \label{eq: f0-psi}
\tilde{f}_0(y) = -\frac{\pi \ri}{2} - \frac{3 \mathcal{V}}{2 \pi \ri y} - 2 \psi(y) \, , 
\ee
where\footnote{Observe that $\frac{27}{8 \pi^2} \CV \simeq 0.462758$ is the quantum volume of local $\IP^2$, which is given by the value of its K\"ahler parameter at the conifold~\cite{CKYZ}. This is related to the constant $V$ in Eq.~\eqref{eq: expP2infty} by $\mathcal{V}=-2 \pi^2/3+ 2 \ri V/3$.} $\mathcal{V}= 2 \Im\left(\mathrm{Li}_2(w) \right)$ and $\psi(y)$ is the asymptotic series occurring in the strong coupling perturbative expansion of $\mathrm{Tr}(\mrho_{\IP^2})$ in Eq.~\eqref{eq: phiP2infty}.
Similarly, we apply the known asymptotic expansion formula for the quantum dilogarithm in Eq.~\eqref{eq: logPhiK}, with the choice of $\alpha=1/3, 2/3$, to the closed expression for the generating function of the strong coupling Stokes constants in Eq.~\eqref{eq:f_inf-closed} and explicitly evaluate the special functions that appear. We derive in this way that
\be \label{eq:asymp-finf}
\tilde{f}_\infty(y) = - 3 \log \frac{\Gamma(2/3)}{\Gamma(1/3)} -\log(- 6 \pi \ri y) -6 \sum_{n=1}^{\infty} (6 \pi \ri y)^{2n} \frac{B_{2n} B_{2n+1}(2/3)}{2n (2n+1)!} \, ,
\ee
which equals to
\be \label{eq: finf-phi}
\tilde{f}_\infty(y) = - 3 \log \frac{\Gamma(2/3)}{\Gamma(1/3)} -\log(- 6 \pi \ri y) +2 \phi(2 \pi y) \, ,
\ee
where $\phi(y)$ is the asymptotic series occurring in the weak coupling perturbative expansion of $\mathrm{Tr}(\mrho_{\IP^2})$ in Eq.~\eqref{eq: phiP2}. As a consequence of Eqs.~\eqref{eq: f0-psi} and~\eqref{eq: finf-phi}, the exact resurgent structures of the asymptotic series $\tilde{f}_0(y)$ and $\tilde{f}_\infty(y)$ follow straightforwardly from the work of~\cite{Rella22} on the perturbative expansions $\psi(\tau)$ and $\phi(\hbar)$, respectively, which we have summarized in Section~\ref{sec:exact-solution}.

\begin{rmk}
    Putting together Eqs.~\eqref{eq: f0disc}  and~\eqref{eq: finftydisc} with Eqs.~\eqref{eq: f0-psi} and~\eqref{eq: finf-phi}, we obtain the asymptotic expansions of the discontinuities of the original perturbative series $\phi(\hbar)$ and $\psi(\tau)$ in the large limits of the variables $\hbar$ and $\tau$, respectively. Namely, we have that
    \begin{subequations}
    \begin{align}
        \mathrm{disc}_{\frac{\pi}{2}} \, \phi(\hbar) &\sim -\frac{\pi \ri}{2} - \frac{9 \mathcal{V}}{4 \pi^2 \ri} \hbar - 2 \psi\left(-\frac{2 \pi}{3 \hbar}\right) \, , \quad \hbar \rightarrow \infty \, , \label{eq: asym-disc-phi} \\
        \mathrm{disc}_{\frac{\pi}{2}} \, \psi(\tau) &\sim - 3 \log \frac{\Gamma(2/3)}{\Gamma(1/3)} -\log\left(\frac{2 \pi \ri}{\tau}\right) +2 \phi\left(-\frac{2 \pi}{3\tau}\right) \, , \quad \tau \rightarrow \infty \, . \label{eq: asym-disc-psi}
    \end{align}
    \end{subequations}
\end{rmk}

\subsection{Strong-weak resurgent symmetry}\label{sec: strong-weak}
As we described in Section~\ref{sec:exact-solution}, summarizing the work of~\cite{Rella22}, and in Section~\ref{sec: generating-functs}, the resurgent structures of the logarithm of the spectral trace of local $\IP^2$ in both weak and strong coupling limits display a rich analytic number-theoretic fabric centered around the key role played by the Stokes constants and their generating functions. The interaction of $q$-series and $L$-functions governs a network of relations encompassing and intertwining the regimes of $\hbar \rightarrow 0$ and $\hbar \rightarrow \infty$.
Let us give a concise overview.  
\begin{itemize}
    \item The discontinuity $\mathrm{disc}_{\frac{\pi}{2}}\psi(\tau)$, which equals the generating function $f_\infty\big(-\tfrac{1}{3\tau}\big)$ in Eq.~\eqref{eq:f_inf-closed} of the strong coupling Stokes constants $R_n$, $n \in \IZ_{\ne 0}$, in Eq.~\eqref{eq: formulaS1infty}, reproduces the weak coupling perturbative series $\phi(\hbar)$ in Eq.~\eqref{eq: phiP2} when expanded in the limit $\tau \rightarrow \infty$ (or equivalently $\hbar \rightarrow 0$), as we have seen in Eq.~\eqref{eq: asym-disc-psi}. 
    \item Resurgence associates $\phi(\hbar)$ with the collection $\{\zeta_n, \, S_n\}$, $n \in \IZ_{\ne 0}$, of simple poles and Stokes constants, which satisfy $\zeta_n=\CA_0 \ri n$ and $S_{-n}=S_n$.
    \item The exact large-$n$ relations in Eq.~\eqref{eq: exactlarge-0} for the perturbative coefficients $a_{2n}$, $n \in \IZ_{>0}$, of $\phi(\hbar)$ uniquely determine the weak coupling $L$-function $L_0(s)$ in Eq.~\eqref{eq: convolution2-0}. The latter reproduces, in turn, the perturbative coefficients when evaluated at $s = 2n$.
    \item The discontinuity $\mathrm{disc}_{\frac{\pi}{2}}\phi(\hbar)$, which equals the generating function $ f_0 \big(-\tfrac{2 \pi}{3 \hbar} \big)$ in Eq.~\eqref{eq:f_0-closed} of the weak coupling Stokes constants $S_n$, $n \in \IZ_{\ne 0}$, in Eq.~\eqref{eq: formulaS1}, reproduces the strong coupling perturbative series $\psi(\tau)$ in Eq.~\eqref{eq: phiP2infty} when expanded in the limit $\hbar \rightarrow \infty$ (or equivalently $\tau \rightarrow 0$), as we have seen in Eq.~\eqref{eq: asym-disc-phi}.
    \item Resurgence associates $\psi(\tau)$ with the collection $\{\eta_n, \, R_n\}$, $n \in \IZ_{\ne 0}$, of simple poles and Stokes constants, which satisfy $\eta_n=\CA_\infty \ri n$ and $R_{-n}=-R_n$.
    \item The exact large-$n$ relations in Eq.~\eqref{eq: exactlarge-infty} for the perturbative coefficients $b_{2n}$, $n \in \IZ_{>0}$, of $\psi(\tau)$ uniquely determine the strong coupling $L$-function $L_\infty(s)$ in Eq.~\eqref{eq: convolution2-infty}. The latter reproduces, in turn, the perturbative coefficients when evaluated at $s = 2n-1$.
\end{itemize}
Assembling the above relations together, we find that the resurgent structures of the dual perturbative expansions of $\log \mathrm{Tr}(\mrho_{\IP^2})$ embed into a unique global construction as they compose the symmetric diagram below.
\begin{equation} \label{diag: strong-weak1}
\begin{tikzcd}[column sep=2.8em, row sep=2.8em]
& & \arrow[dd,sloped, shift left=.5ex,"\text{Mellin}"] \mathrm{disc}_{\frac{\pi}{2}} \, \psi(\tau) \arrow[rr, "\hbar \rightarrow 0"] & & \phi(\hbar) \arrow[dd,sloped,above,shift left=.5ex, "\text{exact large-$n$}"]\arrow[ddrr,sloped,"\text{resurgence}"] & & 
\\ \\ 
\arrow[sloped]{uurr}{\text{generating}}[swap]{\text{series}} \{\eta_n, \, R_n\} \arrow{rr}{\text{Dirichlet}}[swap]{\text{series}} & & \{\CA_\infty, \, L_\infty\} \arrow[dd,sloped,above,shift left=.5ex, "\text{evaluation}"] \arrow[uu,sloped,shift left=.5ex, "\text{inverse Mellin}"] & &  \{\CA_0, \, L_0\}\arrow[uu,sloped,above,shift left=.5ex, "\text{evaluation}"] \arrow[dd,sloped,shift left=.5ex, "\text{inverse Mellin}"] & & \arrow{ll}{\text{series}}[swap]{\text{Dirichlet}} \{\zeta_n, \, S_n\} \arrow[sloped]{ddll}{\text{generating}}[swap]{\text{series}} 
\\ \\
& & \psi(\tau)\arrow[uu,sloped,above,shift left=.5ex, "\text{exact large-$n$}"] \arrow[uull,sloped,"\text{resurgence}"] & & \arrow[uu,sloped, shift left=.5ex,"\text{Mellin}"] \mathrm{disc}_{\frac{\pi}{2}} \, \phi(\hbar) \arrow[ll,swap, "\tau \rightarrow 0"]
 & &
\end{tikzcd}
\end{equation}

Notably, starting from any fixed vertex, all other vertices of this commutative diagram are spanned by following its directed arrows---that is, the information contained in each vertex reconstructs the whole diagram and is therefore equivalent to the information content of every other vertex.
Moreover, two additional arrows are drawn in the diagram in Eq.~\eqref{diag: strong-weak1} that were not previously stated. They originate from the well-established relation between power series and Dirichlet series sharing the same coefficients, which makes use of the Mellin transform.
\begin{lemma}\label{lemma:mellin}
    The weak and strong coupling $L$-functions $L_0(s)$, $L_\infty(s)$, $s \in \IC$, in Eqs.~\eqref{eq: convolution2-0} and~\eqref{eq: convolution2-infty} are given by the Mellin transform of the generating functions $f_0(y)$, $f_\infty(y)$, $y \in \IH$, of the corresponding Stokes constants, that is,
    \begin{subequations}
    \begin{align}
    L_0(s) &= \frac{(2 \pi)^{s}}{\Gamma(s)} \int_0^{\infty} t^{s-1} f_0(\ri t) \, d t  \, , \label{eq: Mellin-0} \\
    L_\infty(s) &= \frac{(2 \pi)^{s}}{\Gamma(s)} \int_0^{\infty} t^{s-1} f_\infty(\ri t)\,  d t \, . \label{eq: Mellin-inf}
    \end{align}
    \end{subequations}
\end{lemma}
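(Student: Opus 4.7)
The strategy is the classical one relating Dirichlet series to their generating $q$-series via the Mellin transform, of the same flavour as the identity
\[
\zeta(s) = \frac{1}{\Gamma(s)} \int_0^\infty \frac{t^{s-1}}{\re^t - 1}\, d t = \frac{1}{\Gamma(s)} \int_0^\infty t^{s-1} \sum_{n\ge 1} \re^{-n t}\, d t \, .
\]
The plan is to substitute the absolutely convergent $q$-series expansion of $f_0(\ri t)$, $f_\infty(\ri t)$ given by Eqs.~\eqref{eq:f_0} and~\eqref{eq:f_inf} for $y = \ri t$, $t > 0$, into the right-hand sides of Eqs.~\eqref{eq: Mellin-0} and~\eqref{eq: Mellin-inf}, exchange the sum with the integral, and evaluate the resulting gamma integrals
\[
\int_0^\infty t^{s-1} \re^{-2\pi n t}\, d t = \frac{\Gamma(s)}{(2\pi n)^s} \, , \quad n \in \IZ_{>0} \, ,
\]
to recognize the Dirichlet series defining $L_0(s)$ and $L_\infty(s)$ in Eqs.~\eqref{eq: convolution2-0} and~\eqref{eq: convolution2-infty}.

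First, I would fix the domain of convergence. Recall from Eqs.~\eqref{eq: formulaS1} and~\eqref{eq: formulaS1infty} that $|S_n| \le S_1 \sigma_0(n)$ and $|R_n| \le R_1 \sigma_0(n)/n$, where $\sigma_0(n)$ is the number-of-divisors function. Hence both Dirichlet series $\sum_n S_n/n^s$ and $\sum_n R_n/n^s$ converge absolutely in the right half-plane $\Re(s) > 1$, and I would work under this restriction; the extension to all $s \in \IC$ then follows by the analytic continuation of $L_0(s)$, $L_\infty(s)$ established in~\cite{Rella22} (and encoded in Eq.~\eqref{eq: convolution2}).

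Next, I would justify the interchange of sum and integral via Fubini--Tonelli. The key estimate is
\[
\sum_{n \ge 1} \int_0^\infty t^{\Re(s)-1} |S_n| \re^{-2\pi n t}\, d t = \frac{\Gamma(\Re(s))}{(2\pi)^{\Re(s)}} \sum_{n \ge 1} \frac{|S_n|}{n^{\Re(s)}} < \infty
\]
for $\Re(s) > 1$, and analogously for $f_\infty$. This legitimises the swap and yields
\[
\int_0^\infty t^{s-1} f_0(\ri t)\, d t = \sum_{n \ge 1} S_n \cdot \frac{\Gamma(s)}{(2\pi n)^s} = \frac{\Gamma(s)}{(2\pi)^s} L_0(s) \, ,
\]
and the same manipulation for $f_\infty(\ri t) = \sum_{n \ge 1} R_n \re^{-2\pi n t}$, which rearranges to Eqs.~\eqref{eq: Mellin-0} and~\eqref{eq: Mellin-inf}.

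The only mild technical point, and the step I would treat most carefully, is controlling $f_0(\ri t)$ and $f_\infty(\ri t)$ near $t = 0$: from the asymptotic expansions~\eqref{eq: f0-psi} and~\eqref{eq: finf-phi}, one sees that $f_0(\ri t)$ blows up like $1/t$ and $f_\infty(\ri t)$ like $\log(1/t)$ as $t \to 0^+$, but both singularities are tame enough to be absorbed by the factor $t^{s-1}$ once $\Re(s) > 1$, which is exactly the same half-plane in which the Dirichlet series converges absolutely. After establishing both identities for $\Re(s) > 1$, I would invoke uniqueness of analytic continuation to extend them to the full meromorphic identities stated in the lemma.
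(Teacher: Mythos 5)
Your proposal is correct and follows essentially the same route as the paper's proof: substitute the $q$-series expansion of $f_0(\ri t)$, $f_\infty(\ri t)$ into the Mellin integral, exchange sum and integration, and evaluate the resulting gamma integrals via $\int_0^\infty t^{s-1}\re^{-2\pi n t}\,dt = \Gamma(s)/(2\pi n)^s$. The only difference is that you spell out the absolute-convergence estimate, the half-plane $\Re(s)>1$, and the behaviour near $t=0$, which the paper compresses into the phrase ``due to absolute convergence''.
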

\begin{proof}
    Let us start with the weak coupling regime of $\hbar \rightarrow 0$ and explicitly compute the Mellin transform of the generating function of the Stokes constants in Eq.~\eqref{eq:f_0}. We find that
    \be \label{eq: MT-0-1}
    \int_0^{\infty} t^{s-1} f_0(\ri t) \, d t = \int_0^{\infty} t^{s-1} \left( \sum_{k=1}^{\infty} S_k \re^{-k 2 \pi t}\right) \, d t = \sum_{k=1}^{\infty} S_k \int_0^{\infty} t^{s-1} \re^{-k 2 \pi t} \, d t  \, ,
    \ee
    where we have permuted sum and integration due to absolute convergence. The simplified integral in the RHS is then easily computed via the well-known formula
    \be \label{eq: MT-0-2}
    \int_0^{\infty} x^{s-1} \re^{-k x} \, d x = \frac{\Gamma(s)}{k^{s}} \, , \quad k \in \IZ_{>0} \, , \quad s \in \IC \, ,
    \ee
    which gives the desired expression in Eq.~\eqref{eq: Mellin-0}.
    The same procedure above is straightforwardly applied in the strong coupling regime of $\hbar \rightarrow \infty$ to explicitly compute the Mellin transform of the generating function of the Stokes constants in Eq.~\eqref{eq:f_inf}, yielding the expression in Eq.~\eqref{eq: Mellin-inf}.  
\end{proof}
By the Mellin inversion theorem, we recover the generating functions of the Stokes constants via the inverse Mellin transform of the corresponding $L$-functions.
Composing Lemma~\ref{lemma:mellin} with the exact large-order relations, we find a direct way of obtaining the perturbative expansion of the generating function in one asymptotic limit from the generating function in the dual limit, which amounts to performing an operation that is the formal inverse of taking the discontinuity of the perturbative series.  
\begin{prop}\label{prop:inv_Mellin}
    The perturbative coefficients $a_{2n}$, $b_{2n}$, $n \in \IZ_{>0}$, in Eqs.~\eqref{eq: phiP2} and~\eqref{eq: phiP2infty} are given by the Mellin transform of the discontinuities $\mathrm{disc}_{\frac{\pi}{2}} \, \phi(\hbar)$, $\mathrm{disc}_{\frac{\pi}{2}} \, \psi(\tau)$, respectively, that is,
     \begin{subequations}
    \begin{align}
    \label{eq: inv_disc_0} a_{2n} &= \frac{(-1)^n}{\pi \rm{i}} \int_0^{\infty} \hbar^{-2n -1} \mathrm{disc}_{\frac{\pi}{2}} \phi(- \ri \hbar) \, d \hbar \, , \\
    \label{eq: inv_disc_inf} b_{2n} &= \frac{(-1)^n}{\pi} \int_0^{\infty} \tau^{-2n} \mathrm{disc}_{\frac{\pi}{2}} \psi(- \ri \tau) \, d \tau\,.
    \end{align}
    \end{subequations}
\end{prop}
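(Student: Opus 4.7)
The plan is to derive both identities as direct consequences of the Mellin transform formulae in Lemma~\ref{lemma:mellin} combined with the exact large-order relations~\eqref{eq: exactlarge}, followed by an elementary change of variables that converts the resulting Mellin integrals of $f_0, f_\infty$ into integrals of the discontinuities of $\phi, \psi$ via the identifications in Eqs.~\eqref{eq: f0disc} and~\eqref{eq: finftydisc}.

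For the first identity, I would specialise Eq.~\eqref{eq: Mellin-0} to $s = 2n$, substitute into Eq.~\eqref{eq: exactlarge-0}, and use $\CA_0 = 4\pi^2/3$ together with $\ri^{2n} = (-1)^n$ to produce
\begin{equation*}
a_{2n} = \frac{(-1)^n}{\pi\ri}\left(\frac{3}{2\pi}\right)^{2n} \int_0^\infty t^{2n-1} f_0(\ri t)\, dt \,.
\end{equation*}
The substitution $t = 2\pi/(3\hbar)$ gives $t^{2n-1}\, dt = -(2\pi/3)^{2n}\, \hbar^{-2n-1}\, d\hbar$ and reverses the orientation of the interval, so the $\hbar$-independent prefactors collapse to exactly $(-1)^n/(\pi\ri)$ and the integrand becomes $\hbar^{-2n-1}\, f_0(2\pi\ri/(3\hbar))$. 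It then remains to identify $f_0(2\pi\ri/(3\hbar))$ with $\mathrm{disc}_{\pi/2}\phi(-\ri\hbar)$: this follows from Eq.~\eqref{eq: f0disc} applied at $\hbar' = -\ri\hbar$, extending the identity to $\hbar' \in \IC \setminus \IR$ via the global definition of $f_0$ in Eq.~\eqref{eq:f_0} and invoking the parity $f_0(-y) = -f_0(y)$ from Eq.~\eqref{eq:f_0-symm} together with the evenness of $\phi$ as a formal power series in $\hbar$ (and hence of its discontinuity).

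The derivation of the formula for $b_{2n}$ proceeds in an entirely parallel fashion, starting from Eq.~\eqref{eq: Mellin-inf} at $s = 2n-1$ and Eq.~\eqref{eq: exactlarge-infty}, simplifying with $\CA_\infty = 2\pi/3$, and performing the change of variables $t = 1/(3\tau)$; here the even parity $f_\infty(-y) = f_\infty(y)$ from Eq.~\eqref{eq:f_inf-symm} makes the identification of the final integrand with $\mathrm{disc}_{\pi/2}\psi(-\ri\tau)$ via Eq.~\eqref{eq: finftydisc} essentially immediate. The main technical delicacy I anticipate concerns the careful justification of the analytic continuation of $\mathrm{disc}_{\pi/2}\phi$ and $\mathrm{disc}_{\pi/2}\psi$ to arguments on the negative imaginary axis, which lies outside the natural domain of convergence of the $q$-series in Eq.~\eqref{eq: discontinuities-exact}, together with the attendant tracking of parity-induced signs. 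The interchange of summation and integration inside the Mellin integrals is otherwise standard by absolute convergence on $\Re(s) > 1$, which covers $s = 2n \ge 2$; the borderline case $s = 2n-1 = 1$ entering $b_2$ is handled by a brief separate argument based on absolute convergence of the $\hbar$-integral after summation.
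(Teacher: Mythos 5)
Your proposal is correct and follows essentially the same route as the paper's proof: Lemma~\ref{lemma:mellin} specialised to $s=2n$ (resp.\ $s=2n-1$), the exact large-order relations in Eqs.~\eqref{eq: exactlarge-0} and~\eqref{eq: exactlarge-infty}, and the change of variables $2\pi t = \CA_0/\hbar$ (resp.\ $2\pi t = \CA_\infty/\tau$), with the integrand identified through Eqs.~\eqref{eq: f0disc} and~\eqref{eq: finftydisc}. The only differences are that you run the computation from the coefficients to the integral rather than the reverse, and that you are more explicit than the paper about the parity and analytic-continuation bookkeeping needed to make sense of the discontinuities at the arguments $-\ri\hbar$ and $-\ri\tau$.
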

\begin{proof}
    Using Eq.~\eqref{eq: f0disc} and applying the change of variable $2 \pi t=  \CA_0/\hbar$, the integral in the RHS of Eq.~\eqref{eq: inv_disc_0} becomes
    \be 
    \int_0^{\infty} \hbar^{-2n -1} f_0\left(\frac{\CA_0 \ri}{2 \pi \hbar} \right) \, d \hbar = \frac{(2 \pi)^{2n}}{\CA_0^{2n}} \int_0^{\infty} t^{2n-1} f_0(\ri t) \, d t \, .
    \ee
    Using Eq.~\eqref{eq: Mellin-0} for $s=2n$ and the exact large-$n$ relations for the weak coupling perturbative coefficients in Eq.~\eqref{eq: exactlarge-0}, we conclude. Similarly, using Eq.~\eqref{eq: finftydisc} and applying the change of variable $2 \pi t = \CA_{\infty} /\tau$, the integral in the RHS of Eq.~\eqref{eq: inv_disc_inf} becomes
    \be 
    \int_0^{\infty} \tau^{-2n} f_\infty\left(\frac{\CA_\infty \ri}{2 \pi \tau} \right) \, d \tau = \frac{(2 \pi)^{2n-1}}{\CA_\infty^{2n-1}} \int_0^{\infty} t^{2n-2} f_\infty(\ri t) \, d t \, .
    \ee
    Using Eq.~\eqref{eq: Mellin-inf} for $s=2n-1$ and the exact large-$n$ relations for the strong coupling perturbative coefficients in Eq.~\eqref{eq: exactlarge-infty} yields the desired expression.
\end{proof}

Consider the commutative diagram in Eq.~\eqref{diag: strong-weak1} once more. The internal vertices and arrows composing the left-side sub-diagram that connects the $q$-series $\mathrm{disc}_{\frac{\pi}{2}}\psi(\tau)$ with the strong coupling perturbative series $\psi(\tau)$ reduce to one vertical two-headed arrow according to the formulae in Eqs.~\eqref{eq: disc-strong} and~\eqref{eq: inv_disc_inf}. Analogously, the set of internal vertices and arrows composing the right-side sub-diagram that connects the $\tilde{q}$-series $\mathrm{disc}_{\frac{\pi}{2}}\phi(\hbar)$ with the weak coupling perturbative series $\phi(\hbar)$ is equivalent to one vertical two-headed arrow according to the dual formulae in Eqs.~\eqref{eq: disc-weak} and~\eqref{eq: inv_disc_0}.
Thus, the two-way exchange of perturbative/non-perturbative information between the dual regimes in $\hbar$ takes the form of a mathematically precise mechanism, which we refer to as \emph{strong-weak resurgent symmetry}.
We represent it schematically in the box diagram below, where we stress the contribution of the Stokes constants to all quantities of interest. 
\begin{equation} \label{diag: strong-weak2}
\begin{tikzcd}[column sep = 2.8em, row sep=2.8em]
\arrow[ddd,red] \mathrm{disc}_{\frac{\pi}{2}}\psi(\tau)= \sum\limits_{n=1}^{\infty}R_n q^{n/3} \arrow[r, "\hbar \rightarrow 0"] & \phi(\hbar)=\sum\limits_{n=1}^{\infty} \left(\frac{\Gamma(2n)}{\pi \ri \left(\CA_0 \ri \right)^{2n}} \sum\limits_{m=1}^{\infty} \frac{S_m}{m^{2n}} \right) \hbar^{2n} \arrow[red,sloped]{ddd}{\text{strong-weak}}[swap]{\text{symmetry}}
\\ \\ \\
 \arrow[red,sloped]{uuu}{\text{strong-weak}}[swap]{\text{symmetry}} \psi(\tau)=\sum\limits_{n=1}^{\infty} \left(\frac{\Gamma(2n-1)}{\pi \ri \left(\CA_\infty \ri \right)^{2n-1}} \sum\limits_{m=1}^{\infty} \frac{R_m}{m^{2n-1}} \right) \tau^{2n-1} & \arrow[uuu,red]  \mathrm{disc}_{\frac{\pi}{2}}\phi(\hbar)=\sum\limits_{n=1}^{\infty}S_n \tilde{q}^{n} \arrow[l,swap, "\tau \rightarrow 0"]
\end{tikzcd}
\end{equation}
\begin{rmk}
The strong-weak resurgent symmetry of the spectral trace of local $\IP^2$, which we have presented here, fundamentally arises from the interplay of the Stokes constants, their generating functions in the form of $q$, $\tilde{q}$-series, and their $L$-functions. This is the starting point in a wider mathematical program that aims at linking the resurgent properties of $q$-series with the analytic number-theoretic properties of $L$-functions and makes contact with the notion of quantum modularity. We will present this new paradigm of resurgence in detail and full generality in the companion paper~\cite{FR1maths}.
\end{rmk}

As pointed out in~\cite{GuM, MR}, the TS/ST correspondence for a toric CY threefold $X$ shares many formal similarities with complex Chern--Simons theory on the complement of a hyperbolic knot $\CK$ in the three-sphere. In particular, the spectral determinant $\Xi_X(\kappa, \hbar)$, which is an entire function of the complex deformation parameter $\kappa$ of the mirror $\hat{X}$, corresponds to the state integral, or Andersen--Kashaev invariant, of the knot $Z_\CK(u, \mb^2)$~\cite{DGLZ, AK}, which is an entire function of the holonomy $u$ around the knot. Here, $\mb^2$ is a complex coupling parameter. Thus, the spectral traces $Z_X(N, \hbar)$ are dual to the coefficients of $Z_\CK(u, \mb^2)$ in a Taylor expansion around $u = 0$. Importantly, the state integral can be expressed as a sum of products of holomorphic and anti-holomorphic blocks given by $q$, $\tilde{q}$-series~\cite{Beem:2012mb,Dimofte:2014zga}, where $q$ and $\tilde{q}$ are related to $\mb^2$ by the same formulae in Eq.~\eqref{eq: q-var}, and the blocks are exchanged under the transformation $\mb^2 \mapsto -\mb^{-2}$, which swaps $q$ and $\tilde{q}$. 
In the TS/ST correspondence, a similar factorization property is observed in closed expressions for the fermionic spectral traces $Z_X(N, \hbar)$, although the holomorphic and anti-holomorphic blocks are generally different functions~\cite{GuM}. Yet, the resurgent study of the spectral trace of local $\IP^2$ shows how such different functions are nonetheless deeply connected. Indeed, as we have already remarked in Section~\ref{sec:exact-solution}, the holomorphic block resurges from the perturbative expansion of the anti-holomorphic block, and vice versa. This statement assumes a concrete meaning in light of the proven, exact results on the generating functions of the weak and strong coupling Stokes constants assembled into the strong-weak resurgent symmetry. To sum up, we display a minimal scheme below.
\begin{equation} \label{diag: strong-weak3}
\begin{tikzcd}[column sep = 2.8em, row sep=2.8em]
\mathrm{disc}_{\frac{\pi}{2}}\psi(\tau)= \sum_{n=1}^{\infty}R_n q^{n/3} \arrow[rrrr,red, shift left=.5ex]  & & & & \mathrm{disc}_{\frac{\pi}{2}}\phi(\hbar)=\sum_{n=1}^{\infty}S_n \tilde{q}^{n} \arrow[llll, red, shift left=.5ex]
\end{tikzcd}
\end{equation}
Here, the red arrows represent the diagonals joining the top-left and bottom-right vertices of the diagram in Eq.~\eqref{diag: strong-weak2}.

\subsubsection*{On the physics of the strong-weak resurgent symmetry}\label{sec: physics_arg}
The strong-weak coupling duality in Eq.~\eqref{eq: duality}, which is at the heart of the TS/ST correspondence, allows us to describe the strongly coupled conventional topological string compactified on local $\IP^2$ in terms of the semiclassical regime of the quantum-mechanical spectral problem defined by the quantization of its mirror curve. Analogously, the semiclassical limit in the topological string coupling constant determines the strong dynamics of the spectral theory. Namely, the strong regime of each theory is approachable through the weak regime of the other. We point out that our strong-weak symmetry relating the exact resurgent structures of the spectral trace of local $\IP^2$ in the limits of $\hbar \rightarrow 0$ and $\hbar \rightarrow \infty$ is a mathematically precise realization of this intuition on the exchange of perturbative and non-perturbative contributions. 

The discussion on the physical mechanism underlying the strong-weak resurgent symmetry can be pushed further. Indeed, recall the definition of the total grand potential $J(\mu, \hbar)$ in Eq.~\eqref{eq: J_total}. As we have briefly described in Section~\ref{sec:background}, the worldsheet generating functional $J^{\rm WS}(\mu, \hbar)$, which determines the perturbative expansion of $J(\mu, \hbar)$ in the weakly coupled limit $g_s \rightarrow 0$, encodes the non-perturbative contributions in $\hbar$ from the conventional topological string partition function. At the same time, the WKB generating functional $J^{\rm WKB}(\mu, \hbar)$, which determines the perturbative expansion of $J(\mu, \hbar)$ in the semiclassical limit $\hbar \rightarrow 0$ instead, contains the non-perturbative, exponentially small effects in $g_s$ from the NS limit of the refined topological string partition function. 
By means of the TS/ST statement in Eq.~\eqref{eq: contour} and the strong-weak coupling duality in Eq.~\eqref{eq: duality}, we can translate the above argument into the following. The non-perturbative $\hbar$-corrections to the semiclassical perturbative expansion of the fermionic spectral trace $Z(N, \hbar)$, $N \in \IZ_{>0}$, in Eq.~\eqref{eq: pert_exp_0}, determine its perturbative expansion in the strong coupling regime $\hbar \rightarrow \infty$. Conversely, the non-perturbative $\hbar^{-1}$-corrections to the strongly coupled perturbative expansion of the fermionic spectral trace $Z(N, \hbar)$, $N \in \IZ_{>0}$, in Eq.~\eqref{eq: pert_exp_inf}, determine its semiclassical perturbative $\hbar$-expansion. 

Take now $N=1$ and consider the first fermionic spectral trace $Z(1, \hbar)$, as we have done in this paper. We know that the non-analytic, exponential-type contributions at strong and weak coupling are governed by two dual sequences of Stokes constants $S_n$, $R_n$, $n \in \IZ_{\ne 0}$. Equivalently, we can repackage the same information in the corresponding generating functions $f_0(y)$, $f_\infty(y)$, $y \in \IC \setminus \IR$, or the corresponding $L$-functions $L_0(s)$, $L_{\infty}(s)$, $s \in \IC$. 
Thus, the argument above implies that the Stokes constants/generating functions/$L$-functions in one regime must dictate the perturbative coefficients in the other, which we have proven true. Yet, substantially more information is discovered by analytically determining the complete resurgent structures of the spectral trace of local $\IP^2$ at strong and weak coupling, as we have summarized in the commutative diagram in Eq.~\eqref{diag: strong-weak1} and will further detail in the rest of this work.
Finally, we note that our results recall the traditional notion of S-duality in string theory~\cite{Sduality}\footnote{In the work of~\cite{Sduality}, it was shown that the S-duality of type IIB superstrings in ten dimensions implies the existence of an S-duality relating the A-model and B-model topological string theories on the same CY background. In particular, the D-instantons of one model correspond to the perturbative amplitudes of the other.}.

\begin{rmk}
 As the ideas above apply in full generality, our conceptual argument underlying the strong-weak resurgent symmetry of the spectral trace of local $\IP^2$ supports the existence of an appropriate generalization of our results to other toric CY threefolds and higher-order fermionic spectral traces. The detailed exploration of the physical principles of the strong-weak resurgent symmetry will be the subject of future work. 
\end{rmk}

\subsection{Functional equation from a unified perspective}
Let us go back to the $L$-functions $L_0(s)$, $L_\infty(s)$, $s \in \IC$, which are defined by the meromorphic continuation to the complex $s$-plane of the Dirichlet series with coefficients given by the weak and strong coupling Stokes constants $S_n$, $R_n$, $n \in \IZ_{>0}$, respectively, as we have briefly recalled in Section~\ref{sec:exact-solution}. These weak and strong coupling $L$-functions satisfy the factorizations in Eqs.~\eqref{eq: convolution2-0} and~\eqref{eq: convolution2-infty} in terms of the Dirichlet $L$-function $L(s, \chi_{3,2})$ and the Riemann zeta function $\zeta(s)$, whose meromorphic completions
\be \label{eq: completed}
\Lambda_{3,2}(s) = \frac{3^{\frac{s}{2}}}{\pi^{\frac{s+1}{2}}} \Gamma\left( \frac{s+1}{2} \right) L(s, \chi_{3,2}) \, , \quad \Lambda_\zeta(s) = \frac{1}{\pi^{\frac{s}{2}}} \Gamma\left( \frac{s}{2} \right) \zeta(s) \, , \quad s \in \IC \, ,
\ee
satisfy the well-known functional equations
\be \label{eq: funct-eqs}
\Lambda_{3,2}(s) = \Lambda_{3,2}(1-s) \, , \quad \Lambda_\zeta(s) = \Lambda_\zeta(1-s) \, , 
\ee
which are centered at $s=1/2$ and relate $s$ with $1-s$.
Following the factorization formulae in Eqs.~\eqref{eq: convolution2-0} and~\eqref{eq: convolution2-infty} and using the completed $L$-functions in Eq.~\eqref{eq: completed}, we define
\begin{subequations}
\begin{align}
    \Lambda_0(s) &= \Lambda_{3,2}(s+1) \Lambda_\zeta(s) = \frac{3^{\frac{s+1}{2}}}{S_1 \pi^{s+1}} \Gamma\left( \frac{s}{2} \right) \Gamma\left( \frac{s}{2}+1 \right) L_0(s) \, , \label{eq: defLambda0} \\
    \Lambda_\infty(s) &= \Lambda_{3,2}(s) \Lambda_\zeta(s+1) = \frac{3^{\frac{s}{2}}}{R_1 \pi^{s+1}} \Gamma\left( \frac{s+1}{2} \right)^2 L_\infty(s) \, , \label{eq: defLambdainfty} 
\end{align}
\end{subequations}
which are meromorphic functions for $s \in \IC$. Observe, however, that they do not individually satisfy a functional equation of the form in Eq.~\eqref{eq: funct-eqs}. Indeed, these completions of the weak and strong coupling $L$-functions are analytically continued to the whole complex $s$-plane through each other. The following statement is a consequence of the remarkable symmetry that relates the factorizations of $L_0(s)$ and $L_\infty(s)$ via a unitary cross-shift in the arguments of the factors.
\begin{theorem} \label{th: duality}
The weak and strong coupling completed $L$-functions $\Lambda_0(s)$ and $\Lambda_{\infty}(s)$, $s \in \IC$, in Eqs.~\eqref{eq: defLambda0} and~\eqref{eq: defLambdainfty}, respectively, satisfy the functional equation
\begin{equation} \label{eq: dualityLambda}
     \Lambda_0(s) = \Lambda_{\infty}(-s) \, .
\end{equation}
\end{theorem}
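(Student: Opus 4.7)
The plan is to reduce the desired functional equation to the classical functional equations of $\Lambda_{3,2}(s)$ and $\Lambda_\zeta(s)$ recalled in Eq.~\eqref{eq: funct-eqs}, using the factorizations in Eqs.~\eqref{eq: defLambda0} and~\eqref{eq: defLambdainfty} as the key input. Since both $\Lambda_0$ and $\Lambda_\infty$ have been written as products of a completed Dirichlet $L$-function of the character $\chi_{3,2}$ and a completed Riemann zeta function, with a unitary cross-shift of the arguments dictated by the factorizations of $L_0$ and $L_\infty$ in Eq.~\eqref{eq: convolution2}, the whole identity reduces to bookkeeping in the variable $s$.

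Concretely, I would start from the right-hand side and simply compute
\begin{equation*}
\Lambda_\infty(-s) = \Lambda_{3,2}(-s) \, \Lambda_\zeta(1-s).
\end{equation*}
Then I would apply the functional equation $\Lambda_{3,2}(w)=\Lambda_{3,2}(1-w)$ with $w=-s$ to get $\Lambda_{3,2}(-s)=\Lambda_{3,2}(s+1)$, and the functional equation $\Lambda_\zeta(w)=\Lambda_\zeta(1-w)$ with $w=1-s$ to get $\Lambda_\zeta(1-s)=\Lambda_\zeta(s)$. Substituting back yields $\Lambda_\infty(-s)=\Lambda_{3,2}(s+1)\Lambda_\zeta(s)=\Lambda_0(s)$, which is exactly Eq.~\eqref{eq: dualityLambda}. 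The identity of meromorphic functions on all of $\IC$ then follows by analytic continuation from any strip in which both sides are given by their absolutely convergent Dirichlet series, or directly from the fact that both sides of the identity are explicit meromorphic expressions.

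There is essentially no obstacle here once the factorizations of $L_0(s)$ and $L_\infty(s)$ are in hand: the identity is a two-line manipulation. The conceptual content lies upstream, in establishing the Dirichlet convolution decompositions of $L_0(s)$ and $L_\infty(s)$ with the symmetric unitary shift of arguments, which was already proven in~\cite{Rella22} and summarized in Eq.~\eqref{eq: convolution2}; given that, the central mechanism is that the completion factors $\Gamma((s+1)/2)$ and $\Gamma(s/2)$ appearing in $\Lambda_{3,2}$ and $\Lambda_\zeta$ are perfectly matched so that the product of their $\Gamma$-factors in $\Lambda_0(s)$ and $\Lambda_\infty(-s)$ agree, as one can also verify directly by using the explicit expressions in Eqs.~\eqref{eq: defLambda0} and~\eqref{eq: defLambdainfty} and the reflection identity $\Gamma(s/2+1)=(s/2)\Gamma(s/2)$ if a direct check at the level of the uncompleted $L$-functions is desired.
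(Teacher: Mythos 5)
Your proof is correct and follows essentially the same route as the paper: both reduce the identity to the classical functional equations of $\Lambda_{3,2}$ and $\Lambda_\zeta$ applied to the factorizations in Eqs.~\eqref{eq: defLambda0} and~\eqref{eq: defLambdainfty}, differing only in that you evaluate $\Lambda_\infty(-s)$ directly while the paper computes $\Lambda_0(1-s)$ and then substitutes $s\to 1-s$.
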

\begin{proof}
Using the definitions in Eqs.~\eqref{eq: defLambda0} and~\eqref{eq: defLambdainfty} and the functional equations in Eq.~\eqref{eq: funct-eqs}, the statement follows from the explicit computation
\be
\Lambda_0(1-s)=\Lambda_{3,2}(2-s) \Lambda_\zeta(1-s) = \Lambda_{3,2}(s-1) \Lambda_\zeta(s) = \Lambda_\infty(s-1) \, ,
\ee
after the change of variable $s \rightarrow 1-s$.
\end{proof}
As the weak and strong coupling $L$-functions are naturally paired, their combined functional equation in Eq.~\eqref{eq: dualityLambda} provides an additional two-headed arrow joining them directly in the global commutative diagram in Eq.~\eqref{diag: strong-weak1}, which we explicitly reproduce below.\footnote{For completeness, we remind that the fundamental constants $\CA_0$, $\CA_\infty$ are related by $\CA_0 = 2 \pi \CA_\infty$.}
\begin{equation} \label{diag: strong-weak-L}
\begin{tikzcd}[column sep=2.8em, row sep=2.8em]
& & \arrow[dd,sloped, shift left=.5ex,"\text{Mellin}"] \mathrm{disc}_{\frac{\pi}{2}}\psi(\tau) \arrow[rr, "\hbar \rightarrow 0"] & & \phi(\hbar) \arrow[dd,sloped,above,shift left=.5ex, "\text{exact large-$n$}"]\arrow[ddrr,sloped,"\text{resurgence}"] & & 
\\ \\ 
\arrow[sloped]{uurr}{\text{generating}}[swap]{\text{series}} \{\eta_n, \, R_n\} \arrow{rr}{\text{Dirichlet}}[swap]{\text{series}} & & \{\CA_\infty, \, L_\infty\} \arrow[dd,sloped,above,shift left=.5ex, "\text{evaluation}"] \arrow[uu,sloped,shift left=.5ex, "\text{inverse Mellin}"] \arrow[blue]{rr}{\text{functional}}[swap]{\text{equation}} & &  \{\CA_0, \, L_0\}\arrow[uu,sloped,above,shift left=.5ex, "\text{evaluation}"] \arrow[ll,blue] \arrow[dd,sloped,shift left=.5ex, "\text{inverse Mellin}"] & & \arrow{ll}{\text{series}}[swap]{\text{Dirichlet}} \{\zeta_n, \, S_n\} \arrow[sloped]{ddll}{\text{generating}}[swap]{\text{series}} 
\\ \\
& & \psi(\tau)\arrow[uu,sloped,above,shift left=.5ex, "\text{exact large-$n$}"] \arrow[uull,sloped,"\text{resurgence}"] & & \arrow[uu,sloped, shift left=.5ex,"\text{Mellin}"] \mathrm{disc}_{\frac{\pi}{2}}\phi(\hbar) \arrow[ll,swap, "\tau \rightarrow 0"]
 & &
\end{tikzcd}
\end{equation}

\subsection{Arithmetic twist of the Stokes constants}
Recall that the weak and strong coupling Stokes constants $S_n$, $R_n$, $n \in \IZ_{\ne 0}$, are explicitly given by the divisor sum functions in Eqs.~\eqref{eq: formulaS1} and~\eqref{eq: formulaS1infty}, respectively. As we have already observed, an enticingly simple arithmetic symmetry relates them. Namely, the $n$-th Stokes constants in the two asymptotic limits are obtained from one another by exchanging the positive integer divisors $d$ and $n/d$ that multiply the Dirichlet character inside the sums. We can say something more. 
Let us start by introducing the notation
\be \label{eq: n-primes}
n = \text{sign}(n) \prod_{p \in \mathbb{P}} p^{n_p} \in \IZ_{\ne 0} \, , \quad n_p \in \IN \, ,
\ee
where $\IP$ is the set of prime numbers.
We then define the arithmetic function\footnote{The function in Eq.~\eqref{eq: theta} can be interpreted as the Dirichlet convolution $\mathfrak{f}(n) = \sum_{d | n} f_3(d) \, \chi_{3,2}(n/d)$, where $f_3(n) = 3^{n_3}$, $n \in \IZ_{\ne 0}$.}
\be \label{eq: theta}
\mathfrak{f}(n) = 3^{n_3} \, \chi_{3,2}(n/3^{n_3}) \, , \quad n \in \IZ_{\ne 0} \, ,
\ee
where $\chi_{3,2}(n)$ is the non-principal Dirichlet character modulo $3$ in Eq.~\eqref{eq: character32}.
In particular, $\mathfrak{f}(n) = \chi_{3,2}(n)$ when $3 \nmid n$. Note that $\mathfrak{f}(n)$ is completely multiplicative and satisfies $\mathfrak{f}(-n)=-\mathfrak{f}(n)$.
\begin{prop} \label{prop:twist}
The normalized weak and strong coupling Stokes constants $S_n/S_1$, $R_n/R_1$, $n \in \IZ_{\ne 0}$, in Eqs.~\eqref{eq: formulaS1} and~\eqref{eq: formulaS1infty} are related by the arithmetic twist
\be \label{eq: stokes-twist}
\frac{S_n}{S_1}= \mathfrak{f}(n) \frac{R_n}{R_1} \, ,
\ee
where $\mathfrak{f}(n)$ is the arithmetic function in Eq.~\eqref{eq: theta}.
\end{prop}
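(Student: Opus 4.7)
The idea is to rephrase the claim as an identity between two multiplicative arithmetic functions, then verify it on prime powers. Substituting $e = n/d$ in the divisor sum~\eqref{eq: formulaS1} gives
\be
n \, \frac{S_n}{S_1} = \sum_{e \mid n} e \, \chi_{3,2}(n/e) =: A(n) \, ,
\ee
while~\eqref{eq: formulaS1infty} reads
\be
n \, \frac{R_n}{R_1} = \sum_{d \mid n} d \, \chi_{3,2}(d) =: B(n) \, .
\ee
Thus the desired identity~\eqref{eq: stokes-twist} is equivalent to $A(n) = \mathfrak{f}(n) \, B(n)$ for every $n \in \IZ_{\ne 0}$.

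The function $A$ is the Dirichlet convolution $\mathrm{id} * \chi_{3,2}$ of two multiplicative arithmetic functions, hence is itself multiplicative. The function $B$ is the summatory function $\sum_{d \mid n} g(d)$ of the multiplicative function $g(d) = d \, \chi_{3,2}(d)$, and is therefore multiplicative as well. Finally, $\mathfrak{f}$ is completely multiplicative by inspection of~\eqref{eq: theta}. The parity statements $A(-n)=-A(n)$, $B(-n)=-B(n)$, $\mathfrak{f}(-n)=-\mathfrak{f}(n)$ reduce the identity to $n \in \IZ_{>0}$, after which multiplicativity further reduces the verification to prime powers $n = p^k$, $k \ge 0$.

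For $p = 3$, the character $\chi_{3,2}$ vanishes on every positive power of $3$, so in $A(3^k) = \sum_{j=0}^{k} 3^j \, \chi_{3,2}(3^{k-j})$ only the term $j = k$ survives, giving $A(3^k) = 3^k$. Likewise $B(3^k) = 1$ (only $j = 0$ contributes), and $\mathfrak{f}(3^k) = 3^k$ by~\eqref{eq: theta}, so both sides equal $3^k$. For $p \ne 3$, set $c = \chi_{3,2}(p) \in \{\pm 1\}$. Using complete multiplicativity of $\chi_{3,2}$ and $c^{-1}=c$, one computes
\be
A(p^k) = \sum_{j=0}^{k} p^j c^{k-j} = c^k \sum_{j=0}^{k} (pc)^j = c^k B(p^k) = \mathfrak{f}(p^k) \, B(p^k) \, ,
\ee
since $\mathfrak{f}(p^k) = \chi_{3,2}(p^k) = c^k$ for $p \ne 3$. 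This completes the verification on prime powers, and the multiplicativity of $A$, $B$, and $\mathfrak{f}$ extends the identity to all $n$.

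The argument is essentially bookkeeping; the only subtlety is to check that the prime $p = 3$, where $\chi_{3,2}$ is killed but $\mathfrak{f}$ picks up the factor $3^{n_3}$, reconciles the two sides. This is exactly the role played by the definition~\eqref{eq: theta} of $\mathfrak{f}$.
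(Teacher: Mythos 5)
Your argument is correct in substance but takes a genuinely different route from the paper. The paper's proof is a single direct manipulation: writing $n=3^{n_3}m$ with $3\nmid m$, it absorbs $\mathfrak{f}(n)$ into the divisor sum for $R_n/R_1$, notes that $\chi_{3,2}(dm)$ annihilates every divisor $d$ of $n$ that is not a divisor of $m$, and then the substitution $d\mapsto m/d$ turns what remains into the divisor sum for $S_m/S_1=S_n/S_1$. Your route — clearing denominators to obtain the multiplicative functions $A=\mathrm{id}*\chi_{3,2}$ and $B=(\mathrm{id}\cdot\chi_{3,2})*1$ (these are exactly the integers $\alpha_n$, $\beta_n$ of Eqs.~\eqref{eq: intStokesP2zero} and~\eqref{eq: intStokesP2infty}) and verifying the identity on prime powers — is equally valid and arguably more illuminating: it isolates $p=3$ as the only prime where the two divisor sums genuinely differ, which is precisely what the factor $3^{n_3}$ in $\mathfrak{f}$ is there to compensate. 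One correction is needed in the reduction to positive $n$: $B$ is even, not odd. Since $R_{-n}=-R_n$ and $B(n)=n\,R_n/R_1$, the two sign flips cancel and $B(-n)=B(n)$ (the paper records this as $\beta_{-n}=\beta_n$). As you state it, with $A$, $B$, and $\mathfrak{f}$ all odd, the identity at $-n$ would read $-A(n)=\mathfrak{f}(n)B(n)$, i.e.\ the negation of the identity at $n$, and the reduction would fail; with the correct (even) parity of $B$ it goes through as intended.
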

\begin{proof}
    Let us fix $n \in \IZ_{\ne 0}$ and write $n = 3^{n_3} m$, where $n_3 \in \IN$ and $m \in \IZ_{\ne 0}$ with $3 \nmid m$.  
    As a straightforward consequence of their closed formulae as divisor sum functions, the weak and strong coupling Stokes constants satisfy the simple properties
    \be \label{eq: stokes-nm}
    S_n = S_m \, , \quad R_n = 3^{-n_3} R_m \, .
    \ee
    Using the closed formulae in Eqs.~\eqref{eq: formulaS1},~\eqref{eq: formulaS1infty}, and~\eqref{eq: theta}, we find that
    \be
        \mathfrak{f}(n) \frac{R_n}{R_1} = \sum_{d|n} \frac{d \, 3^{n_3}}{n} \chi_{3,2}(d m) = \sum_{d|m} \frac{d}{m} \chi_{3,2}(m/d) = \sum_{d|m} \frac{1}{d} \chi_{3,2}(d) = \frac{S_m}{S_1} \, ,
    \ee
    where we have applied the property 
    \be
    \chi_{3,2}(d m) = \chi_{3,2}(d)^2 \chi_{3,2}(m/d) =\chi_{3,2}(m/d) \, ,
    \ee
    which makes use of $\chi_{3,2}(d)^2=1$ for $3 \nmid d \mid m$. We conclude by substituting $S_m= S_n$.
\end{proof}
The arithmetic twist above plays a role in the global net of relations among the dual resurgent structures of $\mathrm{Tr}(\mrho_{\IP^2})$ that is analogous to the one played by the functional equation linking the two $L$-functions. Indeed, Proposition~\ref{prop:twist} supplies a two-headed arrow joining the weak and strong coupling Stokes constants directly in the commutative diagram in Eq.~\eqref{diag: strong-weak1}. We reproduce it below in an equivalent form that allows us to visualize the contribution from Eq.~\eqref{eq: stokes-twist}.\footnote{For completeness, we remind that the locations of the singularities $\zeta_n$, $\eta_n$, $n \in \IZ_{\ne 0}$, are related by $\zeta_n = 2 \pi \eta_n$.}
\begin{equation} \label{diag: strong-weak-S}
\begin{tikzcd}[column sep=2.8em, row sep=2.8em]
& & \arrow[ddll,sloped, shift right=.5ex,"\text{Mellin}"] \mathrm{disc}_{\frac{\pi}{2}}\psi(\tau) \arrow[rr, "\hbar \rightarrow 0"] & & \phi(\hbar) \arrow[ddrr,sloped,swap,shift right=.5ex, "\text{exact large-$n$}"]\arrow[dd,sloped,"\text{resurgence}"] & & 
\\ \\ 
 \{\CA_\infty, \, L_\infty\}\arrow[ddrr,sloped,swap,shift right=.5ex, "\text{evaluation}"] \arrow[uurr,sloped,swap, shift right=.5ex, "\text{inverse Mellin}"] & & \{\eta_n, \, R_n\} \arrow{ll}{\text{series}}[swap]{\text{Dirichlet}}\arrow[sloped]{uu}{\text{generating}}[swap]{\text{series}}  \arrow[blue]{rr}{\text{arithmetic}}[swap]{\text{twist}} & &   \{\zeta_n, \, S_n\} \arrow{rr}{\text{Dirichlet}}[swap]{\text{series}} \arrow[sloped]{dd}{\text{generating}}[swap]{\text{series}} \arrow[ll,blue]  & & \arrow[uull,sloped,shift right=.5ex, "\text{evaluation}"] \{\CA_0, \, L_0\}  \arrow[ddll,sloped,shift right=.5ex, "\text{inverse Mellin}"]
\\ \\
& & \psi(\tau)\arrow[uull,sloped,shift right=.5ex, "\text{exact large-$n$}"] \arrow[uu,sloped,"\text{resurgence}"] & & \arrow[uurr,sloped, shift right=.5ex,swap,"\text{Mellin}"] \mathrm{disc}_{\frac{\pi}{2}}\phi(\hbar) \arrow[ll,swap, "\tau \rightarrow 0"]
 & &
\end{tikzcd}
\end{equation}

\subsubsection*{$L$-functions and character twists}
The arithmetic twist of the Stokes constants described in Eq.~\eqref{eq: stokes-twist} appropriately translates in the language of the weak and strong coupling $L$-functions $L_0(s)$, $L_\infty(s)$, $s \in \IC$, in Eqs.~\eqref{eq: convolution2-0} and~\eqref{eq: convolution2-infty}. In particular, Eq.~\eqref{eq: stokes-twist} straightforwardly implies that  
\be
    L_0(s) = \sum_{n=1}^{\infty} \frac{S_n}{n^{s}} = \frac{S_1}{R_1} \sum_{n=1}^{\infty}\frac{R_n}{n^{s}}\mathfrak{f}(n) \, .
\ee
We can reformulate this arithmetic twist in a way that makes its meaning and implications more explicit. Before doing so, we introduce the principal Dirichlet character modulo $3$, that is, 
\be \label{eq: character31}
\chi_{3,1} (n) = 
\begin{cases}
0 & \quad \text{if} \quad n \equiv_3 0 \\
1 & \quad \text{else} 
\end{cases} \, , \quad n \in \IZ \, ,
\ee
which leads us to the following corollary to Proposition~\ref{prop:twist}.
\begin{cor} \label{cor: twist-cor1}
The normalized weak and strong coupling Stokes constants $S_n/S_1$, $R_n/R_1$, $n \in \IZ_{\ne 0}$, in Eqs.~\eqref{eq: formulaS1} and~\eqref{eq: formulaS1infty} satisfy
\be \label{eq: stokes-twist-v2}
\chi_{3,1}(n) \frac{S_n}{S_1}= \chi_{3,2}(n) \frac{R_n}{R_1} \, ,
\ee
where $\chi_{3,1}(n)$, $\chi_{3,2}(n)$ are the Dirichlet characters modulo $3$ in Eqs.~\eqref{eq: character32} and~\eqref{eq: character31}, respectively.
\end{cor}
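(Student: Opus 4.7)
The plan is to derive Corollary~\ref{cor: twist-cor1} directly from Proposition~\ref{prop:twist} by a simple case split on the $3$-adic valuation of $n$. The key observation is that both sides of Eq.~\eqref{eq: stokes-twist-v2} involve Dirichlet characters modulo $3$ that vanish precisely when $3 \mid n$, so the identity is trivial on that locus, while on integers coprime to $3$ the arithmetic function $\mathfrak{f}(n)$ appearing in Eq.~\eqref{eq: stokes-twist} collapses to the non-principal character $\chi_{3,2}(n)$.

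First I would fix $n \in \IZ_{\ne 0}$ and consider the two cases according to whether $3 \mid n$ or not. If $3 \mid n$, then by the definitions in Eqs.~\eqref{eq: character32} and~\eqref{eq: character31} we have $\chi_{3,1}(n) = \chi_{3,2}(n) = 0$, so both sides of Eq.~\eqref{eq: stokes-twist-v2} vanish and the identity holds trivially. If instead $3 \nmid n$, write $n$ as in Eq.~\eqref{eq: n-primes} with $n_3 = 0$; then $\chi_{3,1}(n) = 1$ and, by the definition of $\mathfrak{f}$ in Eq.~\eqref{eq: theta}, $\mathfrak{f}(n) = \chi_{3,2}(n)$. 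Substituting into Proposition~\ref{prop:twist} gives $S_n/S_1 = \chi_{3,2}(n)\, R_n/R_1$, which coincides with Eq.~\eqref{eq: stokes-twist-v2} after multiplying the left-hand side by $\chi_{3,1}(n) = 1$.

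There is really no obstacle here, since the statement is essentially a reformulation of Proposition~\ref{prop:twist}: the factor $3^{n_3}$ hidden inside $\mathfrak{f}(n)$ only matters when $3 \mid n$, but in that regime $\chi_{3,2}(n)$ vanishes and the prefactor $\chi_{3,1}$ on the other side conveniently kills the corresponding term. The content of the corollary is thus to repackage the twist so that both sides are expressed uniformly in terms of the two standard characters modulo $3$, at the cost of discarding the information carried by $R_n$ for $n$ divisible by $3$. This is consistent with the fact that $\chi_{3,1}$ and $\chi_{3,2}$ generate the group of Dirichlet characters modulo $3$, and highlights that the arithmetic twist relating the two families of Stokes constants is governed entirely by character theory on $(\IZ/3\IZ)^\times$.
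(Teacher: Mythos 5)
Your proof is correct and follows essentially the same route as the paper's: both reduce to Proposition~\ref{prop:twist}, use that $\mathfrak{f}(n)=\chi_{3,2}(n)$ when $3\nmid n$, and observe that the $\chi_{3,1}$ prefactor makes the identity hold trivially (both sides vanishing) when $3\mid n$. Your version merely spells out the $3\mid n$ case more explicitly than the paper does.
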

\begin{proof}
Observe that Eq.~\eqref{eq: stokes-twist} implies
\be \label{eq: twist-cor1}
\frac{S_m}{S_1}= \chi_{3,2}(m) \frac{R_m}{R_1} \, , \quad m \in \IZ_{\ne 0} \, , \quad 3 \nmid m \, .
\ee
Multiplying both sides of Eq.~\eqref{eq: twist-cor1} by the Dirichlet character $\chi_{3,1}(m)$ in Eq.~\eqref{eq: character31} and applying 
\be
\chi_{3,1}(m) \chi_{3,2}(m) = \chi_{3,2}(m) \, ,
\ee
we obtain the desired statement.
\end{proof}
Note that Corollary~\ref{cor: twist-cor1} can be equivalently stated as
\be \label{eq: stokes-twist-v3}
\chi_{3,2}(n) \frac{S_n}{S_1}= \chi_{3,1}(n) \frac{R_n}{R_1} \, .
\ee
Indeed, multiplying both sides of Eq.~\eqref{eq: stokes-twist-v2} by $\chi_{3,2}(n)$ and using that $\chi_{3,2}(n)^2 = \chi_{3,1}(n)$, we find the expression in Eq.~\eqref{eq: stokes-twist-v3}.
Finally, let us introduce the following notation. We denote by $L(s,\chi)$ the $L$-function that results from twisting an arbitrary $L$-function $L(s)$, $s \in \IC$, by an arbitrary Dirichlet character $\chi(n)$, $n \in \IZ$. Namely,\footnote{Note that the Dirichlet $L$-function $L(s, \chi_{3,2})$, which appears in the factorization of the weak and strong coupling $L$-functions in Eqs.~\eqref{eq: convolution2-0} and~\eqref{eq: convolution2-infty}, is obtained by twisting the Riemann zeta function $\zeta(s)$ by the Dirichlet character $\chi_{3,2}$ in Eq.~\eqref{eq: character32}.}
\be \label{eq: twist-notation}
L(s,\chi) =  \sum_{n=1}^{\infty} \frac{A_n}{n^s} \chi(n) \, ,
\ee
where the complex numbers $A_n$ are the coefficients in the $L$-series representation of $L(s)$.
\begin{theorem} \label{theorem: L-twist}
The weak and strong coupling $L$-functions $L_0(s)$, $L_\infty(s)$, $s \in \IC$, in Eqs.~\eqref{eq: convolution2-0} and~\eqref{eq: convolution2-infty} are obtained from each other by applying a character twist. Specifically, they satisfy  
\begin{subequations}
    \begin{align}
        L_0(s) &= \frac{S_1}{R_1}\frac{1}{1-3^{-s}} L_\infty(s, \chi_{3,2}) \, , \label{eq: L-twist-zero} \\ 
        L_\infty(s) &= \frac{R_1}{S_1}\frac{1}{1-3^{-(s+1)}} L_0(s, \chi_{3,2}) \, , \label{eq: L-twist-inf}
    \end{align}
\end{subequations}
where $\chi_{3,2}(m)$ is the unique non-principal Dirichlet character modulo $3$ in Eq.~\eqref{eq: character32}.
\end{theorem}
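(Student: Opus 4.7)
The plan is to derive each of~\eqref{eq: L-twist-zero}--\eqref{eq: L-twist-inf} by summing the pointwise arithmetic twist of Corollary~\ref{cor: twist-cor1} against the Dirichlet kernel $n^{-s}$ and computing the elementary correction coming from the multiples of~$3$.

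First I would establish two auxiliary symmetries that follow at once from the closed divisor-sum formulae~\eqref{eq: formulaS1} and~\eqref{eq: formulaS1infty}: since $\chi_{3,2}(d)=0$ whenever $3\mid d$, only divisors coprime to $3$ contribute, and a short computation shows $S_{3m}=S_m$ and $R_{3m}=R_m/3$ for every $m\in\IZ_{>0}$. Isolating the subsum over $3\mid n$, changing variables $n=3m$, and absorbing the resulting geometric factors back into the left-hand side produces the splittings
\begin{equation*}
\sum_{n=1}^\infty \chi_{3,1}(n)\,\frac{S_n}{n^s} \;=\; (1-3^{-s})\,L_0(s),\qquad \sum_{n=1}^\infty \chi_{3,1}(n)\,\frac{R_n}{n^s} \;=\; (1-3^{-(s+1)})\,L_\infty(s),
\end{equation*}
valid in the region of absolute convergence $\Re(s)>1$.

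To obtain~\eqref{eq: L-twist-zero}, I would multiply the identity $\chi_{3,1}(n)S_n/S_1=\chi_{3,2}(n)R_n/R_1$ of Corollary~\ref{cor: twist-cor1} by $n^{-s}$ and sum over $n$: the first splitting above together with the definition of $L_\infty(s,\chi_{3,2})$ from~\eqref{eq: twist-notation} gives $(1-3^{-s})L_0(s)/S_1 = L_\infty(s,\chi_{3,2})/R_1$, which rearranges to the claim. The dual identity~\eqref{eq: L-twist-inf} follows by applying the same manipulation to the equivalent form $\chi_{3,2}(n)S_n/S_1=\chi_{3,1}(n)R_n/R_1$ appearing in~\eqref{eq: stokes-twist-v3}, now using the second splitting and the definition of $L_0(s,\chi_{3,2})$. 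Both identities extend from $\Re(s)>1$ to all $s\in\IC$ by the meromorphic continuation already supplied by~\eqref{eq: convolution2-0}--\eqref{eq: convolution2-infty}. The argument is purely algebraic; the only point needing care will be the bookkeeping at the prime $p=3$ that produces the Euler factors $1-3^{-s}$ and $1-3^{-(s+1)}$, and this is where any potential slip would occur.
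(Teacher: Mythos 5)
Your proposal is correct and follows essentially the same route as the paper's proof: both extract the Euler factor at $p=3$ from the relations $S_{3m}=S_m$, $R_{3m}=R_m/3$ (the paper does this by resumming the geometric series over the full $3$-adic valuation, you by a single-step recursion, which is equivalent) and then apply Corollary~\ref{cor: twist-cor1} in its two forms to identify the remaining sum over $n$ coprime to $3$ with the twisted $L$-series. No gaps.
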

\begin{proof}
    We write an arbitrary integer $n \in \IZ_{>0}$ in the form $n=3^{n_3} m$, where $n_3 \in \IN$ and $m \in \IZ_{>0}$ with $3 \nmid m$, as before. 
    It follows from the definition of the weak coupling $L$-function in Eq.~\eqref{eq: convolution2-0} that it factorizes as
    \be \label{eq: L-twist-proof-zero}
    L_0(s) = \sum_{n_3=0}^{\infty} 3^{- s n_3} \sum_{\substack{m=1 \\ 3 \nmid m}}^{\infty} \frac{S_m}{m^{s}}=\frac{1}{1-3^{-s}} \sum_{\substack{m=1 \\ 3 \nmid m}}^{\infty} \frac{S_m}{m^{s}} \, ,
    \ee
    where we have used the first formula in Eq.~\eqref{eq: stokes-nm} and resummed the geometric series over the index $n_3$. As a consequence of Corollary~\ref{cor: twist-cor1}, the series in the RHS of Eq.~\eqref{eq: L-twist-proof-zero} can then be written as
    \be \label{eq: L-twist-proof-zero2}
    \sum_{\substack{m=1 \\ 3 \nmid m}}^{\infty} \frac{S_m}{m^{s}} = \sum_{n=1}^{\infty} \frac{S_n}{n^{s}} \chi_{3,1}(n) = \frac{S_1}{R_1} \sum_{n=1}^{\infty} \frac{R_n}{n^{s}} \chi_{3,2}(n) \, .
    \ee
    Substituting Eq.~\eqref{eq: L-twist-proof-zero2} into Eq.~\eqref{eq: L-twist-proof-zero} and using the notation introduced in Eq.~\eqref{eq: twist-notation} for $L_\infty(s)$ yields the statement in Eq.~\eqref{eq: L-twist-zero}.
    Analogously, the definition of the strong coupling $L$-function in Eq.~\eqref{eq: convolution2-infty} implies the factorization
    \be \label{eq: L-twist-proof-inf}
    L_\infty(s) = \sum_{n_3=0}^{\infty} 3^{- (s+1)n_3} \sum_{\substack{m=1 \\ 3 \nmid m}}^{\infty} \frac{R_m}{m^{s}} = \frac{1}{1-3^{-(s+1)}} \sum_{\substack{m=1 \\ 3 \nmid m}}^{\infty} \frac{R_m}{m^{s}} \, ,
    \ee
    where we have used the second formula in Eq.~\eqref{eq: stokes-nm} and resummed the geometric series over the index $n_3$. As a consequence of Corollary~\ref{cor: twist-cor1} in the equivalent form of Eq.~\eqref{eq: stokes-twist-v3}, the series in the RHS of Eq.~\eqref{eq: L-twist-proof-inf} can then be written as
    \be \label{eq: L-twist-proof-inf2}
    \sum_{\substack{m=1 \\ 3 \nmid m}}^{\infty} \frac{R_m}{m^{s}} = \sum_{n=1}^{\infty} \frac{R_n}{n^{s}} \chi_{3,1}(n) = \frac{R_1}{S_1} \sum_{n=1}^{\infty} \frac{S_n}{n^{s}} \chi_{3,2}(n) \, .
    \ee
    Substituting Eq.~\eqref{eq: L-twist-proof-inf2} into Eq.~\eqref{eq: L-twist-proof-inf} and using the notation introduced in Eq.~\eqref{eq: twist-notation} for $L_0(s)$ yields the statement in Eq.~\eqref{eq: L-twist-inf}.
\end{proof}

\section{Summability and quantum modularity}\label{sec:summability-and-QM}
In this section, we present our results on the summability and modularity properties of the generating functions $f_0$, $f_\infty$ of the Stokes constants at weak and strong coupling, defined in Eqs.~\eqref{eq:f_0} and~\eqref{eq:f_inf}, respectively. In particular, we show how the median resummation allows us to effectively reconstruct $f_0(y)$ and (conjecturally) $f_\infty(y)$ from their asymptotic expansions in the limit $y\to0$. Moreover, we prove that they are holomorphic quantum modular forms of weight zero under the congruence subgroup $\Gamma_1(3) \subset \mathsf{SL}_2(\IZ)$.      
\subsection{Borel--Laplace sums}\label{sec:Borel-Laplace}
As a preliminary step toward the summability study we perform in Section~\ref{sec:median_resummation}, we explicitly compute the Borel--Laplace sums of the asymptotic series $\phi(\hbar)$ and $\psi(\tau)$ in Eqs.~\eqref{eq: phiP2} and~\eqref{eq: phiP2infty} capturing the perturbative expansions of the spectral trace of local $\IP^2$ in the limits $\hbar\to 0$ and $\hbar\to\infty$, respectively.
We start by introducing the following notation. Let $\e\colon\IC\setminus \ri\IR_{\geq 0}\to\IC$ be the function defined by\footnote{The notation in Eq.~\eqref{eq: e1-def} was suggested by M. Kontsevich as the function $\e$ plays a similar role in other examples (\emph{e.g.}, the case of $\sigma, \sigma^*$ in~\cite[from min.~21.55]{Fantini-talk-IHES} and the case of even/odd Maass cusp forms in~\cite{FR1maths}).}  
\begin{equation} \label{eq: e1-def}
    \e(y):=\frac{1}{2\pi \ri}\int_0^{\infty} \re^{-2\pi t} \frac{dt}{t+\ri y} \, .
\end{equation}
Note that, applying the simple change of coordinates $t \mapsto t/(2 \pi) - \ri y$, the exponential integral $\e(y)$ can be written in the equivalent form
\be
    \e(y)=\frac{1}{2\pi \ri} \re^{2\pi \ri y} \, \Gamma(0,2\pi \ri y) \, ,
\ee
where $\Gamma(s, 2\pi \ri y)$ denotes the upper incomplete gamma function. It follows from the analyticity properties of $\Gamma(0,2\pi \ri y)$ that the function $\e(y)$ is analytic for $y \in \IC\setminus \ri\IR_{\geq 0}$ and its jump across the branch cut along the positive imaginary axis is given by 
\be
\lim_{\delta \rightarrow 0^+} \e(\ri x - \delta) - \e(\ri x+ \delta) = \re^{-2 \pi x} \, ,
\ee
where we have fixed $x \in \IR_{\ge 0}$.
Let us then go back to the Gevrey-1 asymptotic series $\phi(\hbar)$, $\psi(\tau)$ in Eqs.~\eqref{eq: phiP2} and~\eqref{eq: phiP2infty}. As recalled in Section~\ref{sec:exact-solution}, their perturbative coefficients $a_{2n}$, $b_{2n}$, $n \in \IZ_{>0}$, satisfy the exact large-order relations in Eq.~\eqref{eq: exactlarge}, which can be straightforwardly written as
\begin{subequations}
\begin{align}
a_{2n} &= \frac{\Gamma(2n)}{\pi \ri} \sum_{m = 1}^{\infty} \frac{S_m}{\zeta_m^{2n}} \, , \label{eq: exact2-zero} \\
b_{2n} &= \frac{\Gamma(2n-1)}{\pi \ri} \sum_{m = 1}^{\infty} \frac{R_m}{\eta_m^{2n-1}} \, , \label{eq: exact2-infty}
\end{align}
\end{subequations}
where the Stokes constants $S_m$, $R_m$ are given explicitly in Eqs.~\eqref{eq: formulaS1} and~\eqref{eq: formulaS1infty}, while the locations $\zeta_m$, $\eta_m$ of the singularities in the complex $\zeta$-plane are in Eqs.~\eqref{eq: zetan} and~\eqref{eq: zetan-infty}.
We now prove that both $\phi(\hbar)$ and $\psi(\tau)$ are Borel--Laplace summable at the angles $0$ and $\pi$ and their Borel--Laplace sums can be expressed in terms of the exponential integral in Eq.~\eqref{eq: e1-def}.
\begin{prop}\label{prop:Borel_sum_0}
    The Borel--Laplace sum of the weak coupling perturbative series $\phi(\hbar)$, defined in Eq.~\eqref{eq: phiP2}, along the positive real axis is 
    \begin{equation}
        s_0(\phi)(\hbar)=- \sum_{m \in \IZ_{\ne 0}} S_m \mathbf{e}_1\left(-\frac{2 \pi m}{3 \hbar} \right) \, ,
    \end{equation}
    which is analytic for $\Re(\hbar)>0$. 
\end{prop}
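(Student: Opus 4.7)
The plan is to exploit the exact large-order relations in Eq.~\eqref{eq: exact2-zero} to express the Borel transform $\hat{\phi}(\zeta)$ in closed form as a global meromorphic function with simple poles at $\zeta_m=\CA_0 \ri m$, $m\in \IZ_{\ne 0}$, and then integrate it term by term along the positive real axis. Concretely, I would substitute $a_{2n}=\Gamma(2n)/(\pi\ri)\sum_{m\ge 1} S_m/\zeta_m^{2n}$ into the Borel transform series $\hat{\phi}(\zeta)=\sum_{n\ge 1} a_{2n}\zeta^{2n-1}/\Gamma(2n)$ and interchange the order of summation in $n$ and $m$. The inner geometric sum over $n$ evaluates to $\zeta/(\zeta_m^{2}-\zeta^{2})$, and after a partial-fraction expansion, combined with the pairing $\zeta_{-m}=-\zeta_m$ and the parity $S_{-m}=S_m$, this yields the global representation
\begin{equation*}
\hat{\phi}(\zeta) = -\frac{1}{2\pi \ri}\sum_{m\in\IZ_{\ne 0}}\frac{S_m}{\zeta-\zeta_m} \, ,
\end{equation*}
which is consistent with the local expansion in Eq.~\eqref{eq: explocalP2}. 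Convergence on compact subsets of $\IC \setminus \{\zeta_m\}_m$ is controlled by the slow polynomial growth $|S_m|\le |S_1|\sigma_{-1}(m)=O(\log\log m)$, a direct consequence of the closed divisor-sum formula in Eq.~\eqref{eq: formulaS1}.

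Next, I would compute $s_0(\phi)(\hbar)=\int_0^\infty \hat{\phi}(\zeta)\re^{-\zeta/\hbar}d\zeta$ by exchanging sum and integral (justified by the growth bound on $S_m$ together with the fact that the singularities $\zeta_m$ lie on the imaginary axis, hence off the integration contour) and performing in each summand the change of variables $\zeta=2\pi \hbar t$. Using $\CA_0=4\pi^2/3$, the resulting integral for fixed $m$ matches exactly the defining integral of $\mathbf{e}_1$ in Eq.~\eqref{eq: e1-def} evaluated at $y=-2\pi m/(3\hbar)$, namely
\begin{equation*}
\int_0^\infty \frac{\re^{-\zeta/\hbar}}{\zeta-\zeta_m}\, d\zeta \;=\; \int_0^\infty \frac{\re^{-2\pi t}}{t-2\pi \ri m/(3\hbar)}\, dt \;=\; 2\pi \ri\, \mathbf{e}_1\!\left(-\frac{2\pi m}{3\hbar}\right) \, ,
\end{equation*}
after which the prefactor $-1/(2\pi \ri)$ cancels out and yields the claimed formula.

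Finally, the analyticity assertion would follow by observing that each summand $\mathbf{e}_1(-2\pi m/(3\hbar))$ is holomorphic on $\Re(\hbar)>0$, since the branch cut of $\mathbf{e}_1$ along $\ri \IR_{\ge 0}$ pulls back to $\hbar \in \ri \IR$. Uniform convergence on compact subsets of the right half-plane follows from combining $|S_m|=O(\log\log m)$ with the algebraic decay $\mathbf{e}_1(y)=O(1/y)$ as $|y|\to\infty$ away from the cut, yielding a summable bound of order $O(\log\log m/m)$.

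The main obstacle is the rigorous justification of the two interchanges of summation and integration: first in assembling the closed form of $\hat{\phi}(\zeta)$ from its large-$n$ asymptotics, and second in pushing the Laplace integral inside the sum over $m$. Both steps ultimately rest on the arithmetic control of the Stokes constants provided by Eq.~\eqref{eq: formulaS1}, without which a direct Fubini argument would be much harder; the rest of the proof is a bookkeeping calculation matching the explicit integral representation of $\mathbf{e}_1$.
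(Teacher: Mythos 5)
Your proposal is correct and follows essentially the same route as the paper: substitute the exact large-order relations into the Borel transform, resum the geometric series and partial-fraction to obtain $\hat{\phi}(\zeta)=\frac{1}{2\pi\ri}\sum_{m\in\IZ_{\ne 0}}S_m/(\zeta_m-\zeta)$, then compute the Laplace integral termwise via the change of variables that matches the defining integral of $\e$. The only difference is that you spell out the convergence/Fubini justifications (via the divisor-sum bound on $S_m$) more explicitly than the paper does, which is a welcome addition rather than a deviation.
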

\begin{proof}
As summarized in Appendix~\ref{app: resurgence}, the Borel--Laplace resummation requires two successive steps. First, we compute the Borel transform $\hat{\phi}(\zeta)$, which gives
\be \label{eq: BTzero}
\begin{aligned}
\hat{\phi}(\zeta) &= \sum_{n=1}^{\infty} \frac{a_{2n}}{\Gamma(2n)} \zeta^{2n-1} \\
&=\frac{1}{\pi \ri} \sum_{m=1}^{\infty} S_m \zeta^{-1} \sum_{n=1}^{\infty} \left( \frac{\zeta}{\zeta_m} \right)^{2n} = \frac{1}{\pi \ri} \sum_{m=1}^{\infty} S_m \frac{\zeta}{\zeta_m^2-\zeta^2} \\
&= - \frac{1}{2 \pi \ri} \sum_{m=1}^{\infty} \frac{S_m}{\zeta_m+\zeta} + \frac{1}{2 \pi \ri} \sum_{m=1}^{\infty} \frac{S_m}{\zeta_m-\zeta} = \frac{1}{2 \pi \ri} \sum_{m \in \IZ_{\ne 0}} \frac{S_m}{\zeta_m-\zeta} \, ,
\end{aligned}
\ee
where we have substituted Eq.~\eqref{eq: exact2-zero}, applied the identity
\begin{equation} \label{eq: identity-frac}
    \frac{1}{(x-y)(x+y)}= \frac{1}{2x(x-y)}+\frac{1}{2x(x+y)} \, ,
\end{equation}
and used the properties $\zeta_{-m}=-\zeta_m$ and $S_{-m}=S_m$, $m \in \IZ_{\ne 0}$.
Then, assuming $\hbar\in\IR_{\ge 0}$, we perform the Laplace transform along the positive real axis, that is, 
\be \label{eq: LTzero}
\begin{aligned}
s_0(\phi)(\hbar) &= \int_0^{\infty} d\zeta \re^{-\zeta/\hbar} \hat{\phi}(\zeta) \\
&=\frac{1}{2 \pi \ri} \int_0^{\infty} d\zeta \sum_{m \in \IZ_{\ne 0}} S_m \frac{\re^{-\zeta/\hbar}}{\zeta_m-\zeta} \\
&= \frac{1}{2 \pi \ri}  \int_0^{\infty} dt \sum_{m \in \IZ_{\ne 0}} S_m \frac{\re^{-2 \pi t}}{\frac{\zeta_m}{2 \pi \hbar}-t} = - \sum_{m \in \IZ_{\ne 0}} S_m \mathbf{e}_1\left(\frac{\ri \zeta_m}{2 \pi \hbar} \right) \, ,    
\end{aligned}
\ee
where we have substituted Eq.~\eqref{eq: BTzero}, applied the change of variable $2 \pi t = \zeta/\hbar$, permuted sum and integral due to absolute convergence, and used the notation introduced in Eq.~\eqref{eq: e1-def}. Finally, applying Eq.~\eqref{eq: zetan} in Eq.~\eqref{eq: LTzero}, we find the desired expression for the Borel--Laplace sum of $\phi(\hbar)$, which can be analytically continued to $\Re(\hbar)>0$ thanks to the properties of the function $\e$.
\end{proof}
\begin{prop}\label{prop:Borel_sum_inf}
    The Borel--Laplace sum of the strong coupling perturbative series $\psi(\tau)$, defined in Eq.~\eqref{eq: phiP2infty}, along the positive real axis is 
    \begin{equation}
        s_0(\psi)(\tau)=- \sum_{m \in \IZ_{\ne 0}} R_m \mathbf{e}_1\left(-\frac{m}{3 \tau} \right) \, ,
    \end{equation}
    which is analytic for $\Re(\tau)>0$.
\end{prop}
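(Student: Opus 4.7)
The plan is to mirror the proof of Proposition~\ref{prop:Borel_sum_0}, performing the two-step Borel--Laplace procedure on the series $\psi(\tau) = \sum_{n\geq 1} b_{2n}\tau^{2n-1}$ while carefully tracking how the odd parity of the Stokes constants $R_m$ and the different polynomial structure of the series combine to produce a sum of the same form as in the weak-coupling case.

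First, I would compute the Borel transform $\hat{\psi}(\zeta) = \sum_{n\geq 1} b_{2n} \zeta^{2n-2}/\Gamma(2n-1)$. Substituting the exact large-order relation~\eqref{eq: exact2-infty} and summing the resulting geometric series in $(\zeta/\eta_m)^2$ yields $\hat{\psi}(\zeta) = (\pi \ri)^{-1} \sum_{m\geq 1} R_m \eta_m/(\eta_m^2 - \zeta^2)$. Since here the numerator of each fraction is $\eta_m$ rather than $\zeta$, the partial fraction decomposition reads $\eta_m/(\eta_m^2-\zeta^2) = 1/[2(\eta_m-\zeta)] + 1/[2(\eta_m+\zeta)]$, with \emph{both} signs positive, opposite to the $\phi$ case. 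However, the Stokes constants now satisfy $R_{-m}=-R_m$, together with $\eta_{-m}=-\eta_m$; rewriting the positive-sign term as a sum over negative indices produces two sign flips that cancel, yielding $\hat{\psi}(\zeta) = (2\pi \ri)^{-1}\sum_{m \in \IZ_{\ne 0}} R_m/(\eta_m-\zeta)$, which has exactly the same structural form as in Proposition~\ref{prop:Borel_sum_0}.

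Second, assuming $\tau \in \IR_{>0}$, I would apply the Laplace transform along the positive real axis. After exchanging sum and integral (justified by absolute convergence, as before) and applying the change of variables $2\pi t = \zeta/\tau$, each term reduces to an instance of the exponential integral $\e$ evaluated at $\ri \eta_m /(2\pi \tau)$. Using $\eta_m = \CA_\infty \ri m = (2\pi/3)\ri m$ simplifies this argument to $-m/(3\tau)$, giving the closed expression in the statement. Analytic continuation to $\Re(\tau)>0$ then follows from the analyticity of $\e$ on $\IC\setminus \ri \IR_{\geq 0}$.

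The main obstacle is purely bookkeeping: the numerator $\eta_m$ in the Borel transform forces a different partial fraction decomposition than for $\phi$, yet the opposite parity of $R_m$ versus $S_m$ produces a compensating sign so that the final symmetrized sum over $\IZ_{\ne 0}$ retains the same form. Once this sign accounting is verified, no substantive new arguments are required beyond those already employed in Proposition~\ref{prop:Borel_sum_0}.
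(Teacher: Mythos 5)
Your proposal is correct and follows essentially the same route as the paper's proof: Borel transform via the exact large-order relation, geometric resummation, partial fractions with the compensating sign flips from $R_{-m}=-R_m$ and $\eta_{-m}=-\eta_m$ to symmetrize the sum over $\IZ_{\ne 0}$, then the Laplace transform with the change of variable $2\pi t=\zeta/\tau$ landing on $\e$. The sign bookkeeping you highlight is exactly the point where the two cases differ, and you have it right.
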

\begin{proof}
We follow the same steps of the proof of Proposition~\ref{prop:Borel_sum_0}. First, we compute the Borel transform $\hat{\psi}(\tau)$ and obtain
\be \label{eq: BTinfty}
\begin{aligned}
\hat{\psi}(\tau) &= \sum_{n=1}^{\infty} \frac{b_{2n}}{\Gamma(2n-1)} \zeta^{2n-2}\\
&=\frac{1}{\pi \ri} \sum_{m=1}^{\infty} R_m \eta_m^{-1} \sum_{n=1}^{\infty} \left( \frac{\zeta}{\eta_m} \right)^{2n-2} = \frac{1}{\pi \ri} \sum_{m=1}^{\infty} R_m \frac{\eta_m}{\eta_m^2-\zeta^2} \\
&= \frac{1}{2 \pi \ri} \sum_{m=1}^{\infty} \frac{R_m}{\eta_m+\zeta} + \frac{1}{2 \pi \ri} \sum_{m=1}^{\infty} \frac{R_m}{\eta_m-\zeta} = \frac{1}{2 \pi \ri} \sum_{m \in \IZ_{\ne 0}} \frac{R_m}{\eta_m-\zeta} \, ,
\end{aligned}
\ee
where we have substituted Eq.~\eqref{eq: exact2-infty}, applied the fractional identity in Eq.~\eqref{eq: identity-frac}, and used the properties $\eta_{-m}=-\eta_m$ and $R_{-m}=-R_m$, $m \in \IZ_{\ne 0}$.
Then, assuming $\tau\in\IR_{\ge 0}$, we compute the Laplace transform along the positive real axis, which gives
\be \label{eq: LTinfty}
\begin{aligned}
s_0(\psi)(\tau) &= \int_0^{\infty} d\zeta \re^{-\zeta/\tau} \hat{\psi}(\zeta) \\
&=\frac{1}{2 \pi \ri} \int_0^{\infty} d\zeta \sum_{m \in \IZ_{\ne 0}} R_m\re^{-\zeta/\tau} \frac{1}{\eta_m-\zeta} \\
&= \frac{1}{2 \pi \ri}  \int_0^{\infty} dt \sum_{m \in \IZ_{\ne 0}} R_m \re^{-2 \pi t} \frac{1}{\frac{\eta_m}{2 \pi \tau}-t} = - \sum_{m \in \IZ_{\ne 0}} R_m \mathbf{e}_1\left(\frac{\ri \eta_m}{2 \pi \tau} \right)  \, ,   
\end{aligned}
\ee
where we have substituted Eq.~\eqref{eq: BTinfty}, performed the change of variable $2 \pi t = \zeta/\tau$, permuted sum and integral due to absolute convergence, and again applied the notation in Eq.~\eqref{eq: e1-def}. We conclude by substituting Eq.~\eqref{eq: zetan-infty} into Eq.~\eqref{eq: LTinfty}. Notice that $s_0(\psi)(\tau)$ can be analytically continued to $\Re(\tau)>0$ thanks to the properties of the function $\e$.
\end{proof}
In addition, following the same steps in the proofs of Propositions~\ref{prop:Borel_sum_0} and~\ref{prop:Borel_sum_inf}, we obtain the analogous expressions for the Borel--Laplace sums of $\phi(\hbar)$ and $\psi(\tau)$ at angle $\pi$. Namely,  
\begin{subequations}
    \begin{align}
        s_\pi(\phi)(\hbar)&=- \sum_{m \in \IZ_{\ne 0}} S_m \mathbf{e}_1\left(-\frac{2 \pi m}{3 \hbar} \right) \,, \\
        s_\pi(\psi)(\tau)&=- \sum_{m \in \IZ_{\ne 0}} R_m \mathbf{e}_1\left(-\frac{ m}{3 \tau} \right) \,,
    \end{align}
\end{subequations}
which are analytic for $\Re(\hbar)<0$ and $\Re(\tau)<0$, respectively.

Recall that the asymptotic behavior in the limit $y \rightarrow 0$ of the generating functions $f_0(y)$ and $f_\infty(y)$ of the weak and strong coupling Stokes constants in Eqs.~\eqref{eq:f_0} and~\eqref{eq:f_inf} is dictated by the original perturbative series $\psi(\tau)$ and $\phi(\hbar)$, respectively, according to the formulae in Eqs.~\eqref{eq: f0-psi} and~\eqref{eq: finf-phi}. Therefore, the Borel--Laplace sums of the asymptotic series $\tilde{f}_0(y)$ and $\tilde{f}_\infty(y)$ easily follow from Propositions~\ref{prop:Borel_sum_0} and~\ref{prop:Borel_sum_inf}.
For completeness, we write their explicit formulae in the two corollaries below.
\begin{cor}\label{cor:BL-f0_R}
The Borel--Laplace sums of the perturbative expansion $\tilde{f}_0(y)$ in Eq.~\eqref{eq:asymp-f0} along the positive and negative real axes are written in terms of the function $\e$ in Eq.~\eqref{eq: e1-def} as 
\begin{subequations}
    \begin{align}
        s_0(\tilde{f}_0)(y)&= -\frac{\pi \ri}{2}-\frac{3\mathcal{V}}{2\pi \ri y} +2\sum_{m\in\IZ_{\neq 0}} R_m\, \e\left(-\frac{m}{3y}\right) \, , \quad \Re(y)>0 \, , \label{eq: sumzero-cor-0} \\
        s_\pi(\tilde{f}_0)(y)&= -\frac{\pi \ri}{2}-\frac{3\mathcal{V}}{2\pi \ri y} +2\sum_{m\in\IZ_{\neq 0}} R_m\, \e\left(-\frac{m}{3y}\right) \, , \quad \Re(y)<0 \, , \label{eq: sumzero-cor-pi} 
    \end{align}
\end{subequations}
where $\CV = 2 \Im\left(\mathrm{Li}_2(\re^{2 \pi \ri/3})\right)$ as before.
\end{cor}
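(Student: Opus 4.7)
The plan is to reduce the statement directly to Proposition~\ref{prop:Borel_sum_inf} by exploiting the closed-form asymptotic identity in Eq.~\eqref{eq: f0-psi}, which expresses $\tilde{f}_0(y)$ as
\be
\tilde{f}_0(y) = -\frac{\pi \ri}{2} - \frac{3 \mathcal{V}}{2 \pi \ri y} - 2 \psi(y) \, ,
\ee
where $\psi(y)$ is precisely the strong coupling series of Eq.~\eqref{eq: phiP2infty}. The first two summands are a constant term and a simple $1/y$ term, which are already analytic functions of $y$ and therefore coincide with their own Borel--Laplace sums trivially. So, by linearity of the Borel transform and the Laplace integral, the Borel--Laplace resummation of $\tilde{f}_0(y)$ reduces to resumming the formal power series $-2\psi(y)$.

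First, I would observe that the two analytic pieces $-\frac{\pi\ri}{2}$ and $-\frac{3\CV}{2\pi\ri y}$ are unaffected by Borel--Laplace resummation at either angle $0$ or $\pi$: the constant is passed through by the usual convention (it corresponds to the non-series prefactor in the $\hbar \to \infty$ expansion of $\mathrm{Tr}(\mrho_{\IP^2})$), while $1/y$ has vanishing Borel transform as a formal power series and reappears as a prefactor. Then I would invoke Proposition~\ref{prop:Borel_sum_inf}, which gives
\be
s_0(\psi)(y) = -\sum_{m \in \IZ_{\ne 0}} R_m \, \e\!\left(-\frac{m}{3 y}\right) \, , \quad \Re(y) > 0 \, .
\ee
Multiplying by $-2$ and adding the two analytic pieces yields the expression in Eq.~\eqref{eq: sumzero-cor-0}.

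For the Borel--Laplace sum along the negative real axis, the same reduction applies, this time using the analogous identity for $s_\pi(\psi)(y)$ recorded immediately after the proof of Proposition~\ref{prop:Borel_sum_inf}, which has the same functional form but is analytic for $\Re(y) < 0$. Linearity then produces Eq.~\eqref{eq: sumzero-cor-pi}. The domains of analyticity $\Re(y) > 0$ and $\Re(y) < 0$ for $s_0(\tilde{f}_0)$ and $s_\pi(\tilde{f}_0)$, respectively, follow directly from the analyticity properties of the function $\e$ stated after Eq.~\eqref{eq: e1-def}, noting that the pole $-\frac{3\CV}{2\pi\ri y}$ is innocuous on each of the two half-planes.

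I do not anticipate a serious obstacle here: the only mild care point is the conventional treatment of the non-series terms (the constant and the $y^{-1}$ prefactor) under Borel--Laplace resummation, which should be stated explicitly so that linearity can be applied cleanly. Everything else is a direct substitution.
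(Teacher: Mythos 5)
Your proposal is correct and follows exactly the paper's route: the paper's proof is a one-line reduction via Eq.~\eqref{eq: f0-psi} to Proposition~\ref{prop:Borel_sum_inf} (and its angle-$\pi$ analogue), with the constant and $y^{-1}$ terms passing through by linearity. Your only addition is spelling out the conventional treatment of the non-series terms, which the paper leaves implicit.
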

\begin{proof}
    The proof of the statement follows directly from Eq.~\eqref{eq: f0-psi} and Proposition~\ref{prop:Borel_sum_inf}.
\end{proof}
\begin{cor}\label{cor:BL-finf_S}
The Borel--Laplace sums of the perturbative expansion $\tilde{f}_\infty(y)$ in Eq.~\eqref{eq:asymp-finf} along the positive and negative real axes are written in terms of the function $\e$ in Eq.~\eqref{eq: e1-def} as 
\begin{subequations}
    \begin{align}
        s_0(\tilde{f}_\infty)(y)&= -3\log\frac{\Gamma(2/3)}{\Gamma(1/3)}-\log(-6\pi \ri y)-2\sum_{m\in\IZ_{\neq 0}}S_m\, \e\left(-\frac{m}{3y}\right) \, , \quad \Re(y)>0 \, , \label{eq: suminf-cor-0} \\
        s_\pi(\tilde{f}_\infty)(y)&= -3\log\frac{\Gamma(2/3)}{\Gamma(1/3)}-\log(-6\pi \ri y)-2\sum_{m\in\IZ_{\neq 0}}S_m\, \e\left(-\frac{m}{3y}\right) \, , \quad \Re(y)<0 \, . \label{eq: suminf-cor-pi} 
    \end{align}
\end{subequations}
\end{cor}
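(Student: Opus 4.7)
The strategy is the same as in Corollary~\ref{cor:BL-f0_R}: reduce the Borel--Laplace summation of $\tilde f_\infty(y)$ to the one already computed for $\phi(\hbar)$ in Proposition~\ref{prop:Borel_sum_0}, using the exact identification established in Eq.~\eqref{eq: finf-phi}.

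First, I would recall Eq.~\eqref{eq: finf-phi}, which reads
\be
\tilde f_\infty(y) = -3\log\frac{\Gamma(2/3)}{\Gamma(1/3)} - \log(-6\pi \ri y) + 2\phi(2\pi y) \, .
\ee
The first two summands are $y$-analytic functions (not formal power series) and are therefore invariant under Borel--Laplace summation. By linearity of Borel transform and Laplace integral, the only nontrivial contribution to $s_0(\tilde f_\infty)(y)$ and $s_\pi(\tilde f_\infty)(y)$ comes from the term $2\phi(2\pi y)$, whose Gevrey-1 and simple resurgent character are inherited directly from those of $\phi(\hbar)$ established in Section~\ref{sec:exact-solution}.

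Next, I would invoke Proposition~\ref{prop:Borel_sum_0} with $\hbar=2\pi y$, which yields
\be
s_0(\phi)(2\pi y) = -\sum_{m\in\IZ_{\neq 0}} S_m\, \e\!\left(-\frac{2\pi m}{3(2\pi y)}\right) = -\sum_{m\in\IZ_{\neq 0}} S_m\, \e\!\left(-\frac{m}{3y}\right),
\ee
analytic for $\Re(y)>0$, and an identical formula for $s_\pi(\phi)(2\pi y)$, analytic for $\Re(y)<0$, obtained by the analogous computation at angle $\pi$ announced after the proof of Proposition~\ref{prop:Borel_sum_inf}. Multiplying by $2$ and adding back the two regular prefactors produces Eqs.~\eqref{eq: suminf-cor-0} and~\eqref{eq: suminf-cor-pi}. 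The analyticity domains follow from the analyticity properties of the exponential integral $\e$ discussed right after Eq.~\eqref{eq: e1-def}.

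There is no substantial obstacle: the only care needed is to verify that the rescaling $\hbar\mapsto 2\pi y$ commutes with the Borel--Laplace procedure, which is standard (the Borel transform scales covariantly and the Laplace integration variable absorbs the rescaling), and to check that the two $y$-analytic prefactors have a vanishing asymptotic Borel transform and therefore pass through the summation unchanged. Once these bookkeeping points are in place, the identification is immediate.
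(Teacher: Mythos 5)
Your proposal is correct and follows exactly the paper's own (one-line) argument: the paper proves this corollary by citing Eq.~\eqref{eq: finf-phi} together with Proposition~\ref{prop:Borel_sum_0}, which is precisely the reduction you carry out, with the rescaling $\hbar = 2\pi y$ and the pass-through of the two analytic prefactors spelled out explicitly. No gaps.
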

\begin{proof}
    The proof of the statement follows directly from Eq.~\eqref{eq: finf-phi} and Proposition~\ref{prop:Borel_sum_0}.
\end{proof}

\subsection{Median resummation}\label{sec:median_resummation} 
We will now show that the median resummation of the asymptotic expansion $\tilde{f}_0(y)$ of the generating function of the weak coupling Stokes constants $S_n$, $n \in \IZ_{\ne 0}$, appropriately reconstructs the generating function $f_0(y)$ itself. As a consequence, our results showcase the effectiveness of the median resummation as a summability tool.
We begin with the preliminary Lemma~\ref{lemma:summability-logPhiNC}, where we prove that the Borel--Laplace sums at angles $0$ and $\pi$ of the asymptotic expansion $\tilde{f}_0(y)$ in Eq.~\eqref{eq:asymp-f0} reproduce the original function $f_0(y)$ with a correction that is suitably encoded in the dual function $f_\infty(y)$. 

\begin{lemma}\label{lemma:summability-logPhiNC}
    Let $\tilde{f}_0(y)$ be the asymptotic expansion in the limit $y\to 0$ of the generating function $f_0(y)$ in Eq.~\eqref{eq:f_0} with $\Im(y)>0$, which we have written explicitly in Eq.~\eqref{eq:asymp-f0}. 
    Its Borel--Laplace sums along the positive and negative real axes can be expressed as
    \begin{subequations}
        \begin{align}
            s_0(\tilde{f}_0)(y)&= f_0(y) +f_\infty\big(-\tfrac{1}{3y}\big) \, , \quad \Re(y)>0 \, , \label{eq: BL-zero-0}\\
            s_\pi(\tilde{f}_0)(y)&=f_0(y) -f_\infty\big(-\tfrac{1}{3y}\big) \, , \quad \Re(y)<0 \, . \label{eq: BL-zero-pi}
        \end{align}
    \end{subequations}
\end{lemma}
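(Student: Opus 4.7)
The plan is to reduce both identities to a Borel--Laplace analysis of $\psi(\tau)$ and to split the problem into an antisymmetric Stokes-jump calculation and a symmetric uniqueness step. Using the decomposition $\tilde{f}_0(y) = -\pi\ri/2 - 3\CV/(2\pi\ri y) - 2\psi(y)$ from Eq.~\eqref{eq: f0-psi} together with the Borel--Laplace invariance of the classical prefactor and polar term, the content of the lemma is equivalent to identifying $s_\theta(\psi)(y)$ with a specific combination of $f_0(y)$ and $f_\infty(-1/(3y))$ for $\theta\in\{0,\pi\}$.

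For the antisymmetric combination $s_0(\tilde{f}_0)-s_\pi(\tilde{f}_0)$, I would invoke the Borel transform $\hat{\psi}(\zeta) = \frac{1}{2\pi\ri}\sum_{m \in \IZ_{\neq 0}} R_m/(\eta_m-\zeta)$ from Eq.~\eqref{eq: BTinfty}, which is meromorphic with simple poles at $\eta_m=(2\pi/3)\ri m$ on the imaginary axis and residues $-R_m/(2\pi\ri)$. Continuously rotating the Laplace ray from $\arg\zeta=0$ to $\arg\zeta=\pi$ through the upper $\zeta$-half-plane encloses the poles at $\eta_m$ for $m>0$, and a residue computation gives $s_\pi(\psi)(y)-s_0(\psi)(y)=\sum_{m>0} R_m\re^{-\eta_m/y} = f_\infty(-1/(3y))$ after analytic continuation through the upper $y$-half-plane. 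Multiplying by $-2$, this yields $s_0(\tilde{f}_0)(y)-s_\pi(\tilde{f}_0)(y) = 2f_\infty(-1/(3y))$, consistent with the difference of the two identities in the statement. The same relation follows equivalently by feeding the jump $\e(\ri x-0)-\e(\ri x+0)=\re^{-2\pi x}$ into the explicit formulas of Corollary~\ref{cor:BL-f0_R} as $y$ crosses the positive imaginary axis.

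To pin down the symmetric combination $s_0(\tilde{f}_0)+s_\pi(\tilde{f}_0) = 2f_0(y)$, I would introduce the auxiliary functions $h(y):=s_0(\tilde{f}_0)(y)-f_0(y)-f_\infty(-1/(3y))$ on $\{\Re y>0,\ \Im y>0\}$ and $k(y):=s_\pi(\tilde{f}_0)(y)-f_0(y)+f_\infty(-1/(3y))$ on $\{\Re y<0,\ \Im y>0\}$. The Stokes-jump identity just derived forces $h$ and $k$ to match after analytic continuation, gluing them into a single holomorphic function $H$ on the upper half-plane $\IH$. By construction, $H$ is Gevrey-1 flat at $y=0$: both $s_\theta(\tilde{f}_0)$ and $f_0$ share the asymptotic expansion $\tilde{f}_0$ (by the definition of Borel--Laplace resummation and by Eq.~\eqref{eq: f0-psi}, respectively), while $f_\infty(-1/(3y))$ is $q$-exponentially small for $\Im y>0$. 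The main obstacle is then to conclude $H\equiv 0$: the upper half-plane has opening exactly $\pi$, which is insufficient for a Watson--Nevanlinna--Sokal uniqueness theorem, so the parity properties of $f_0$ and $f_\infty$ in Eqs.~\eqref{eq:f_0-symm} and~\eqref{eq:f_inf-symm} (together with the corresponding symmetries of $s_\theta(\tilde{f}_0)$) must be exploited to extend $H$ into a sector of opening strictly greater than $\pi$, where a polynomial growth estimate inherited from the integral representation of Faddeev's quantum dilogarithm (Appendix~\ref{app: Faddeev}) forces the vanishing of $H$.
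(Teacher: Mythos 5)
Your reduction via Eq.~\eqref{eq: f0-psi} and your computation of the antisymmetric combination are fine: the residue calculation for $\hat{\psi}$ correctly gives $s_0(\tilde{f}_0)(y)-s_\pi(\tilde{f}_0)(y)=2f_\infty\big(-\tfrac{1}{3y}\big)$, and this is consistent with Eqs.~\eqref{eq: finftydisc} and~\eqref{eq: f0-psi}, which the paper already has. But that part is not the content of the lemma; the content is the symmetric combination, i.e.\ that the median resummation returns \emph{exactly} $f_0(y)$ with no residual exponentially small term, and there your argument has a genuine gap. The function $H$ you construct lives only on $\IH$, a sector of opening exactly $\pi$, and — as you concede — no Watson/Nevanlinna--Sokal statement applies there. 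Worse, the obstruction is not merely technical: $f_\infty\big(-\tfrac{1}{3y}\big)$ is itself a nonzero function that is Gevrey-1 flat as $y\to0$ in $\IH$, so any uniqueness principle confined to $\IH$ is structurally blind to precisely the ambiguity you need to resolve (it cannot distinguish $H\equiv 0$ from $H= c\,f_\infty(-\tfrac{1}{3y})$). The proposed fix — using the parities in Eqs.~\eqref{eq:f_0-symm} and~\eqref{eq:f_inf-symm} to extend $H$ to a sector of opening greater than $\pi$ — is not carried out and does not obviously work: $y\mapsto -y$ maps $\IH$ to the lower half-plane, where $f_\infty\big(-\tfrac{1}{3y}\big)$ is no longer small, the relevant lateral sums are different branches, and the uniform Gevrey-1 flat bounds on $H$ that a Phragm\'en--Lindel\"of argument would require are exactly as hard to establish as the lemma itself.

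The paper avoids any uniqueness argument by computing the Borel--Laplace sums in closed form: it expands the two quantum dilogarithms in Eq.~\eqref{eq:f_0-closed} separately into series $\tilde{\varphi}_\pm$, writes their Borel transforms as a Hadamard product, evaluates it exactly by Hadamard's multiplication theorem and Cauchy's residue theorem (Eq.~\eqref{eq: intBorel2infty-proof}), and then evaluates the resulting Laplace integrals explicitly in terms of $q$-Pochhammer symbols using the integral identities of~\cite[Eq.~(5.108)]{wheeler-book}; the identities~\eqref{eq: BL-zero-0}--\eqref{eq: BL-zero-pi} then follow by recognizing the closed forms~\eqref{eq:f_0-closed} and~\eqref{eq:f_inf-closed}. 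Some such exact evaluation (or an equivalent closed-form input) is the missing ingredient in your proposal; without it, the symmetric part of the lemma remains unproved.
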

\begin{proof}
Applying Eq.~\eqref{eq: logPhiNC} to expand the two quantum dilogarithms in Eq.~\eqref{eq:f_0-closed} separately, $\tilde{f}_0(y)$ can be written as
\be \label{eq: sum0-proof-start}
\begin{aligned}
    \frac{1}{3}\tilde{f}_0(y)
    =&-\frac{\pi \ri}{3}-\frac{1}{2} \left(\log(1-w)-\log(1-w^{-1})\right) \\
    &-\frac{1}{2\pi \ri y} \left(\mathrm{Li}_2(w) - \mathrm{Li}_2(w^{-1})\right) -\left(\tilde{\varphi}_+(y)-\tilde{\varphi}_-(y)\right)\, ,
\end{aligned}
\ee
where $w= \re^{2 \pi \ri/3}$ as before, and we have introduced the two formal power series
\be
\tilde{\varphi}_\pm(y):=\sum_{k=1}^\infty (2\pi \ri y)^{2k-1}\frac{B_{2k}}{(2k)!}\mathrm{Li}_{2-2k}(w^{\pm 1})\, . 
\ee
Let us compute the Borel transform of $\tilde{\varphi}_\pm(y)$ in the variable $\zeta$ conjugate to $y$. We find that 
\be
\begin{aligned}
    \borel\left[\tilde{\varphi}_\pm \right](\zeta)&=\sum_{k=1}^\infty (2\pi i)^{2k-1}\frac{B_{2k}}{(2k)!}\mathrm{Li}_{2-2k}(w^{\pm})\frac{\zeta^{2k-2}}{(2k-2)!}\\
    &=\left(\sum_{k=1}^\infty \frac{B_{2k}}{(2k)!}\zeta^{2k-2} \right) \diamond \left(\sum_{k=1}^\infty (2\pi i)^{2k-1}\frac{\mathsf{Li}_{2-2k}(w^{\pm})}{(2k-2)!} \zeta^{2k-2} \right)\\
    &=:\tilde{g}(\zeta)\diamond \tilde{f}_\pm(\zeta)\,,
\end{aligned}
\ee
where $\diamond$ denotes the Hadamard product~\cite{Hadamard}. The formal power series $\tilde{g}(\zeta)$, $\tilde{f}_\pm(\zeta)$ have a finite radius of convergence at $\zeta=0$ and can be resummed explicitly into the functions\footnote{We impose that $g(0)=1/12$ to eliminate the removable singularity of $g(\zeta)$ at the origin.}
\begin{subequations}
    \begin{align}
        g(\zeta)&=-\frac{1}{\zeta^2}+\frac{1}{2 \zeta}\coth\left(\frac{\zeta}{2}\right) \, , \quad |\zeta|<2\pi \, , \\
    f_\pm(\zeta)&=-\pi \ri\left(\frac{1}{1+\re^{\pm\frac{\pi \ri}{3}-2\pi \ri \zeta}}+\frac{1}{1+\re^{\pm\frac{\pi \ri}{3}+2\pi \ri \zeta}}\right) \, , \quad |\zeta|<\frac{1}{3} \, , 
    \end{align}
\end{subequations}
respectively. After being analytically continued to the whole complex $\zeta$-plane, the function $g(\zeta)$ has poles of order one along the imaginary axis at
\be
\mu_m = 2 \pi \ri m \, , \quad m \in \IZ_{\ne 0} \, ,
\ee
while the functions $f_\pm(\zeta)$ have poles of order one along the real axis at 
\begin{subequations}
    \begin{align}
        \nu_k^{(+,1)}&= \frac{1}{3}+k \, , \quad \nu_k^{(+,2)}= -\frac{1}{3}+k\, ,  \\
        \nu_k^{(-,1)}&=\frac{2}{3}+k \, , \quad \nu_k^{(-,2)}=-\frac{2}{3}+k \, , 
    \end{align}
\end{subequations}
for $k \in \IZ$, respectively.
We consider a circle $\gamma$ in the complex $s$-plane with center $s=0$ and radius $0 < r < 2 \pi$ and apply Hadamard's multiplication theorem. The Borel transform of $\tilde{\varphi}_\pm(y)$ can be written as the integral
\be \label{eq: intBorelinfty}
\begin{aligned}
    \borel \left[\tilde{\varphi}_\pm \right](\zeta) &=\frac{1}{2\pi \ri}\int_\gamma g(s)\, f_\pm\left(\frac{\zeta}{s}\right) \frac{ds}{s} \\
    &= \frac{1}{2}\int_\gamma \left(\frac{1}{s}-\frac{1}{2}\coth\left(\frac{s}{2}\right) \right)\left(\frac{1}{1+\re^{\pm\frac{\pi \ri}{3}-2\pi \ri \frac{\zeta}{s}}}+\frac{1}{1+\re^{\pm\frac{\pi \ri}{3}+2\pi \ri \frac{\zeta}{s}}}\right) \frac{ds}{s^2} \, ,
\end{aligned}
\ee
for $|\zeta|<r/3$. For such values of $\zeta$, the function $s \mapsto f_\pm(\zeta/s)$ has singular points at $s = \zeta/\nu^{(\pm, i)}_k$, $k \in \IZ$, which sit inside the contour of integration $\gamma$ and accumulate at the origin, and no singularities for $|s| > r$. The function $g(s)$ has simple poles at the points $s= \mu_m$ with residues
\be
\underset{s= \pm 2 \pi \ri m}{\text{Res}} g(s) = \pm \frac{1}{2 \pi \ri} \, , \quad m \in \IZ_{> 0} \, . 
\ee
By Cauchy's residue theorem, the integral in Eq.~\eqref{eq: intBorelinfty} can be evaluated by summing the residues at the poles of the integrand which lie outside $\gamma$, allowing us to express the Borel transform as an exact function of $\zeta$. More precisely, we find that
\be \label{eq: intBorel2infty-proof}
\begin{aligned}
    \borel \left[\tilde{\varphi}_\pm \right](\zeta)&=-\sum_{m\in\IZ_{\neq 0}} \underset{s=2 \pi \ri m}{\text{Res}} \left[ g(s) f_{\pm}\left(\frac{\zeta}{s}\right)\frac{1}{s}\right] \\
    &=\sum_{m\in\IZ_{\neq 0}}\frac{1}{4 \pi \ri m^2}\left(\frac{1}{1+\re^{\pm\frac{\pi \ri}{3}-\frac{\zeta}{m}}}+\frac{1}{1+\re^{\pm\frac{\pi \ri}{3}+ \frac{\zeta}{m}}}\right)\\
    &=\frac{1}{2 \pi \ri} \sum_{m=1}^{\infty}\frac{1}{m^2}\left(\frac{1}{1+\re^{\pm\frac{\pi \ri}{3}-\frac{\zeta}{m}}}+\frac{1}{1+\re^{\pm\frac{\pi \ri}{3}+ \frac{\zeta}{m}}}\right) \, ,
\end{aligned}
\ee
which is a well-defined, exact function of $\zeta$ for $|\zeta| < 2 \pi /3$.
In fact, the Borel transforms in Eq.~\eqref{eq: intBorel2infty-proof} converge when $\frac{\zeta}{2 \pi \ri} \neq \frac{m}{3} + n m$ and $\frac{\zeta}{2 \pi \ri} \neq \frac{2 m}{3} + n m$  with $m \in \IZ_{\ne 0}$ and $n\in\IZ$, respectively.\footnote{The convergence of the infinite sum in the RHS of Eq.~\eqref{eq: intBorel2infty-proof} can be easily verified by, \emph{e.g.}, the limit comparison test. Indeed, where defined, the generic term of the series is dominated by $1/m^2$.}

We can now compute the Laplace transform of $\borel \left[\tilde{\varphi}_\pm \right](\zeta)$ along the positive and negative real axes. When $\Re (y), \Im (y)>0$, we have that
\be
\begin{aligned}
    s_0(\tilde{\varphi}_\pm)(y)&=\frac{1}{2\pi \ri}\int_0^\infty \re^{-\zeta/y} \left[\sum_{m=1}^{\infty}\frac{1}{m^2}\left(\frac{1}{1+\re^{\pm\frac{\pi \ri}{3}-\frac{\zeta}{m}}}+\frac{1}{1+\re^{\pm\frac{\pi \ri}{3}+ \frac{\zeta}{m}}}\right)\right] d\zeta \\
    &=\frac{1}{2\pi \ri}\int_0^\infty \left(\sum_{m=1}^{\infty}\frac{\re^{-mt/y}}{m} \right) \left(\frac{1 }{1+\re^{\pm\frac{\pi \ri}{3}-t}}+\frac{1}{1+\re^{\pm\frac{\pi \ri}{3}+t}}\right) dt \, ,
\end{aligned}
\ee
where we have applied the change of variable $\zeta = m t$. Resumming the series in $m$, we find that
\be \label{eq: sum0-proof1}
\begin{aligned}
    s_0(\tilde{\varphi}_\pm)(y)
    &=-\frac{1}{2\pi \ri}\int_0^\infty \log(1-\re^{-t/y})\left(\frac{1}{1+\re^{\pm\frac{\pi i}{3}-t}}+\frac{1}{1+\re^{\pm\frac{\pi \ri}{3}+t}}\right) dt \\
    &=-\frac{1}{2\pi \ri}\int_{-\infty}^\infty \frac{\log(1-\re^{-t/y})}{1+\re^{\pm\frac{\pi \ri}{3}-t}} dt +\frac{1}{2\pi \ri}\int_0^{-\infty}\frac{\log(-\re^{t/y})}{1+\re^{\pm\frac{\pi \ri}{3}-t}} dt \, ,
\end{aligned}
\ee
where we have divided the integral into the sum of two contributions for simplicity.
Indeed, we observe that the second term in the RHS of Eq.~\eqref{eq: sum0-proof1} gives
\be \label{eq: sum0-proof1-p1}
\frac{1}{2\pi \ri}\int_0^{-\infty}\frac{\log(-\re^{t/y})}{1+\re^{\pm\frac{\pi \ri}{3}-t}} dt = -\frac{\mathrm{Li}_2(w^{\pm 1})}{2\pi \ri y}-\frac{1}{2}\log(1-w^{\pm 1}) \, ,
\ee
while the first term gives
\begin{subequations}
    \begin{align}
        -\frac{1}{2\pi \ri}\int_{-\infty}^\infty \frac{\log(1-\re^{-t/y})}{1+\re^{\frac{\pi \ri}{3}-t}} dt &= \log\frac{(w; \, \re^{2\pi \ri y})_\infty}{(\re^{-4\pi \ri/(3y)}; \, \re^{-2\pi \ri/y})_\infty} \, , \label{eq: sum0-proof1-p2a}\\
        -\frac{1}{2\pi \ri}\int_{-\infty}^\infty \frac{\log(1-\re^{-t/y})}{1+\re^{-\frac{\pi \ri}{3}-t}} dt &= \log\frac{(w^{-1}; \, \re^{2\pi \ri y})_\infty}{(\re^{-2\pi \ri/(3y)}; \, \re^{-2\pi \ri/y})_\infty}\, . \label{eq: sum0-proof1-p2b}
    \end{align}
\end{subequations}
The last two formulae follow from~\cite[Theorem~A-29]{wheeler-thesis} as shown in~\cite[Eq.~(5.108)]{wheeler-book}.\footnote{We thank C. Wheeler for sharing the preliminary version of his book in preparation~\cite{wheeler-book}.}
Therefore, substituting Eqs.~\eqref{eq: sum0-proof1-p1},~\eqref{eq: sum0-proof1-p2a}, and~\eqref{eq: sum0-proof1-p2b} into the formula for the Borel--Laplace sum in Eq.~\eqref{eq: sum0-proof1}, we find that 
\begin{subequations}
    \begin{align}
    s_0(\tilde{\varphi}_+)(y)=& -\frac{\mathrm{Li}_2(w)}{2\pi \ri y}-\frac{1}{2}\log(1-w) +\log\frac{(w; \, \re^{2\pi \ri y})_\infty}{(\re^{-4\pi \ri/(3y)}; \, \re^{-2\pi \ri/y})_\infty} \, , \label{eq: sum0-proof1-p3a} \\
    s_0(\tilde{\varphi}_-)(y)=& -\frac{\mathrm{Li}_2(w^{-1})}{2\pi \ri y}-\frac{1}{2}\log(1-w^{-1}) 
    + \log\frac{(w^{-1}; \, \re^{2\pi \ri y})_\infty}{(\re^{-2\pi \ri/(3y)}; \, \re^{-2\pi \ri/y})_\infty} \, . \label{eq: sum0-proof1-p3b}
    \end{align}
\end{subequations}
Now, it follows directly from Eqs.~\eqref{eq: sum0-proof-start},~\eqref{eq: sum0-proof1-p3a}, and~\eqref{eq: sum0-proof1-p3a} that
\be
\begin{aligned}
    s_0(\tilde{f}_0)(y)&= -\pi \ri -3 \log\frac{(w; \, \re^{2\pi \ri y})_\infty}{(\re^{-4\pi \ri/(3y)}; \, \re^{-2\pi \ri/y})_\infty} +3 \log\frac{(w^{-1}; \, \re^{2\pi \ri y})_\infty}{(\re^{-2\pi \ri/(3y)}; \, \re^{-2\pi \ri/y})_\infty} \\
    &=f_0(y)+f_\infty(-1/3y) \, ,
\end{aligned}
\ee
where we have applied the closed formulae for the generating functions in Eqs.~\eqref{eq:f_0-closed} and~\eqref{eq:f_inf-closed}.

Similarly, when $\Re (y)<0$ and $\Im(y)>0$, the Borel--Laplace sum of the formal power series $\tilde{\varphi}_{\pm}(y)$ along the negative real axis is given by
\be
\begin{aligned}
    s_\pi(\tilde{\varphi}_\pm)(y)&=\frac{1}{2\pi \ri}\int_0^{-\infty} \re^{-\zeta/y} \left[\sum_{m=1}^{\infty}\frac{1}{m^2}\left(\frac{1}{1+\re^{\pm\frac{\pi \ri}{3}-\frac{\zeta}{m}}}+\frac{1}{1+\re^{\pm\frac{\pi \ri}{3}+ \frac{\zeta}{m}}}\right) \right] d\zeta \\
    &=\frac{1}{2\pi \ri}\int_0^{-\infty} \left(\sum_{m=1}^{\infty}\frac{\re^{-mt/y}}{m} \right) \left(\frac{1 }{1+\re^{\pm\frac{\pi \ri}{3}-t}}+\frac{1}{1+\re^{\pm\frac{\pi \ri}{3}+t}}\right) dt \, ,
\end{aligned}
\ee
where we have applied the change of variable $\zeta = m t$. Resumming the series in $m$, we find that
\be \label{eq: sumpi-proof1}
\begin{aligned}
    s_\pi(\tilde{\varphi}_\pm)(y)
    &=-\frac{1}{2\pi \ri}\int_0^{-\infty} \log(1-\re^{-t/y})\left(\frac{1}{1+\re^{\pm\frac{\pi i}{3}-t}}+\frac{1}{1+\re^{\pm\frac{\pi \ri}{3}+t}}\right) dt \\
    &=\frac{1}{2\pi \ri}\int_0^{\infty} \log(1-\re^{t/y})\left(\frac{1}{1+\re^{\pm\frac{\pi i}{3}+t}}+\frac{1}{1+\re^{\pm\frac{\pi \ri}{3}-t}}\right) dt \\
    &=\frac{1}{2\pi \ri}\int_{-\infty}^\infty \frac{\log(1-\re^{t/y})}{1+\re^{\pm\frac{\pi \ri}{3}-t}} dt -\frac{1}{2\pi \ri}\int_0^{-\infty}\frac{\log(-\re^{-t/y})}{1+\re^{\pm\frac{\pi \ri}{3}-t}} dt \, ,
\end{aligned}
\ee
where we have divided the integral into the sum of two contributions for simplicity. 
Let us now set $x=-1/y$, so that $\Re(x)>0$ and $\Im (x)>0$, then the second term in the RHS of Eq.~\eqref{eq: sumpi-proof1} becomes
\be \label{eq: sumpi-proof1-p1}
-\frac{1}{2\pi \ri}\int_0^{-\infty}\frac{\log(-\re^{t x})}{1+\re^{\pm\frac{\pi \ri}{3}-t}} dt = \frac{x} {2\pi \ri}\mathrm{Li}_2(w^{\pm 1}) + \frac{1}{2}\log(1-w^{\pm 1}) \, ,
\ee
while the first term gives
\begin{subequations}
    \begin{align}
        \frac{1}{2\pi \ri}\int_{-\infty}^\infty \frac{\log(1-\re^{-tx})}{1+\re^{\frac{\pi \ri}{3}-t}} dt &= -\log\frac{(w; \, \re^{2\pi \ri /x})_\infty}{(\re^{-4\pi \ri x/3}; \, \re^{-2\pi \ri x})_\infty} \, , \label{eq: sumpi-proof1-p2a}\\
        \frac{1}{2\pi \ri}\int_{-\infty}^\infty \frac{\log(1-\re^{-tx})}{1+\re^{-\frac{\pi \ri}{3}-t}} dt &= -\log\frac{(w^{-1}; \, \re^{2\pi \ri /x})_\infty}{(\re^{-2\pi \ri x/3}; \, \re^{-2\pi \ri x})_\infty}\, ,\label{eq: sumpi-proof1-p2b}
    \end{align}
\end{subequations}
following~\cite[Eq.~(5.108)]{wheeler-book}.
Therefore, substituting Eqs.~\eqref{eq: sumpi-proof1-p1},~\eqref{eq: sumpi-proof1-p2a}, and~\eqref{eq: sumpi-proof1-p2b}, after we change back $x$ to $-1/y$, into the formula for the Borel--Laplace sum in Eq.~\eqref{eq: sumpi-proof1}, we find that 
\begin{subequations}
    \begin{align}
    s_\pi(\tilde{\varphi}_+)(y)=& -\frac{\mathrm{Li}_2(w)}{2\pi \ri y}+\frac{1}{2}\log(1-w) -\log\frac{(w; \, \re^{-2\pi \ri y})_\infty}{(\re^{4\pi \ri/(3y)}; \, \re^{2\pi \ri/y})_\infty} \, , \label{eq: sumpi-proof1-p3a} \\
    s_\pi(\tilde{\varphi}_-)(y)=& -\frac{\mathrm{Li}_2(w^{-1})}{2\pi \ri y}+\frac{1}{2}\log(1-w^{-1}) 
    - \log\frac{(w^{-1}; \, \re^{-2\pi \ri y})_\infty}{(\re^{2\pi \ri/(3y)}; \, \re^{2\pi \ri/y})_\infty} \, . \label{eq: sumpi-proof1-p3b}
    \end{align}
\end{subequations}
Putting together Eqs.~\eqref{eq: sum0-proof-start},~\eqref{eq: sumpi-proof1-p3a}, and~\eqref{eq: sumpi-proof1-p3b} yields
\be \label{eq: sumpi-proof2}
\begin{aligned}
    s_\pi(\tilde{f}_0)(y)&=3 \log\frac{(w; \, \re^{-2\pi \ri y})_\infty}{(\re^{4\pi \ri/(3y)}; \, \re^{2\pi \ri/y})_\infty} -3 \log\frac{(w^{-1}; \, \re^{-2\pi \ri y})_\infty}{(\re^{2\pi \ri/(3y)}; \, \re^{2\pi \ri/y})_\infty} \\
    &=-3\log\frac{(w \, \re^{2\pi \ri y}; \, \re^{2\pi \ri y})_\infty}{(w^{-1} \, \re^{2\pi \ri y}; \, \re^{2\pi \ri y})_\infty} - 3\log\frac{(\re^{4\pi \ri/(3y)}; \, \re^{2\pi \ri/y})_\infty}{(\re^{2\pi \ri/(3y)}; \, \re^{2\pi \ri/y})_\infty} \, ,
\end{aligned}
\ee
where we have applied the relation in Eq.~\eqref{eq: qdilog-rel}. Let us now recall that
\be
(w^{\pm 1} \, \re^{2\pi \ri y}; \, \re^{2\pi \ri y})_\infty = \left(1- w^{\pm 1} \right)^{-1} (w^{\pm 1}; \, \re^{2\pi \ri y})_\infty  \, ,
\ee
which we substitute into Eq.~\eqref{eq: sumpi-proof2} to get
\be
\begin{aligned}
    s_\pi(\tilde{f}_0)(y)&=-\pi \ri -3\log\frac{(w; \, \re^{2\pi \ri y})_\infty}{(w^{-1}; \, \re^{2\pi \ri y})_\infty} - 3\log\frac{(\re^{4\pi \ri/(3y)}; \, \re^{2\pi \ri/y})_\infty}{(\re^{2\pi \ri/(3y)}; \, \re^{2\pi \ri/y})_\infty} \\
    &=f_0(y)-f_\infty(1/3y) \\
    &=f_0(y)-f_\infty(-1/3y) \, ,
\end{aligned}
\ee
where we have first applied the closed formulae for the generating functions in Eqs.~\eqref{eq:f_0-closed} and~\eqref{eq:f_inf-closed} and then used the parity property in Eq.~\eqref{eq:f_inf-symm}.\footnote{We note that the proof in the case of $\Re(y)>0$ can be adapted to allow for an arbitrary choice of $w=\exp(z)$, $z\in\IC^*$. However, the proof in the case of $\Re(y)<0$ explicitly uses the symmetry properties of the Stokes constants and their generating functions, which cannot be straightforwardly generalized.}
\end{proof}

We can now prove that the median resummation of the asymptotic expansion $\tilde{f}_0(y)$ reproduces the generating function $f_0(y)$ itself.
\begin{theorem}\label{thm:median-sum-Sinf}
    Let $\tilde{f}_0(y)$ be the asymptotic expansion in the limit $y\to 0$ of the generating function $f_0(y)$ in Eq.~\eqref{eq:f_0} with $\Im(y)>0$, which we have written explicitly in Eq.~\eqref{eq:asymp-f0}. Then, the median resummation of $\tilde{f}_0(y)$ reconstructs the original function $f_0(y)$, that is,
    \begin{equation}\label{eqn:median-sum-inf}
        \CS^\mathbf{med}_{\frac{\pi}{2}}\tilde{f}_0(y)=f_0(y) \, , \quad y\in\IH \, .
    \end{equation}
\end{theorem}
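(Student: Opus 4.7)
The plan is to recognize the median resummation along the Stokes direction $\theta=\pi/2$ as the arithmetic mean of the two lateral Borel--Laplace sums, and then to invoke Lemma~\ref{lemma:summability-logPhiNC} so that the non-perturbative pieces cancel pairwise, leaving precisely $f_0(y)$.

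First I would note that the Borel transform of the power-series part of $\tilde{f}_0$, obtained from Eq.~\eqref{eq: f0-psi} and the computation in the proof of Proposition~\ref{prop:Borel_sum_inf}, has simple poles located only along the imaginary axis at $\zeta=\eta_n=\CA_\infty\ri n$, $n\in\IZ_{\neq 0}$. Hence $\theta=\pi/2$ is the unique Stokes direction met by the rays running into the upper half of the Borel plane. By Cauchy's theorem applied in any sector free of singularities, the two lateral Borel--Laplace sums immediately below and above this Stokes line coincide with the sums at angles $0$ and $\pi$ already computed in Corollary~\ref{cor:BL-f0_R}, namely
\begin{equation*}
s_{(\pi/2)^-}(\tilde{f}_0)=s_0(\tilde{f}_0),\qquad s_{(\pi/2)^+}(\tilde{f}_0)=s_\pi(\tilde{f}_0),
\end{equation*}
holomorphic on $\Re(y)>0$ and $\Re(y)<0$, respectively.

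Next, Lemma~\ref{lemma:summability-logPhiNC} provides the closed-form identifications
\begin{equation*}
s_0(\tilde{f}_0)(y)=f_0(y)+f_\infty\!\left(-\tfrac{1}{3y}\right),\qquad s_\pi(\tilde{f}_0)(y)=f_0(y)-f_\infty\!\left(-\tfrac{1}{3y}\right),
\end{equation*}
valid on their respective half-planes of definition. Crucially, both right-hand sides extend to holomorphic functions on the whole upper half-plane $\IH$: indeed, for $y\in\IH$ one has $|\re^{2\pi\ri y}|<1$ and $\Im(-1/(3y))>0$ so that $|\re^{-2\pi\ri/(3y)}|<1$, hence the closed formulas in Eqs.~\eqref{eq:f_0-closed} and~\eqref{eq:f_inf-closed} show that $y\mapsto f_0(y)$ and $y\mapsto f_\infty(-1/(3y))$ are both holomorphic on $\IH$. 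The identity theorem therefore promotes each of the two identities above to all of $\IH$, as the analytic continuations of the corresponding lateral sums.

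It remains to average: by the definition of median resummation along the Stokes line $\pi/2$ recalled in Appendix~\ref{app: resurgence},
\begin{equation*}
\CS^{\mathbf{med}}_{\pi/2}(\tilde{f}_0)(y)=\tfrac{1}{2}\bigl(s_{(\pi/2)^-}(\tilde{f}_0)(y)+s_{(\pi/2)^+}(\tilde{f}_0)(y)\bigr)=f_0(y),\qquad y\in\IH,
\end{equation*}
where the non-perturbative contributions $\pm f_\infty(-1/(3y))$ cancel. The substantive work has already been carried out in Lemma~\ref{lemma:summability-logPhiNC}; the only delicate point here is the joint analytic continuation of the two lateral sums to the whole of $\IH$, so that their arithmetic mean is a bona fide holomorphic function on $\IH$, which I expect to be the main (and only) obstacle.
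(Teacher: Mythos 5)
Your proof is correct and follows essentially the same route as the paper: all the substance is delegated to Lemma~\ref{lemma:summability-logPhiNC}, and the only remaining step is to unpack the definition of the median resummation so that the $\pm f_\infty\big(-\tfrac{1}{3y}\big)$ terms cancel. The paper uses the equivalent form in Eq.~\eqref{eq: median2} together with the identification $\mathrm{disc}_{\frac{\pi}{2}}\tilde{f}_0(y)=-2f_\infty\big(-\tfrac{1}{3y}\big)$, whereas you average the two lateral sums directly and extend the identities to all of $\IH$ by contour rotation and the identity theorem — a purely cosmetic difference.
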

\begin{proof}
    Following the definition in Eq.~\eqref{eq: median2}, the median resummation of $\tilde{f}_0(y)$ along the positive imaginary axis is given by
    \begin{equation} \label{eq: resum-proof-zero}
               \mathcal{S}_{\frac{\pi}{2}}^{\mathbf{med}}\tilde{f}_0(y)=
    \begin{cases}
        s_0(\tilde{f}_0)(y)+\frac{1}{2}\,\mathrm{disc}_{\frac{\pi}{2}}\tilde{f}_0(y) \, , \quad & \Re(y)>0 \, ,\\
        & \\
        s_\pi(\tilde{f}_0)(y)-\frac{1}{2}\,\mathrm{disc}_{\frac{\pi}{2}}\tilde{f}_0(y) \, , \quad & \Re(y)<0 \, ,
    \end{cases}
    \end{equation}
    where we take $\Im(y)>0$. As a consequence of Eqs.~\eqref{eq: finftydisc} and~\eqref{eq: f0-psi}, we have that
    \begin{equation}
        f_\infty\left(-\tfrac{1}{3y}\right) = -\frac{1}{2} \mathrm{disc}_{\frac{\pi}{2}}\tilde{f}_0(y) \, , 
    \end{equation}
    and substituting this in Eq.~\eqref{eq: resum-proof-zero} yields
    \begin{equation}
    \begin{aligned}
               \mathcal{S}_{\frac{\pi}{2}}^{\mathbf{med}}\tilde{f}_0(y)=\begin{cases}
        s_0(\tilde{f}_0)(y)-\, {f}_\infty\big(-\tfrac{1}{3y}\big) \, , \quad
& \Re(y)>0 \, , \\
        & \\
      s_\pi(\tilde{f}_0)(y)+\, {f}_\infty\big(-\tfrac{1}{3y}\big)  \, , \quad & \Re(y)<0 \, , 
    \end{cases} 
        \end{aligned}
    \end{equation}
    where again $\Im(y)>0$. Then, the conclusion follows from Lemma~\ref{lemma:summability-logPhiNC}.
\end{proof}

Let us now consider the strong coupling asymptotic expansion $\tilde{f}_\infty(y)$ in Eq.~\eqref{eq:asymp-finf}. Despite its similarities with the dual formal series $\tilde{f}_0(y)$ in Eq.~\eqref{eq:asymp-f0}, the analog to the analytic argument of Lemma~\ref{lemma:summability-logPhiNC} is still missing. Therefore, relying on the support of numerical tests, we limit ourselves to conjecture the effectiveness of the median resummation for the generating functions of the strong coupling Stokes constants. 
\begin{conjecture}\label{conj:median-sum-S0}
    Let $\tilde{f}_\infty(y)$ be the asymptotic expansion in the limit $y\to 0$ of the generating function $f_\infty(y)$ in Eq.~\eqref{eq:f_inf} with $\Im(y)>0$, which we have written explicitly in Eq.~\eqref{eq:asymp-finf}. Then, the median resummation of $\tilde{f}_\infty(y)$ reconstructs the original function $f_\infty(y)$, that is,
    \begin{equation}\label{eqn:median-sum0}
        \CS^\mathbf{med}_{\frac{\pi}{2}}\tilde{f}_\infty(y)=f_\infty(y) \, , \quad y\in\IH \, .
    \end{equation}
\end{conjecture}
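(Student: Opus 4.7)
The plan is to closely mirror the proof of Theorem~\ref{thm:median-sum-Sinf}. By Eq.~\eqref{eq: finf-phi} we have $\tilde{f}_\infty(y) = -3\log\tfrac{\Gamma(2/3)}{\Gamma(1/3)} - \log(-6\pi\ri y) + 2\phi(2\pi y)$, and since the first two summands carry no discontinuity across $\ri\IR_{>0}$, Eq.~\eqref{eq: f0disc} gives
\be \label{eq: disc-finf-proposal}
\mathrm{disc}_{\frac{\pi}{2}}\tilde{f}_\infty(y) = 2\,\mathrm{disc}_{\frac{\pi}{2}}\phi(2\pi y) = 2 f_0\bigl(-\tfrac{1}{3y}\bigr) \, .
\ee
Using the definition of median resummation from the proof of Theorem~\ref{thm:median-sum-Sinf}, it is then enough to establish the strong-coupling analogue of Lemma~\ref{lemma:summability-logPhiNC}, namely
\begin{subequations}\label{eq: lemma-inf-target}
\begin{align}
s_0(\tilde{f}_\infty)(y) &= f_\infty(y) - f_0\bigl(-\tfrac{1}{3y}\bigr) \, , \quad \Re(y)>0 \, , \\
s_\pi(\tilde{f}_\infty)(y) &= f_\infty(y) + f_0\bigl(-\tfrac{1}{3y}\bigr) \, , \quad \Re(y)<0 \, .
\end{align}
\end{subequations}
Granted Eq.~\eqref{eq: lemma-inf-target}, the median resummation for $\Re(y)>0$ reads $\CS^\mathbf{med}_{\frac{\pi}{2}}\tilde{f}_\infty(y) = s_0(\tilde{f}_\infty)(y) + \tfrac{1}{2}\mathrm{disc}_{\frac{\pi}{2}}\tilde{f}_\infty(y) = f_\infty(y)$, and the symmetric computation covers $\Re(y)<0$.

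To establish Eq.~\eqref{eq: lemma-inf-target}, I would adapt the Hadamard-product strategy of the proof of Lemma~\ref{lemma:summability-logPhiNC}. Starting from the closed form in Eq.~\eqref{eq:f_inf-closed} and applying the semiclassical expansion of the quantum dilogarithm $(\re^{2\pi \ri y \alpha};\re^{2\pi\ri y})_\infty$ at $\alpha=1/3,2/3$, the perturbative tail of $\tilde{f}_\infty(y)$ can be rewritten as a difference of two Hadamard-type formal series. One factor is the Bernoulli generating function $g(\zeta)=-\zeta^{-2} + \tfrac{1}{2\zeta}\coth(\zeta/2)$ already used in the proof of Lemma~\ref{lemma:summability-logPhiNC}. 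The other is a polylogarithm-type generating series obtained through the standard identification of $B_{2n+1}(\alpha)$ for $\alpha\in\{1/3,2/3\}$ with a linear combination of $\mathrm{Li}_{-2n}(w^{\pm 1})$ at the cube roots of unity $w^{\pm 1}=\re^{\pm 2\pi\ri/3}$. Hadamard's multiplication theorem then realises the Borel transform as a contour integral which I would evaluate by summing residues of $g(s)$ at its simple poles $\mu_m = 2\pi\ri m$, $m\in\IZ_{\ne 0}$, thereby producing an explicit exact expression for $\hat{\tilde{f}}_\infty(\zeta)$ on a punctured disc around the origin.

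I would then perform the Laplace transform along $\IR_{>0}$, respectively $\IR_{<0}$, exactly as in Eqs.~\eqref{eq: sum0-proof1} and~\eqref{eq: sumpi-proof1}. After exchanging sum and integral and resumming a geometric series, the computation reduces to integrals of the type $\int_{-\infty}^{\infty} \log(1-\re^{\mp t/y})\bigl(1+\re^{\pm\pi\ri/3-t}\bigr)^{-1}dt$, which should be recast in terms of quantum dilogarithms via an integral identity in the spirit of~\cite[Theorem~A-29]{wheeler-thesis} and~\cite[Eq.~(5.108)]{wheeler-book}. The goal is to identify the $q$-series block of $f_\infty(y)$ with one contribution and the $\tilde{q}$-series block of $f_0(-1/(3y))$ with the other.

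The hard part will be this final integral-identification step. In Lemma~\ref{lemma:summability-logPhiNC}, the Wheeler integral formula produced holomorphic blocks of the shape $(w^{\pm 1};\re^{2\pi\ri y})_\infty$ paired with anti-holomorphic blocks $(\re^{-2\pi\ri\alpha/y};\re^{-2\pi\ri/y})_\infty$. For $\tilde{f}_\infty(y)$, the target combination $f_\infty(y)\mp f_0(-1/(3y))$ swaps the roles of the two blocks: now the $q$-factors carry the parameter $\alpha=1/3,2/3$ inside their argument rather than in the base. Consequently the Wheeler identity does not apply in the form needed, and one must either prove a dual integral identity tailored to this swap of holomorphic and anti-holomorphic blocks, or invoke an extra ingredient such as the Fricke-type involution that underlies the quantum modularity results of Section~\ref{sec:summability-and-QM}. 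It is precisely this missing identity which prevents the upgrade of Conjecture~\ref{conj:median-sum-S0} to a theorem in the present paper; once it is established, the rest of the proof goes through along the template above essentially verbatim.
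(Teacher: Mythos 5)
There is no proof of this statement in the paper to compare against: the authors explicitly present it as a conjecture, supported only by numerical tests, precisely because "the analogue to the analytic argument of Lemma~\ref{lemma:summability-logPhiNC} is still missing." Your proposal does not close that gap, and you say so yourself, so what you have written is a correct diagnosis rather than a proof. The parts you do carry out are sound: the identity $\mathrm{disc}_{\frac{\pi}{2}}\tilde{f}_\infty(y)=2f_0\bigl(-\tfrac{1}{3y}\bigr)$ follows from Eqs.~\eqref{eq: f0disc} and~\eqref{eq: finf-phi} exactly as you state, and your reduction of the conjecture to the pair of lateral Borel--Laplace identities $s_0(\tilde{f}_\infty)(y)=f_\infty(y)-f_0\bigl(-\tfrac{1}{3y}\bigr)$ and $s_\pi(\tilde{f}_\infty)(y)=f_\infty(y)+f_0\bigl(-\tfrac{1}{3y}\bigr)$ is precisely the equivalence the paper records in the remark following the conjecture (Eqs.~\eqref{eq: BL-inf-0} and~\eqref{eq: BL-inf-pi}).

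The genuine gap is the one you name: establishing those two identities. Note that the left-hand sides are already known exactly — Corollary~\ref{cor:BL-finf_S} expresses $s_0(\tilde{f}_\infty)$ and $s_\pi(\tilde{f}_\infty)$ as sums of exponential integrals weighted by the weak-coupling Stokes constants $S_m$ — so the missing step is not the Borel--Laplace computation itself but the identification of $-2\sum_{m}S_m\,\e\bigl(-\tfrac{m}{3y}\bigr)$ (plus the elementary prefactors) with the combination $f_\infty(y)\mp f_0\bigl(-\tfrac{1}{3y}\bigr)$ of quantum dilogarithms. Your observation that the Wheeler-type integral identity used in Lemma~\ref{lemma:summability-logPhiNC} does not transfer, because the roles of the holomorphic and anti-holomorphic blocks are swapped (the parameter $\alpha=1/3,2/3$ now sits in the argument of the $q$-Pochhammer symbol rather than in the base), is an accurate account of the obstruction. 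Until that dual integral identity is proved, the proposal remains an outline of a strategy, matching the status the statement has in the paper.
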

\begin{rmk}
    As a consequence Eqs.~\eqref{eq: f0disc} and~\eqref{eq: finf-phi}, we have that  
    \begin{equation}
        f_0\left(-\tfrac{1}{3y}\right) = \frac{1}{2} \mathrm{disc}_{\frac{\pi}{2}}\tilde{f}_\infty(y) \, , \quad y \in \IH \, . 
    \end{equation}
    Substituting this into the definition of the median resummation in Eq.~\eqref{eq: median2} for the asymptotic expansion $\tilde{f}_\infty(y)$ at the angle $\pi\ri/2$, Conjecture~\ref{conj:median-sum-S0} straightforwardly implies that the Borel--Laplace sums along the positive and negative real axes of $\tilde{f}_\infty(y)$ reproduce the original function $f_\infty(y)$ with a correction that is suitably encoded in the dual function $f_0(y)$, and vice versa. More precisely,
    \begin{subequations}
        \begin{align}
            s_0(\tilde{f}_\infty)(y)&= f_\infty(y) -f_0\big(-\tfrac{1}{3y}\big) \, , \quad \Re(y)>0 \, , \label{eq: BL-inf-0}\\
            s_\pi(\tilde{f}_\infty)(y)&=f_\infty(y) +f_0\big(-\tfrac{1}{3y}\big) \, , \quad \Re(y)<0 \, , \label{eq: BL-inf-pi}
        \end{align}
    \end{subequations}
    which represent the dual statement to Lemma~\ref{lemma:summability-logPhiNC}, are equivalent to Eq.~\eqref{eqn:median-sum0}.
\end{rmk}

Theorem~\ref{thm:median-sum-Sinf} and Conjecture~\ref{conj:median-sum-S0} can be placed within the context of the strong-weak resurgent symmetry discussed in Section~\ref{sec: strong-weak} as they provide additional arrows in the commutative diagram in Eq.~\eqref{diag: strong-weak2}. In particular, taking the median resummation of the asymptotic expansions represents the formal inverse of perturbatively expanding the generating functions. 
We show below the resulting completed diagram in terms of $f_0$, $f_\infty$ and their asymptotic expansions. 
\begin{equation} \label{diag: strong-weak-median}
\begin{tikzcd}[column sep = 2.8em, row sep=2.8em]
 \arrow[ddd] f_\infty(y) \arrow[rrr,shift left=0.5ex, "y \rightarrow 0"] & & & \arrow[lll, blue,dashed,shift left=0.5ex, "\text{median resummation}"] \tilde{f}_\infty(y) \arrow[sloped]{ddd}{\text{strong-weak}}[swap]{\text{symmetry}}
 \\ \\ \\
  \arrow[sloped]{uuu}{\text{strong-weak}}[swap]{\text{symmetry}} \tilde{f}_0(y) \arrow[rrr,blue,shift left=0.5ex, "\text{median resummation}"] & & & \arrow[uuu]  f_0(y) \arrow[lll,shift left=0.5ex, "y \rightarrow 0"]
\end{tikzcd}
\end{equation}
Note that all four arrows tracing the edges of the box are now invertible. We display Conjecture~\ref{conj:median-sum-S0} with a dashed arrow to distinguish it from the proven results that compose the rest of the diagram.

\begin{rmk}
The full content of the $q$, $\tilde{q}$-series in the block factorization of the spectral trace of local $\IP^2$ cannot be reconstructed by a simple Borel--Laplace resummation. Yet, and remarkably, the missing information is not lost. Instead, it is collected by the discontinuities that are then recovered by the median resummation.
The effectiveness of the median resummation as a summability tool is known in quantum Chern--Simons theory, as it was conjectured for the quantum invariants of knots and $3$-manifolds in~\cite[Conjecture~1.1]{costin-garoufalidis} and proven for the special cases of the trefoil knot~\cite{costin-garoufalidis} and the Seifert fibered homology sphere~\cite{qCS7}. We expect more evidence to be discovered.
Furthermore, as we discuss in the companion paper~\cite{FR1maths}, the effectiveness of the median resummation appears to be closely related to quantum modularity. 
\end{rmk}

\subsection{Quantum modularity of the generating functions}\label{sec:quantum-resurgent}
We will now present our results on the quantum modularity properties of the generating functions of the Stokes constants in both $\hbar$-regimes, that is, the periodic holomorphic functions $f_0(y)$ and $f_\infty(y)$ defined in Eqs.~\eqref{eq:f_0} and~\eqref{eq:f_inf}. Let us start our discussion by recalling the definition of a quantum modular form. We refer to~\cite[Chapter IV]{wheeler-thesis} for an overview. A first description of quantum modular forms was given by Zagier in~\cite{zagier_modular}. Namely, a function $f\colon\IQ_{\ne 0}\to\IC$ is called a weight-$\omega$ quantum modular form with respect to a subgroup $\Gamma\subseteq\mathsf{SL}_2(\IZ)$, where $\omega$ is a fixed integer or half-integer, if the cocycle
\begin{equation}\label{cocycle}
     h_\gamma[f](y):=(cy+d)^{-\omega} f\left(\frac{ay+b}{cy+d}\right) - f(y)
\end{equation} 
is \emph{better behaved} than $f(y)$ for all choices of $\gamma=\left(\begin{smallmatrix}
    a & b\\
    c & d
\end{smallmatrix}\right)\in\Gamma$. Conceptually, being better behaved means having better analyticity properties than the function $f$ itself---for instance, being a real analytic function over $\IR\setminus \{0,-d/c\}$. Quantum modular forms of weight zero are called quantum modular functions. Some examples of quantum modular forms are built from the usual holomorphic modular forms. In these cases, the cocycle in Eq.~\eqref{cocycle} is expressed in terms of the underlying modular form. See, \emph{e.g.},~\cite[Examples 0 to 4]{zagier_modular} and~\cite{bringmann-q-hyper,goswami2021quantum-theta,zagier2001vassiliev}. 

When the function $f$ is defined and analytic in the upper half-plane $\IH$, it already has good analyticity properties. In this case, the cocycle $h_\gamma[f]$ in Eq.~\eqref{cocycle} is required to be analytic in a domain larger than $\IH$---for instance, in $\IC'=\IC\setminus\IR_{\leq 0}$. The quantum modular forms in this class are called \emph{holomorphic quantum modular forms}. The following definition was proposed by Zagier~\cite[min 21:45]{zagier-talk}.
\begin{definition}
    An analytic function $f\colon\IH\to\IC$ is a weight-$\omega$ \emph{holomorphic quantum modular form} for a subgroup $\Gamma\subseteq\mathsf{SL}_2(\IZ)$, where $\omega$ is integer or half-integer, if the cocycle $h_\gamma[f]\colon\IH\to\IC$ in Eq.~\eqref{cocycle} extends holomorphically to\footnote{Note that $\IC_\gamma = \IC \setminus \left(-\infty; \, -d/c\right]$ when $c>0$ and $\IC_\gamma = \IC \setminus \left[-d/c; \, +\infty \right)$ when $c<0$. If $\gamma$ is the matrix $S = \left(\begin{smallmatrix}
        0 & -1 \\
        1 & 0
    \end{smallmatrix}\right)$, then $\IC_S= \IC'$.}
    \be
    \IC_\gamma:=\{y\in\IC\colon cy+d\in\IC'\}
    \ee
    for every $\gamma=\left(\begin{smallmatrix}
    a & b\\
    c & d
\end{smallmatrix}\right)\in\Gamma$.  
\end{definition}

Note that an analogous definition of holomorphic quantum modular form can be written for functions $f\colon\IH_{-}\to\IC$, where $\IH_{-}$ denotes the lower half of the complex plane. Remarkably, the resurgent analysis of the spectral trace of local $\IP^2$ in the weak and strong coupling regimes is a new source of examples of holomorphic quantum modular forms. As we will show in Theorem~\ref{thm:f0,finf-quantum modular}, the generating functions of the Stokes constants are holomorphic quantum modular functions for the congruence subgroup $\Gamma_1(3)$, which we have defined in Eq.~\eqref{eq: G13-definition}.   

Let us start by defining the holomorphic functions $F_R, F_S \colon \IH\to\IC$ as
\begin{subequations}
    \begin{align}
        F_R(y)&:=f_\infty\big(-\tfrac{1}{3y}\big)+f_0(y)\, , \label{eq: FR-def}\\
        F_S(y)&:=f_\infty(y)-f_0\big(-\tfrac{1}{3y}\big)\, . \label{eq: FS-def}
    \end{align}
\end{subequations}
Note that the function $F_R(y)$ is equal to the Borel--Laplace sum $s_0(\tilde{f}_0)(y)$ in Eq.~\eqref{eq: BL-zero-0} for $\Re(y)>0$, while the function $-F_S\big(-\tfrac{1}{3y}\big)$ is equal to the Borel--Laplace sum $s_\pi(\tilde{f}_0)(y)$ in Eq.~\eqref{eq: BL-zero-pi} for $\Re(y)<0$.  
They can, therefore, be expressed explicitly as functions of the strong coupling Stokes constants according to Eqs.~\eqref{eq: sumzero-cor-0} and~\eqref{eq: sumzero-cor-pi}, respectively. We find in this way that
\begin{subequations}
    \begin{align}
        F_R(y) &= -\frac{\pi \ri}{2}-\frac{3\mathcal{V}}{2\pi \ri y} +2\sum_{m\in\IZ_{\neq 0}} R_m\, \e\left(-\frac{m}{3y}\right) \, , \quad \Re(y)>0 \, , \label{eq: FR-Rn} \\
        F_S(y) &= \frac{\pi \ri}{2}-\frac{9\mathcal{V} y}{2\pi \ri} -2\sum_{m\in\IZ_{\neq 0}} R_m\, \e\left(my\right) \, , \quad \Re(y)>0\, ,
    \end{align}
\end{subequations}
where $\CV = 2 \Im\left(\mathrm{Li}_2(\re^{2 \pi \ri/3})\right)$ and $\e$ is the exponential integral in Eq.~\eqref{eq: e1-def} as before. Indeed, Theorem~\ref{thm:median-sum-Sinf} can be equivalently restated in terms of the functions above as
\be
\frac{F_R(y)-F_S\big(-\tfrac{1}{3y}\big)}{2} = f_0(y) \, , \quad y \in \IH \, . 
\ee
\begin{rmk}
    Assuming the validity of Conjecture~\ref{conj:median-sum-S0}, we can complete the above discussion as follows. We observe that the function $F_R\big(-\tfrac{1}{3y}\big)$ is equal to the Borel--Laplace sum $s_\pi(\tilde{f}_\infty)(y)$ in Eq.~\eqref{eq: BL-inf-pi} for $\Re(y)<0$, while the function $F_S(y)$ is equal to the Borel--Laplace sum $s_0(\tilde{f}_\infty)(y)$ in Eq.~\eqref{eq: BL-inf-0} for $\Re(y)>0$.  
    They can, therefore, be expressed explicitly as functions of the weak coupling Stokes constants according to Eqs.~\eqref{eq: suminf-cor-pi} and~\eqref{eq: suminf-cor-0}, respectively. Namely,
    \begin{subequations}
    \begin{align}
        F_R(y)&= -3\log\frac{\Gamma(2/3)}{\Gamma(1/3)}-\log(2\pi \ri /y)-2\sum_{m\in\IZ_{\neq 0}}S_m\, \e\left(my\right) \, , \quad \Re(y)>0 \,  , \\
        F_S(y)&= -3\log\frac{\Gamma(2/3)}{\Gamma(1/3)}-\log(-6\pi \ri y)-2\sum_{m\in\IZ_{\neq 0}}S_m\, \e\left(-\frac{m}{3y}\right) \, , \quad \Re(y)>0 \label{eq: FS-Sn} \, ,
    \end{align}
\end{subequations}
where $\e$ is again the exponential integral in Eq.~\eqref{eq: e1-def}.
Indeed, Conjecture~\ref{conj:median-sum-S0} is equivalent to
\be
\frac{F_S(y)+F_R\big(-\tfrac{1}{3y}\big)}{2} = f_\infty(y) \, , \quad y \in \IH \, . 
\ee
\end{rmk}
In addition, the functions $F_R(y)$ and $F_S(y)$ in Eqs.~\eqref{eq: FR-def} and~\eqref{eq: FS-def} can be written as the logarithm of a ratio of Faddeev's quantum dilogarithms. Indeed, setting $\mathsf{b}^2=y$, it follows from the infinite product representations in Eqs.~\eqref{eq:f_0-closed},~\eqref{eq:f_inf-closed}, and~\eqref{eq: seriesPhib} that
\begin{subequations}
\begin{align}
F_R(y)&=3\log\frac{\Phi_{\mathsf{b}}\big(\frac{2 \ri}{3\mathsf{b}}-c_\mathsf{b}\big)}{\Phi_{\mathsf{b}}(\frac{\ri}{3\mathsf{b}}-c_\mathsf{b})} - \ri\pi \label{eq: FR-phib} \, , \\
F_S(y/3)&=3\log\frac{\Phi_{\mathsf{b}}\big(\frac{2\ri\mathsf{b}}{3}-c_\mathsf{b}\big)}{\Phi_{\mathsf{b}}(\frac{\ri\mathsf{b}}{3}-c_\mathsf{b})} \label{eq: FS-phib} \, , 
\end{align}
\end{subequations} 
where $c_\mathsf{b}$ is defined in Eq.~\eqref{eq: cb-def}.
Thus, thanks to the analyticity properties of the Faddeev's quantum dilogarithm $\Phi_\mb(x)$ reviewed in Appendix~\ref{app: Faddeev}, $F_R(y)$ and $F_S(y)$ can be analytically continued to the whole cut complex plane $\IC'$.
We can now prove our statements on the quantum modularity of the generating functions. 
\begin{theorem}\label{thm:f0,finf-quantum modular}
   The generating functions $f_0,f_\infty\colon \IH\to\IC$, defined in Eqs.~\eqref{eq:f_0} and~\eqref{eq:f_inf}, respectively, are holomorphic quantum modular functions for $\Gamma_1(3)$.
\end{theorem}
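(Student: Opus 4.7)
My plan is to verify the holomorphic quantum modularity condition on the two generators of $\Gamma_1(3)$. Since the cocycle identity $h_{\gamma_1\gamma_2}[f]=h_{\gamma_1}[f]\bigl|_{\gamma_2}+h_{\gamma_2}[f]$ propagates holomorphicity of the cocycle under composition on $\IC_{\gamma_1\gamma_2}$, by Eq.~\eqref{eq: G13-generators} it suffices to check the cocycles of $f_0$ and $f_\infty$ for $T$ and for $\gamma_3$. For $T$, the cocycles vanish identically because of the periodicity in Eqs.~\eqref{eq:f_0-symm} and~\eqref{eq:f_inf-symm}, and the zero function is trivially holomorphic on $\IC_T=\IC$.

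For $\gamma_3$, the central observation is the elementary Möbius identity $S\circ\gamma_3=T^{-1}\circ S$, where $S(y)=-1/(3y)$, which is verified by a direct computation. Rearranging the definitions in Eqs.~\eqref{eq: FR-def} and~\eqref{eq: FS-def} yields the decompositions $f_0=F_R-f_\infty\circ S$ and $f_\infty=F_S+f_0\circ S$. Combining these with the identity above and the periodicity of $f_0$ and $f_\infty$, the cocycles on $\IH$ collapse to
\begin{equation}
h_{\gamma_3}[f_0](y)=F_R(\gamma_3(y))-F_R(y),\qquad h_{\gamma_3}[f_\infty](y)=F_S(\gamma_3(y))-F_S(y),
\end{equation}
because the terms involving $f_\infty(S(y))$ and $f_0(S(y))$ cancel thanks to $S(\gamma_3(y))=S(y)-1$. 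Since $F_R$ and $F_S$ extend holomorphically from $\IH$ to $\IC'$ through the Faddeev quantum dilogarithm formulae in Eqs.~\eqref{eq: FR-phib} and~\eqref{eq: FS-phib}, and $\gamma_3$ is a biholomorphism of $\IC\setminus\{-1/3\}$, both cocycles are automatically holomorphic on $\IC'\cap\gamma_3^{-1}(\IC')=\IC\setminus\IR_{\leq 0}$.

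The remaining and main technical task is to promote this to holomorphicity on the strictly larger domain $\IC_{\gamma_3}=\IC\setminus(-\infty,-1/3]$, i.e.\ to establish removability of the open segment $(-1/3,0)$ for each cocycle. I would attack this by analyzing the closed forms of $F_R$ and $F_S$ under the transformation $\mb^2\mapsto\mb^2/(3\mb^2+1)$ and showing that the jumps of $F_R(y)$ and of $F_R(\gamma_3(y))$ across $(-1/3,0)$ coincide, so that they cancel in the difference; the natural tools are the two difference equations satisfied by $\Phi_\mb$ (one for shifts by $\ri\mb/2$ and one for shifts by $\ri/(2\mb)$) together with its $\mb\leftrightarrow 1/\mb$ symmetry. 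The hard part is precisely this removability argument: while the algebraic reduction to $F_R$ and $F_S$ differences is immediate, the verification that the boundary values from above and below the cut agree on $(-1/3,0)$ requires careful tracking of the branches of $\sqrt{y}$ and of the logarithms entering $F_R$ and $F_S$.
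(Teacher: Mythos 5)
Your completed reduction coincides with the paper's own proof: the $T$-cocycles vanish by the periodicity in Eqs.~\eqref{eq:f_0-symm} and~\eqref{eq:f_inf-symm}, and for $\gamma_3$ the identity $S\circ\gamma_3=T^{-1}\circ S$ combined with the decompositions $f_0=F_R-f_\infty\circ S$ and $f_\infty=F_S+f_0\circ S$ from Eqs.~\eqref{eq: FR-def} and~\eqref{eq: FS-def} collapses the cocycles to $h_{\gamma_3}[f_0]=F_R\circ\gamma_3-F_R$ and $h_{\gamma_3}[f_\infty]=F_S\circ\gamma_3-F_S$, whose analyticity on $\IC'$ follows from the Faddeev quantum dilogarithm representations in Eqs.~\eqref{eq: FR-phib} and~\eqref{eq: FS-phib}. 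This is exactly the argument given in the paper.

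The one place you diverge is the ``remaining technical task'' of pushing the cocycle across the segment $(-1/3,0)$ so as to reach all of $\IC_{\gamma_3}=\IC\setminus(-\infty,-1/3]$. The paper does not perform this step: its proof stops at analyticity on $\IC'$ and treats that as sufficient, consistent with the informal gloss preceding the definition (``analytic in some domain that is larger than $\IH$ --- for instance, in $\IC'$''). So the removability argument you leave open is not a gap relative to the paper's own standard; rather, you have correctly noticed that the formal definition, read literally, asks for the strictly larger domain $\IC_\gamma$, a point the paper glosses over. If you do wish to close it, the only obstruction is whether the jump of $F_R$ (resp.\ $F_S$) across $(-1/3,0)\subset\IR_{<0}$ cancels against that of $F_R\circ\gamma_3$ (resp.\ $F_S\circ\gamma_3$); the quasi-periodicity relations and the $\mb\leftrightarrow\mb^{-1}$ symmetry of $\Phi_\mb$ are indeed the natural tools, as you suggest, but neither you nor the paper carries this out.
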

\begin{proof}
Recall that the generators of $\Gamma_1(3)$ are the matrices $T$ and $\gamma_3$ in Eq.~\eqref{eq: G13-generators}. It is sufficient to prove that the corresponding cocycles, that is,
\begin{subequations}
    \begin{align}
        h_T[f](y)&=f(y+1)-f(y) \, , \\
        h_{\gamma_3}[f](y)&=f\big(\tfrac{y}{3y+1}\big)-f(y)\, ,
    \end{align}
\end{subequations}
for $f=f_0, f_\infty$, are analytic in $\IC'$.
It follows from the periodicity properties in Eqs.~\eqref{eq:f_0-symm} and~\eqref{eq:f_inf-symm} that the cocycles for $T$ are trivial, \emph{i.e.}, 
$h_T[f_0](y) = 0$ and $h_T[f_\infty](y)=0$.

Applying the definition of $F_R(y)$ in Eq.~\eqref{eq: FR-def} and the first formula in Eq.~\eqref{eq:f_inf-symm}, we find that
\be
f_0\big(\tfrac{y}{3y+1}\big)=F_R\big(\tfrac{y}{3y+1}\big)-f_\infty\big(-\tfrac{1}{3y}\big) =F_R\big(\tfrac{y}{3y+1}\big)-F_R(y)+f_0(y) \, .
\ee
Therefore, we obtain that the cocycle of $f_0$ for $\gamma_3$ is
\be \label{eq: f0-cocycle}
h_{\gamma_3}[f_0](y)=F_R\big(\tfrac{y}{3y+1}\big)-F_R(y) \, , 
\ee
which is analytic in $\IC'$ following Eq.~\eqref{eq: FR-phib}.
Analogously, applying the definition of $F_S(y)$ in Eq.~\eqref{eq: FS-def} and the first formula in Eq.~\eqref{eq:f_0-symm}, we find that
\be
f_\infty\big(\tfrac{y}{3y+1}\big)=
F_S\big(\tfrac{y}{3y+1}\big)+f_0\big(-\tfrac{1}{3y}\big)=F_S\big(\tfrac{y}{3y+1}\big)-F_S(y)+f_\infty(y) \, .
\ee
Therefore, the cocycle of $f_\infty$ for $\gamma_3$ is
\be \label{eq: finf-cocycle}
h_{\gamma_3}[f_\infty](y)=F_S\big(\tfrac{y}{3y+1}\big)-F_S(y) \, , 
\ee
which is analytic on $\IC'$ following Eq.~\eqref{eq: FS-phib}. 
\end{proof}
Let us stress that the generating functions $f_0(y)$ and $f_\infty(y)$, $y \in \IH$, have trivial cocycles for the generator $T$, while their cocycles for the generator $\gamma_3$ in Eqs.~\eqref{eq: f0-cocycle} and~\eqref{eq: finf-cocycle} are given by the functions $F_R(y)$ and $F_S(y)$ in Eqs.~\eqref{eq: FR-def} and~\eqref{eq: FS-def}, respectively. Specifically, we have that
\be
h_{\gamma_3}[f_0](y)=h_{\gamma_3}[F_R](y) \, , \quad h_{\gamma_3}[f_\infty](y)=h_{\gamma_3}[F_S](y) \, , 
\ee
which are determined by the strong and weak coupling Stokes constants $R_n$, $S_n$, $n \in \IZ_{>0}$, as a consequence of Eqs.~\eqref{eq: FR-Rn} and~\eqref{eq: FS-Sn}. Once more, note the exchange of information between the two regimes in $\hbar$. The non-trivial cocycle of the generating function of the Stokes constants in one regime is controlled by the Stokes constants in the dual regime.
\begin{rmk}
    As we learn from the proof of Theorem~\ref{thm:f0,finf-quantum modular}, the holomorphic quantum modular functions $f_0, f_\infty$ are deeply related to the Faddeev's quantum dilogarithm $\Phi_\mb$ in Eq.~\eqref{eq: seriesPhib}, which enters the formulae for the cocyles. This is indeed a general feature of the $q$-Pochhammer symbols, as discussed in~\cite[Section~8.11]{wheeler-thesis}. 
\end{rmk}
\subsubsection*{Quantum modularity and Fricke involution}
As we will show, a pair of ``dual'' holomorphic quantum modular functions for $\Gamma_1(3)$ is constructed by acting on the generating functions $f_0(y)$ and $f_\infty(y)$ with the transformation $y \mapsto -\frac{1}{3y}$, which is known as the Fricke involution of $\IH/\Gamma_1(3)$.
In particular, we define the holomorphic functions $f_0^{\star}, f_\infty^{\star}\colon \IH \to \IC$ as
\be \label{eq: fricke-fs}
f_0^{\star}(y):=f_0\big(-\tfrac{1}{3y}\big) \, , \quad  f^{\star}_\infty(y):=f_\infty\big(-\tfrac{1}{3y}\big) \, ,
\ee
where we have use the symbol $^{\star}$ to denote the action of the Fricke involution.
\begin{theorem}\label{thm:fricke-quantum modular}
The functions $f_0^{\star}, f_\infty^{\star}\colon \IH\to\IC$, defined in Eq.~\eqref{eq: fricke-fs} as the images of the generating functions under Fricke involution, are holomorphic quantum modular functions for $\Gamma_1(3)$.
\end{theorem}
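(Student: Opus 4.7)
The plan is to mirror the proof of Theorem~\ref{thm:f0,finf-quantum modular}, exploiting the key fact that the Fricke involution $W\colon y \mapsto -1/(3y)$, viewed as the matrix $\bigl(\begin{smallmatrix} 0 & -1 \\ 3 & 0 \end{smallmatrix}\bigr)$, normalizes $\Gamma_1(3)$ and conjugates its two generators onto each other's inverses. A direct matrix computation gives $W T W^{-1} = \gamma_3^{-1}$ and $W \gamma_3 W^{-1} = T^{-1}$, both in $\Gamma_1(3)$. Consequently, for any holomorphic $f\colon \IH \to \IC$ and $\gamma \in \Gamma_1(3)$, the weight-zero cocycle of $f^\star = f \circ W$ transforms by $h_\gamma[f^\star](y) = h_{W\gamma W^{-1}}[f](W \cdot y)$, which reduces the Fricke problem to the cocycles of $f_0, f_\infty$ for $T^{\pm 1}$ and $\gamma_3^{\pm 1}$.

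The $\gamma_3$-cocycles $h_{\gamma_3}[f_0^\star]$ and $h_{\gamma_3}[f_\infty^\star]$ are thus mapped to $h_{T^{-1}}[f_0]$ and $h_{T^{-1}}[f_\infty]$ evaluated at $-1/(3y)$, which vanish identically thanks to the unit-periodicity in Eqs.~\eqref{eq:f_0-symm} and~\eqref{eq:f_inf-symm}, and hence extend trivially to $\IC_{\gamma_3}$. The $T$-cocycles $h_T[f_0^\star]$ and $h_T[f_\infty^\star]$ are instead mapped to $h_{\gamma_3^{-1}}[f_0]$ and $h_{\gamma_3^{-1}}[f_\infty]$ evaluated at $-1/(3y)$. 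Using the standard cocycle relation $h_{\gamma^{-1}}[f](u) = -h_\gamma[f](\gamma^{-1} u)$ together with Theorem~\ref{thm:f0,finf-quantum modular} delivers the pair of equivalent representations
\begin{equation*}
    h_T[f_0^\star](y) = F_R\bigl(-\tfrac{1}{3(y+1)}\bigr) - F_R\bigl(-\tfrac{1}{3y}\bigr) = F_S(y) - F_S(y+1) \, ,
\end{equation*}
\begin{equation*}
    h_T[f_\infty^\star](y) = F_S\bigl(-\tfrac{1}{3(y+1)}\bigr) - F_S\bigl(-\tfrac{1}{3y}\bigr) = F_R(y+1) - F_R(y) \, ,
\end{equation*}
where the second equality in each line follows on $\IH$ from the defining identities $F_R(y) = f_\infty^\star(y) + f_0(y)$ and $F_S(y) = f_\infty(y) - f_0^\star(y)$, combined with the periodicity of $f_0$ and $f_\infty$.

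The final step is to verify that these $T$-cocycles admit holomorphic extensions to $\IC_T = \IC$. Since $F_R$ and $F_S$ are holomorphic on $\IC'$ by virtue of their closed expressions in Eqs.~\eqref{eq: FR-phib} and~\eqref{eq: FS-phib} as ratios of Faddeev's quantum dilogarithm, each of the two representations of each cocycle is immediately holomorphic on a proper open subset of $\IC$; patching them along their common domain of analyticity yields a holomorphic extension to $\IC \setminus [-1, 0]$. To cover the remaining closed segment, I would invoke the Fricke involution symmetry $\Phi_\mb(x) = \Phi_{1/\mb}(x)$ of Faddeev's quantum dilogarithm, which furnishes further closed-form expressions for the cocycles that are holomorphic on a neighbourhood of $[-1, 0]$ and thus bridge the gap.

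The main obstacle lies in this last analytic step: while the conjugation argument is essentially formal, demonstrating that the several closed-form representations of $h_T[f_0^\star]$ and $h_T[f_\infty^\star]$, obtained from $F_R$, $F_S$, and their Fricke duals via the symmetries of Faddeev's quantum dilogarithm, are mutually compatible on $[-1, 0]$ and jointly furnish holomorphic extensions to all of $\IC_T = \IC$ is the substantive content of the proof.
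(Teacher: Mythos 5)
Your derivation lands on exactly the cocycle identities the paper uses: $h_{\gamma_3}[f_0^\star]=h_{\gamma_3}[f_\infty^\star]=0$ by periodicity, $h_T[f_0^\star](y)=F_S(y)-F_S(y+1)$, and $h_T[f_\infty^\star](y)=F_R(y+1)-F_R(y)$, with analyticity supplied by the Faddeev quantum dilogarithm representations in Eqs.~\eqref{eq: FR-phib} and~\eqref{eq: FS-phib}. The paper obtains these by direct substitution of the definitions of $F_R$, $F_S$ together with the periodicity of $f_0$, $f_\infty$; your route through the conjugation $WTW^{-1}=\gamma_3^{-1}$, $W\gamma_3W^{-1}=T^{-1}$ is a cleaner structural explanation of \emph{why} the roles of the two generators (and of the two families of Stokes constants) are exchanged under Fricke involution, but it is computationally equivalent. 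So the core of your proof is correct and essentially the paper's.

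Where you diverge is the final analytic step. The paper stops once the nontrivial cocycles are shown to be holomorphic on $\IC'$ (the same standard it applies in Theorem~\ref{thm:f0,finf-quantum modular}), and does not attempt the extension to all of $\IC_T=\IC$ that you identify as ``the substantive content.'' You are reading the definition more literally than the authors do, which is defensible, but your sketched resolution is not yet a proof: the two representations $F_S(y)-F_S(y+1)$ and $F_R\bigl(-\tfrac{1}{3(y+1)}\bigr)-F_R\bigl(-\tfrac{1}{3y}\bigr)$ are holomorphic on $\IC\setminus(-\infty,0]$ and $\IC\setminus[-1,+\infty)$ respectively, and their common domain $\IC\setminus\IR$ is \emph{disconnected}, so the identity you established on $\IH$ does not automatically propagate to $\IH_-$; one must separately check agreement on the lower half-plane (e.g.\ via the parity properties in Eqs.~\eqref{eq:f_0-symm} and~\eqref{eq:f_inf-symm}) before gluing, and the proposed appeal to $\Phi_\mb=\Phi_{\mb^{-1}}$ to cover $[-1,0]$ is only a heuristic at this stage. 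None of this is needed to match the paper's own argument, which you have otherwise reproduced.
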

\begin{proof}
Recall that the generators of $\Gamma_1(3)$ are the matrices $T$ and $\gamma_3$ in Eq.~\eqref{eq: G13-generators}. It is sufficient to prove that the corresponding cocycles, that is, 
\begin{subequations}
    \begin{align}
        h_T[f^\star](y)&=f^{\star}(y+1)-f^{\star}(y) = f\big(-\tfrac{1}{3y+3}\big) - f\big(-\tfrac{1}{3y}\big)\, , \\
        h_{\gamma_3}[f^\star](y)&=f^{\star}\big(\tfrac{y}{3y+1}\big)-f^{\star}(y) =f\big(-\tfrac{1}{3y}-1\big) - f\big(-\tfrac{1}{3y}\big) \, ,
    \end{align}
\end{subequations}
for $f=f_0, f_\infty$, are analytic in $\IC'$.
It follows from the periodicity properties in Eqs.~\eqref{eq:f_0-symm} and~\eqref{eq:f_inf-symm} that the cocycles for $\gamma_3$ are trivial, \emph{i.e.}, $h_{\gamma_3}[f_0^{\star}](y) = 0$ and $h_{\gamma_3}[f_\infty^{\star}](y) = 0$.

Applying the definition of $F_S(y)$ in Eq.~\eqref{eq: FS-def} and the first formula in Eq.~\eqref{eq:f_inf-symm}, we find that
\be
    \begin{aligned}
        f_0^{\star}(y+1)=f_0\big(-\tfrac{1}{3y+3}\big)&=f_\infty(y+1)-F_S(y+1)\\
        &=F_S(y)-F_S(y+1)+f_0\big(-\tfrac{1}{3y}\big)\\
        &=F_S(y)-F_S(y+1)+f_0^{\star}(y)\, .
    \end{aligned}
\ee
Therefore, we obtain that the cocycle of $f_0^{\star}$ for $T$ is
\be \label{eq: f0star-cocycle}
h_T[f_0^{\star}](y)= F_S(y)-F_S(y+1) \, , 
\ee
which is analytic on $\IC'$ following Eq.~\eqref{eq: FS-phib}.
Analogously, applying the definition of $F_R(y)$ in Eq.~\eqref{eq: FR-def} and the first formula in Eq.~\eqref{eq:f_0-symm}, we find that
\be
    \begin{aligned}
        f_\infty^{\star}(y+1)=f_\infty\big(-\tfrac{1}{3y+3}\big)
            &=F_R(y+1)-f_0(y+1)\\
            &=F_R(y+1)-F_R(y)+f_\infty\big(-\tfrac{1}{3y}\big)\\
             &=F_R(y+1)-F_R(y)+f_\infty^{\star}(y)\,,
    \end{aligned}
\ee
Therefore, the cocycle of $f_\infty^{\star}$ for $T$ is
\be \label{eq: finfstar-cocycle}
h_T[f_\infty^{\star}](y)=F_R(y+1)-F_R(y) \, , 
\ee
which is analytic on $\IC'$ following Eq.~\eqref{eq: FR-phib}.
\end{proof}
The Fricke involution has exchanged the roles played by the generators of the modular group $\Gamma_1(3)$ and, simultaneously, by the weak and strong coupling Stokes constants. Indeed, the dual functions $f^{\star}_0(y)$ and $f^{\star}_\infty(y)$, $y \in \IH$, have trivial cocycles for the generator $\gamma_3$, while their cocycles for the generator $T$ in Eqs.~\eqref{eq: f0star-cocycle} and~\eqref{eq: finfstar-cocycle} are given by the functions $F_S(y)$ and $F_R(y)$ in Eqs.~\eqref{eq: FS-def} and~\eqref{eq: FR-def}, respectively. Namely, we have that
\be
h_T[f_0^\star](y)=-h_T[F_S](y) \, , \quad h_T[f_\infty^\star](y)=h_T[F_R](y) \, , 
\ee
which are determined by the weak and strong coupling Stokes constants $S_n$, $R_n$, $n \in \IZ_{>0}$, as a consequence of Eqs.~\eqref{eq: FS-Sn} and~\eqref{eq: FR-Rn}. 

\begin{rmk}
   Our results on the quantum modularity of the generating functions and the role of the Fricke involution pave the way for a geometric interpretation of the Stokes constants. Indeed, as briefly discussed in Section~\ref{sec:geometry}, the group $\Gamma_1(3)$ is deeply related to the geometry of local $\IP^2$. The moduli space $\mathcal{M}_{\rm cpx}(\hat{X})$ parametrizing the complex structures of the mirror $\hat{X}$ is the compactification of the quotient $\IH/\Gamma_1(3)$. Furthermore, the generating functions of the Gromov--Witten invariants are known to be quasi-modular functions in the modular coordinate $\tau\in\mathcal{M}_{\rm cpx}(\hat{X})$ at the conifold point~\cite{CI}. This enticing connection opens a new direction of investigation aimed at studying an explicit relationship between the Stokes constants appearing in TS/ST correspondence and the Gromov--Witten invariants of toric CY threefolds, which we plan to address in upcoming work.
\end{rmk}
Finally, a large class of examples of quantum modular forms comes from the study of quantum invariants of $3$-manifolds and knots~\cite{qCS1,zagier_modular,DG,zagier2001vassiliev,goswami2021quantum-theta,CCFGH,Cheng1,Cheng2}. Thus, our results on the quantum modularity of the generating functions of the Stokes constants in Theorems~\ref{thm:f0,finf-quantum modular} and~\ref{thm:fricke-quantum modular} provide new evidence in addition to the many formal similarities that are shared by topological string theory and complex Chern--Simons theory~\cite{GuM,MR}, which we have briefly discussed in Section~\ref{sec: strong-weak}.

\section{Conclusions}\label{sec:conclusion}
The Stokes constants are analytic invariants capturing information about the non-perturbative corrections to a resurgent asymptotic series. Yet, in some remarkable cases, they are rational or integer numbers and possess an interpretation as enumerative invariants based on the counting of BPS states~\cite{kontsevich--analytic, IK1, IK2}. Examples come from various quantum theories, including $4d$ $\mathcal{N}= 2$ supersymmetric gauge theory in the NS limit of the Omega-background~\cite{GrGuM}, complex Chern--Simons theory on Seifert fibered homology spheres~\cite{qCS7} and on complements of hyperbolic knots~\cite{GGuM, GGuM2}, the standard and NS topological string theories on (toric) CY threefolds~\cite{Rella22, GuM2, GKKM, IM, GuM3, Gu23, GuM, PS, ASTT, GHN, AHT} and their refinement~\cite{AMP, GuGuo}, and Walcher’s real topological string~\cite{MS}.

In this paper, building on the key results of~\cite{Rella22}, we completed our understanding of the unique set of exact relations connecting the resurgent structures of the logarithm of the spectral trace of local $\IP^2$ in the weak and strong coupling limits. 
In particular, we uncovered a direct way of obtaining the perturbative expansion
of the generating function of the Stokes constants in one regime from the generating function in the other,
which amounts to performing an operation that is the formal inverse of taking the discontinuity of
the asymptotic series.
Thus, we argued that a \emph{strong-weak resurgent symmetry} is at the heart of the analytic number-theoretic duality of~\cite{Rella22}. Through this newly discovered exact symmetry, the two-way exchange of perturbative/non-perturbative information between the holomorphic and anti-holomorphic blocks in the factorization of the spectral trace takes a mathematically precise form. This is a realization of underlying physical mechanisms that can be intuitively traced back to the duality between the worldsheet and WKB contributions to the total grand potential of the topological string on local $\IP^2$. Moreover, the weak and strong coupling Stokes constants are related by an arithmetic twist, while the corresponding $L$-functions analytically continue each other through a functional equation.

Then, we proved that the $q$-series acting as generating functions of the Stokes constants are holomorphic quantum modular functions for the congruence subgroup $\Gamma_1(3) \subset \mathsf{SL}_2(\IZ)$. Finally, we showed the effectiveness of the median resummation in the weakly coupled regime, while the analogous statement at strong coupling is conjectured in light of numerical evidence.

Multiple questions and open problems follow from the investigation performed in this paper.
The geometric and physical meaning of the enumerative invariants and non-perturbative sectors of~\cite{GuM, Rella22} is yet to be understood. However, the remarkable arithmetic fabric underpinning the resurgent properties of the dual $\hbar$-expansions of $\log \mathrm{Tr}(\rho_{\IP^2})$ discovered in~\cite{Rella22}, and developed here into a global strong-weak resurgent symmetry, paves the way for new insights on the physical interpretation of the Stokes constants to be explored in upcoming work. 

Along the same lines, a study of the relation between the formalism of~\cite{GuM2, GuM3} and the framework proposed in~\cite{GuM, Rella22} and advanced here might help us achieve a more comprehensive understanding of the non-perturbative structure of topological string theory. Let us also mention that other techniques, which do not employ resurgence, have been recently used to address the problem of identifying and counting BPS states of type IIA compactifications on toric CY threefolds~\cite{DFR}. These include generalizations of the WKB approach~\cite{GMN, ESW, Longhi1, Longhi2} and techniques based on attractor flows~\cite{Bousseau22}. 

Simultaneously, our results on the quantum modularity of the generating functions suggest a new research direction as they naturally prompt us to investigate the geometric content of the weak and strong coupling Stokes constants in connection to the BPS spectrum of local $\IP^2$. In particular, there are promising links with some of the results of~\cite{Bousseau22}, which we plan to investigate in the future. 
 
Finally, it would be interesting to extend our results on the first fermionic spectral trace of local $\IP^2$ to other toric CY threefolds and possibly study higher-order fermionic spectral traces, where we expect that similar structures might show up, guiding us toward a generalization of the strong-weak resurgent symmetry.

\section*{Acknowledgements}
We thank 
Alba Grassi, 
Maxim Kontsevich, 
Marcos Mari\~no, 
Campbell Wheeler, 
and 
Don Zagier 
for many useful discussions.
This work has been supported by the ERC-SyG project ``Recursive and Exact New Quantum Theory'' (ReNewQuantum), which received funding from the European Research Council (ERC) within the European Union's Horizon 2020 research and innovation program under Grant No. 810573 and by the Swiss National Centre of Competence in Research SwissMAP (NCCR 51NF40-141869 The Mathematics of Physics).

\appendix

\section{Quantum dilogarithms} \label{app: Faddeev}
In this section, we recall some of the main properties of Faddeev's quantum dilogarithm~\cite{Faddeev2}. We refer to~\cite[Appendix~A]{AK} and~\cite[Section~8.10]{wheeler-thesis} for detailed expositions. We call \emph{quantum dilogarithm} the function of two variables defined by the series
\be \label{eq: dilog}
(x q^{\alpha}; \, q)_{\infty} = \prod_{n=0}^{\infty} (1- x q^{\alpha+n}) \, , \quad \alpha \in \IR \, ,
\ee
which is analytic in $x,q \in \IC$ with $|q| <1$ and has asymptotic expansions around $q$ a root of unity. It satisfies the relation
\be \label{eq: qdilog-rel}
(x; \, q^{-1})_\infty = (xq; \, q)_\infty^{-1} \, .
\ee
The \emph{$q$-Pochhammer symbols}, also known as $q$-shifted factorials, are denoted by
\be \label{eq: qFactor}
(x; \, q)_m = \prod_{n=0}^{m-1} (1- x q^n) \, , \quad (x; \, q)_{-m} = \frac{1}{(x q^{-m}; \, q)_m} \, , \quad m \in \IZ_{>0} \, ,
\ee
with $(x; \, q)_0=1$. Equivalently, we can write
\be \label{eq: qFactor2}
(x; \, q)_m = \frac{(x; \, q)_{\infty}}{(x q^m; \, q)_{\infty}} \, , \quad m \in \IZ \, .
\ee
The \emph{Faddeev's quantum dilogarithm} $\Phi_{\mb}(x)$ is defined in the strip $| \Im (x) | < | \Im (c_{\mb}) |$, where 
\be \label{eq: cb-def}
c_{\mb} = \ri (\mb + \mb^{-1})/2 \, , 
\ee
by the integral representation
\be \label{eq: intPhib}
\Phi_{\mb}(x) = \exp \left( \int_{\IR + \ri \epsilon} \frac{\re^{-2 \ri x z}}{4 \sinh(z \mb ) \sinh(z \mb^{-1})} \frac{d z}{z} \right) \, ,
\ee
which implies the symmetry properties
\be \label{eq: symmPhib}
\Phi_{\mb}(x) = \Phi_{-\mb}(x) = \Phi_{\mb^{-1}}(x) \, .
\ee
When $\Im(\mb^2) > 0$, the formula in Eq.~\eqref{eq: intPhib} is equivalent to
\be \label{eq: seriesPhib}
\Phi_{\mb}(x) = \frac{( \re^{2 \pi \mb (x + c_{\mb})}; \, q)_{\infty}}{(  \re^{2 \pi \mb^{-1} (x - c_{\mb})}; \, \tilde{q})_{\infty}} = \prod_{n=0}^{\infty} \frac{1-  \re^{2 \pi \mb (x + c_{\mb})} q^n}{1- \re^{2 \pi \mb^{-1} (x - c_{\mb})} \tilde{q}^n}\, ,
\ee
where 
\be
q = \re^{2 \pi \ri \mb^2} \, , \quad \tilde{q} = \re^{- 2 \pi \ri \mb^{-2}} \, .
\ee
Note that the function in Eq.~\eqref{eq: seriesPhib} can be extended to the region $\Im(\mb^2) < 0$ using Eq.~\eqref{eq: symmPhib} and further admits an analytic continuation to all values of $\mb$ such that $\mb^2 \notin \IR_{\le 0}$.
Moreover, $\Phi_{\mb}(x)$ can be extended to the whole complex $x$-plane as a meromorphic function with an essential singularity at infinity, poles at the points 
\be
x = c_{\mb} + \ri m \mb + \ri n \mb^{-1} \, ,
\ee
and zeros at the points 
\be
x = -c_{\mb} - \ri m \mb - \ri n \mb^{-1} \, ,
\ee
for $m,n \in \IN$. 
It satisfies the inversion formula
\be
\Phi_{\mb}(x) \Phi_{\mb}(-x) = \re^{\pi \ri x^2} \Phi_{\mb}(0)^2 \, , \quad \Phi_{\mb}(0) = \left( \frac{q}{\tilde{q}} \right)^{1/48} = \re^{\pi \ri (\mb^2 + \mb^{-2})/24} \, ,
\ee
the complex conjugation formula
\be
\overline{\Phi_{\mb}(x)} = \frac{1}{\Phi_{\overline{\mb}}(\overline{x}) } \, ,
\ee
and the quasi-periodicity relations
\begin{subequations} \label{periodicity}
\begin{align} 
\Phi_{\mb}(x \pm \ri \mb) &= \Phi_{\mb}(x) \left( 1 + \re^{2 \pi \mb x \pm \pi \ri \mb^2} \right)^{\mp 1} \, , \\
\Phi_{\mb}(x \pm \ri \mb^{-1}) &= \Phi_{\mb}(x) \left( 1 + \re^{2 \pi \mb^{-1} x \pm \pi \ri \mb^{-2}} \right)^{\mp 1} \, .
\end{align}
\end{subequations}
In addition, when $\mathsf{b}^2=M/N\in\IQ$ with $M, N \in \IZ_{>0}$ coprime, the expression for $\Phi_{\mathsf{b}}(x)$ in Eq.~\eqref{eq: seriesPhib} simplifies into~\cite[Theorem~1.9]{Faddeev--roots}
\begin{equation} \label{eq: faddev-rationals}
    \Phi_{\mathsf{b}}\left(\frac{x}{2\pi\sqrt{MN}}-c_{\mathsf{b}}\right)=\frac{\re^{\tfrac{\ri}{2\pi MN}\mathrm{Li}_2(\re^x)} \, (1-\re^x)^{1+\tfrac{\ri x}{2\pi MN}}}{D_N(\re^{x/N}; \, \re^{2\pi \ri M/N}) \, D_M(\re^{x/M}; \, \re^{2\pi \ri N/M})} \, ,
\end{equation}
where 
\be
D_N(z;\, q)=\prod_{k=1}^{N-1}(1-z q^k)^{\tfrac{k}{N}}
\ee
is the cyclic quantum dilogarithm.\footnote{We use the simplified formula for Faddeev's quantum dilogarithm in Eq.~\eqref{eq: faddev-rationals} to perform numerical checks at rational points of the results presented in Section~\ref{sec:median_resummation}.}

Finally, we recall the known asymptotic behavior of Faddeev's quantum dilogarithm. We refer to Appendix~\ref{app: resurgence} for an introduction to resurgence.
In the limit of $\mb \rightarrow 0$ with $\Im(\mb^2) >0$, $\Phi_{\mb}(x)$ gives the asymptotic series~\cite{AK}
\be \label{eq: logPhib}
\log \Phi_{\mb}\left( \frac{x}{2 \pi \mb} \right)\sim \sum_{k=0}^{\infty} (2 \pi \ri \mb^2)^{2k-1} \frac{B_{2k}(1/2)}{(2k)!} \mathrm{Li}_{2-2k}(-\re^x) \, ,
\ee
where $x$ is kept fixed, $\mathrm{Li}_n(z)$ is the polylogarithm of order $n$, and $B_n(z)$ is the $n$-th Bernoulli polynomial. Similarly, for $\mb \rightarrow 0$ with $\Im(\mb^2) >0$ and $x$ fixed, the following special cases of the quantum dilogarithm in Eq.~\eqref{eq: dilog} have the asymptotic expansions~\cite{Katsurada}
\begin{subequations}
\begin{align}
\log (x ; \, q)_{\infty} \sim & \frac{1}{2} \log(1-x) + \sum_{k=0}^{\infty} (2 \pi \ri \mb^2)^{2k-1} \frac{B_{2k}}{(2k)!} \mathrm{Li}_{2-2k}(x) \, , \label{eq: logPhiNC} \\
\log (q^{\alpha} ; \, q)_{\infty} \sim & - \frac{\pi \ri}{12 \mb^2} - B_1(\alpha) \log(- 2 \pi \ri \mb^2) - \log \frac{\Gamma(\alpha)}{\sqrt{2 \pi}}  \label{eq: logPhiK} \\
& - B_2(\alpha) \frac{\pi \ri \mb^2}{2} - \sum_{k=2}^{\infty} (2 \pi \ri \mb^2)^k \frac{B_k B_{k+1}(\alpha)}{k (k+1)!} \, , \quad \alpha > 0 \, , \nonumber
\end{align}
\end{subequations}
where $B_n = B_n(0)$ is the $n$-th Bernoulli number and $\Gamma(\alpha)$ is the gamma function.
Besides, $\Phi_{\mb}(x)$ is Borel--Laplace summable~\cite[Theorem 1.3]{GK}.

\section{Basics of resurgence}\label{app: resurgence}
The resurgent analysis of factorially divergent formal power series unveils a universal mathematical structure involving a set of numerical data called Stokes constants. See~\cite{ABS, diver-book, Dorigoni} for a formal introduction to the theory of resurgence and~\cite{lecturesM, bookM} for its application to gauge and string theories. In this section, we review the basics of resurgence.
Let $z$ be a formal variable. The Borel transform is a map acting on $z$-monomials as
\begin{equation}
    \borel[z^{n-\alpha}]:=\frac{\zeta^{n-\alpha-1}}{\Gamma(n-\alpha)}\, , \quad n \in \IZ_{\ge 0} \, , \quad \alpha \in \IR \backslash \IZ_{\geq 0} \, ,
\end{equation}
where $\zeta$ is a new formal variable and $\Gamma(n-\alpha)$ is the gamma function. The action of the Borel transform above is extended by linearity to all formal power series in $z^{-\alpha}\IC[\![z]\!]$, while the Borel transform of the identity is conventionally denoted with the symbol $\delta:=\borel[1]$.
A Gevrey-1 asymptotic series $\phi(z)$ is defined as
\be \label{eq: phi}
\phi(z) = z^{-\alpha} \sum_{n=0}^{\infty} a_n z^n \in z^{-\alpha} \IC[\![z]\!] \, , \quad |a_n|\le \CA^{-n} n! \quad n \gg 1 \, ,
\ee
where $\alpha \in \IR \backslash \IZ_{\ge 0}$ and $\CA \in \IR_{>0}$.\footnote{The factorial growth of the coefficients $a_n$ as $n \to \infty$ dictates the Gevrey order.} 
Its Borel transform, which we denote for simplicity by $\hat{\phi}(\zeta):=\borel\phi(\zeta)$, is by definition
\be \label{eq: phihat}
\hat{\phi}(\zeta) = \sum_{k=0}^{\infty} \frac{a_k}{\Gamma(k-\alpha)} \zeta^{k-\alpha-1} \in \zeta^{- \alpha} \IC\{\zeta\}
\ee
and is a holomorphic function in an open neighborhood of $\zeta=0$ of radius $|\CA|$. 
When extended to the complex $\zeta$-plane, known as Borel plane, $\hat{\phi}(\zeta)$ shows a (possibly infinite) set of singularities $\zeta_{\omega} \in \IC$, which we label by the index $\omega \in \Omega$. 
A ray in the Borel plane of the form 
\be \label{eq: ray}
\CC_{\theta_{\omega}} = \re^{\ri \theta_{\omega}} \IR_{\ge 0} \, , \quad \theta_{\omega} = \arg (\zeta_{\omega}) \, ,
\ee 
which starts at the origin and passes through a singularity $\zeta_{\omega}$, is called a Stokes ray. 
The Borel plane is partitioned into sectors bounded by the Stokes rays in such a way that the Borel transform converges to a generally different holomorphic function in each sector.

We recall that the Gevrey-1 asymptotic series $\phi(z)$ is called \emph{resurgent} if its Borel transform $\hat{\phi}(\zeta)$ can be endlessly analytically continued. Namely, for every $L>0$, there is a finite set of points $\Omega_L$ in the Riemann surface of $\zeta^{\alpha}$ such that $\hat{\phi}(\zeta)$ can be analytically continued along any path that avoids $\Omega_L$ and has length at most $L$. If, additionally, its Borel transform has only simple poles and logarithmic branch points, then it is called \emph{simple resurgent}. Let us assume that the formal power series $\phi(z)$ in Eq.~\eqref{eq: phi} is simple resurgent.
If the singularity $\zeta_{\omega}$ is a simple pole, the local expansion of the Borel transform in Eq.~\eqref{eq: phihat} around it has the form 
\be \label{eq: Stokes0}
\hat{\phi}(\zeta) = - \frac{S_{\omega}}{2 \pi \ri (\zeta - \zeta_{\omega})} + \text{regular in $\zeta-\zeta_\omega$} \, ,
\ee
where $S_{\omega} \in \IC$ is the Stokes constant at $\zeta_{\omega}$.
Whereas, if the singularity $\zeta_{\omega}$ is a logarithmic branch point, the local expansion of the Borel transform in Eq.~\eqref{eq: phihat} around it has the form
\be \label{eq: Stokes-log}
\hat{\phi}(\zeta) = - \frac{S_{\omega}}{2 \pi \ri} \log(\zeta - \zeta_{\omega}) \hat{\phi}_{\omega}(\zeta - \zeta_{\omega}) + \text{regular in $\zeta-\zeta_\omega$} \, ,
\ee
where again $S_{\omega} \in \IC$ is the corresponding Stokes constant. If we introduce the variable $\xi = \zeta - \zeta_{\omega}$, the function
\be \label{eq: phihat2}
\hat{\phi}_{\omega}(\xi) = \sum_{k=0}^{\infty} \hat{a}_{k, \omega} \xi^{k-\beta} \in \xi^{-\beta} \IC\{\xi \}\, ,
\ee
where $\beta \in \IR \backslash \IZ_{\ge 0}$, is locally analytic at $\xi = 0$ and can be regarded as the Borel transform of the Gevrey-1 asymptotic series 
\be \label{eq: phi2}
\phi_{\omega}(z) = z^{-\beta} \sum_{n=0}^{\infty} a_{n, \omega} z^n \in z^{-\beta} \IC[\![z]\!] , \, \quad a_{n, \omega} = \Gamma(n-\beta+1) \, \hat{a}_{n, \omega} \, .
\ee
Note that the value of the Stokes constant $S_{\omega}$ depends on the normalization of $\phi_{\omega}(z)$.

If the analytic continuation of the Borel transform $\hat{\phi}(\zeta)$ in Eq.~\eqref{eq: phihat} does not grow too fast at infinity\footnote{Roughly, we require that the Borel transform grows at most exponentially in an open sector of the Borel plane containing the angle $\theta$.}, its Laplace transform at an arbitrary angle $\theta$ in the Borel plane gives the Borel--Laplace sum of the original, divergent formal power series $\phi(z)$, denoted by $s_{\theta}(\phi)(z)$.
Explicitly,
\be \label{eq: Laplace}
s_{\theta}(\phi)(z) = \int_0^{\re^{\ri \theta} \infty} \re^{-\zeta/z} \hat{\phi}(\zeta) \, d \zeta = z \int_0^{\re^{\ri \theta} \infty} \re^{-\zeta} \hat{\phi}(\zeta z) \, d \zeta \, , 
\ee
whose asymptotics near the origin reconstructs $\phi(z)$. 
If the Borel--Laplace sum in Eq.~\eqref{eq: Laplace} exists in some region of the complex $z$-plane, we say that the series $\phi(z)$ is Borel--Laplace summable along the direction $\theta$. Note that the Borel--Laplace sum inherits the sectorial structure of the Borel transform. It is a locally analytic function with discontinuities across the special rays identified by 
\be
\arg(z)=\arg(\zeta_{\omega}) \, , \quad \omega \in \Omega \, .
\ee 

The discontinuity across an arbitrary ray $\CC_{\theta} = \re^{\ri \theta} \IR_{\ge 0}$ in the complex $z$-plane is the difference between the Borel--Laplace sums along two rays in the complex $\zeta$-plane that lie slightly above and slightly below $\CC_{\theta}$. Namely,
\be \label{eq: disc}
\mathrm{disc}_{\theta}\phi(z) = s_{\theta_+}(\phi)(z) - s_{\theta_-}(\phi)(z) = \int_{\mathcal{C}_{\theta_+} - \, \mathcal{C}_{\theta_-}} \re^{-\zeta/z} \hat{\phi}(\zeta) \,  d \zeta \, , 
\ee
where $\theta_{\pm}= \theta \pm \epsilon$ for some small positive angle $\epsilon$.
A standard contour deformation argument shows that the two lateral Borel--Laplace sums differ by exponentially small terms. More precisely, if the Borel transform $\hat{\phi}(z)$ has only simple poles, then we have that
\be \label{eq: Stokes1-poles}
\mathrm{disc}_{\theta}\phi(z) = \sum_{\omega  \in \Omega_{\theta}} S_{\omega} \re^{-\zeta_{\omega}/z}  \, ,
\ee
where the index $\omega \in \Omega_{\theta}$ labels the singularities $\zeta_{\omega}$ such that $\arg (\zeta_{\omega}) = \theta$ and the complex numbers $S_{\omega}$ are the same Stokes constants that appear in Eq.~\eqref{eq: Stokes0}.
Similarly, if there are only logarithmic branch points, we have that
\be \label{eq: Stokes1}
\mathrm{disc}_{\theta}\phi(z) = \sum_{\omega  \in \Omega_{\theta}} S_{\omega} \re^{-\zeta_{\omega}/z} s_{\theta_-}(\phi_{\omega})(z) \, ,
\ee
where again the complex numbers $S_{\omega}$ are the Stokes constants appearing in Eq.~\eqref{eq: Stokes-log}, while $\phi_{\omega}(z)$ is the formal power series in Eq.~\eqref{eq: phi2}.

Let us now focus on the case of logarithmic singularities. If we regard the lateral Borel--Laplace sums as operators acting on formal power series, the Stokes automorphism $\mathfrak{S}_{\theta}$ at an arbitrary angle $\theta$ is defined by the composition
\be \label{eq: autStokes}
s_{\theta_+} = s_{\theta_-} \circ \mathfrak{S}_{\theta} \, ,
\ee
and the discontinuity formula in Eq.~\eqref{eq: Stokes1} has the equivalent form
\be
\mathfrak{S}_{\theta}(\phi) = \phi +  \sum_{\omega  \in \Omega_{\theta}} S_{\omega} \re^{-\zeta_{\omega}/z} \phi_{\omega} \, .
\ee
An interesting feature of resurgence is that it allows us to build a whole family of new asymptotic series $\phi_\omega$, $\omega \in \Omega$, from the knowledge of the singularity structure of the Borel transform of one single resurgent series $\phi$. 
We can then repeat the procedure above with each new series obtained this way, assuming they are simple resurgent.
For each $\omega \in \Omega$, let us define the basic trans-series
\be
\Phi_{\omega}(z) = \re^{-\zeta_{\omega}/z} \phi_{\omega}(z) \, ,
\ee
such that its Borel--Laplace sum at the angle $\theta$ is given by
\be
s_{\theta}(\Phi_{\omega})(z) = \re^{-\zeta_{\omega}/z} s_{\theta}(\phi_{\omega})(z) \, .
\ee
The corresponding Stokes automorphism acts on the basic trans-series as
\be
\mathfrak{S}_{\theta}(\Phi_{\omega}) = \Phi_{\omega} + \sum_{\omega' \in \Omega_{\theta}} S_{\omega \omega'} \Phi_{\omega'} \, ,
\ee
where we have denoted by $S_{\omega \omega'} \in \IC$ the Stokes constants of the secondary asymptotic series $\phi_{\omega}(z)$. 
The minimal resurgent structure associated with $\phi(z)$ is the smallest collection of basic trans-series that resurge from it and form a closed set under Stokes automorphisms. It is denoted by~\cite{GuM}
\be
\mathfrak{B}_{\phi} = \left\{ \Phi_{\omega}(z) \right\}_{\omega \in \bar{\Omega}} \, , \quad \bar{\Omega} \subseteq \Omega \, .
\ee
We stress that $\mathfrak{B}_{\phi}$ does not necessarily include all the basic trans-series arising from $\phi(z)$. 
Finally, we construct the (possibly infinite-dimensional) matrix of Stokes constants 
\be
\mathcal{S}_{\phi} = \{S_{\omega \omega'} \}_{\omega, \omega' \in \bar{\Omega}} \, ,
\ee
which is indexed by the distinct basic trans-series in the minimal resurgent structure of $\phi(z)$ and incorporates additional information about the non-analytic corrections to the original asymptotic series. In the case of simple poles, the same definitions and considerations above apply after imposing that $\phi_\omega(z)=1$, $\omega \in \Omega$.

Finally, we recall that the median resummation of the Gevrey-1 asymptotic series $\phi(z)$ in Eq.~\eqref{eq: phi} across an arbitrary ray $\CC_{\theta}$ in the complex $z$-plane is the average of the two lateral Borel--Laplace sums $s_{\theta_\pm}(\phi)(z)$, that is,
\be \label{eq: median}
\mathcal{S}^{\text{med}}_{\theta}\phi(z) = \frac{s_{\theta_+}(\phi)(z) + s_{\theta_-}(\phi)(z)}{2} \, ,
\ee
which is an analytic function for $\arg z\in (\theta-\frac{\pi}{2},\theta+\frac{\pi}{2})$. Equivalently, we can write
\begin{equation} \label{eq: median2}
    \mathcal{S}^{\text{med}}_{\theta}\phi(z) =\begin{cases}
        s_{\theta_-}(\phi)(z)+\frac{1}{2}\,\mathrm{disc}_{\theta}\phi(z) \, , \quad & \Re \left( \re^{-\ri\theta_{-}}z \right)>0 \, , \\
        & \\
        s_{\theta_+}(\phi)(z)-\frac{1}{2}\,\mathrm{disc}_{\theta}\phi(z) \, , \quad & \Re \left( \re^{-\ri\theta_{+}}z \right)>0 \, ,
    \end{cases}
\end{equation}
where $\Re \left( \re^{-\ri\theta} z \right)>0$.
The expression above highlights how the discontinuities can be suitably interpreted as corrections to the standard Borel--Laplace transform. In addition, by varying the contour of integration of the Laplace transform, the domain of analyticity of the median resummation can be extended, which is helpful in some applications. 

\addcontentsline{toc}{section}{References}
\bibliographystyle{JHEP}
\linespread{0.4}
\bibliography{localP2-biblio-3}

\providecommand{\href}[2]{#2}\begingroup\raggedright\begin{thebibliography}{100}

\bibitem{Rella22}
C.~Rella, \emph{Resurgence, {S}tokes constants, and arithmetic functions in
  topological string theory},
  \href{http://dx.doi.org/10.4310/CNTP.2023.v17.n3.a4}{\emph{Commun. Num.
  Theor. Phys.} {\bf 17} (2023) 709--820},
  [\href{http://arxiv.org/abs/2212.10606}{{\tt 2212.10606}}].

\bibitem{lecturesM}
M.~Mari{\~n}o, \emph{{Lectures on non-perturbative effects in large $N$ gauge
  theories, matrix models and strings}},
  \href{http://dx.doi.org/10.1002/prop.201400005}{\emph{Fortsch. Phys.} {\bf
  62} (2014) 455--540}, [\href{http://arxiv.org/abs/1206.6272}{{\tt
  1206.6272}}].

\bibitem{GHM}
A.~Grassi, Y.~Hatsuda and M.~Mari{\~n}o, \emph{{Topological Strings from
  Quantum Mechanics}},
  \href{http://dx.doi.org/10.1007/s00023-016-0479-4}{\emph{Ann. Henri Poincare}
  {\bf 17} (2016) 3177--3235}, [\href{http://arxiv.org/abs/1410.3382}{{\tt
  1410.3382}}].

\bibitem{CGM2}
S.~Codesido, A.~Grassi and M.~Mari{\~n}o, \emph{{Spectral Theory and Mirror
  Curves of Higher Genus}},
  \href{http://dx.doi.org/10.1007/s00023-016-0525-2}{\emph{Ann. Henri Poincare}
  {\bf 18} (2017) 559--622}, [\href{http://arxiv.org/abs/1507.02096}{{\tt
  1507.02096}}].

\bibitem{M}
M.~Mari{\~n}o, \emph{{Open string amplitudes and large order behavior in
  topological string theory}},
  \href{http://dx.doi.org/10.1088/1126-6708/2008/03/060}{\emph{JHEP} {\bf 03}
  (2008) 060}, [\href{http://arxiv.org/abs/hep-th/0612127}{{\tt
  hep-th/0612127}}].

\bibitem{BKMP}
V.~Bouchard, A.~Klemm, M.~Mari{\~n}o and S.~Pasquetti, \emph{{Remodeling the
  B-model}}, \href{http://dx.doi.org/10.1007/s00220-008-0620-4}{\emph{Commun.
  Math. Phys.} {\bf 287} (2009) 117--178},
  [\href{http://arxiv.org/abs/0709.1453}{{\tt 0709.1453}}].

\bibitem{SWH}
K.~Sun, X.~Wang and M.-x. Huang, \emph{{Exact Quantization Conditions, Toric
  Calabi-Yau and Nonperturbative Topological String}},
  \href{http://dx.doi.org/10.1007/JHEP01(2017)061}{\emph{JHEP} {\bf 01} (2017)
  061}, [\href{http://arxiv.org/abs/1606.07330}{{\tt 1606.07330}}].

\bibitem{GG}
A.~Grassi and J.~Gu, \emph{{BPS relations from spectral problems and blowup
  equations}}, \href{http://dx.doi.org/10.1007/s11005-019-01163-1}{\emph{Lett.
  Math. Phys.} {\bf 109} (2019) 1271--1302},
  [\href{http://arxiv.org/abs/1609.05914}{{\tt 1609.05914}}].

\bibitem{Marino:2007te}
M.~Mari{\~n}o, R.~Schiappa and M.~Weiss, \emph{{Nonperturbative Effects and the
  Large-Order Behavior of Matrix Models and Topological Strings}},
  \href{http://dx.doi.org/10.4310/CNTP.2008.v2.n2.a3}{\emph{Commun. Num. Theor.
  Phys.} {\bf 2} (2008) 349--419}, [\href{http://arxiv.org/abs/0711.1954}{{\tt
  0711.1954}}].

\bibitem{Marino:2008ya}
M.~Mari{\~n}o, \emph{{Nonperturbative effects and nonperturbative definitions
  in matrix models and topological strings}},
  \href{http://dx.doi.org/10.1088/1126-6708/2008/12/114}{\emph{JHEP} {\bf 12}
  (2008) 114}, [\href{http://arxiv.org/abs/0805.3033}{{\tt 0805.3033}}].

\bibitem{EcalleI}
J.~\'{E}calle, \emph{Les fonctions r\'{e}surgentes. {T}ome {I}}, vol.~5 of
  \emph{Publications Math\'{e}matiques d'Orsay 81}.
\newblock Universit\'{e} de Paris-Sud, D\'{e}partement de Math\'{e}matiques,
  Orsay, 1981.

\bibitem{ABS}
I.~Aniceto, G.~Basar and R.~Schiappa, \emph{{A Primer on Resurgent Transseries
  and Their Asymptotics}},
  \href{http://dx.doi.org/10.1016/j.physrep.2019.02.003}{\emph{Phys. Rept.}
  {\bf 809} (2019) 1--135}, [\href{http://arxiv.org/abs/1802.10441}{{\tt
  1802.10441}}].

\bibitem{diver-book}
C.~Mitschi and D.~Sauzin, \emph{{Divergent Series, Summability and Resurgence
  {I}}}, vol.~2153 of \emph{Lecture Notes in Mathematics}.
\newblock Springer, 2016,
  \href{http://dx.doi.org/10.1007/978-3-319-28736-2}{10.1007/978-3-319-28736-2}.

\bibitem{Dorigoni}
D.~Dorigoni, \emph{{An Introduction to Resurgence, Trans-Series and Alien
  Calculus}}, \href{http://dx.doi.org/10.1016/j.aop.2019.167914}{\emph{Annals
  Phys.} {\bf 409} (2019) 167914}, [\href{http://arxiv.org/abs/1411.3585}{{\tt
  1411.3585}}].

\bibitem{bookM}
M.~Mari{\~n}o, \emph{Instantons and Large N: An Introduction to
  Non-Perturbative Methods in Quantum Field Theory}.
\newblock Cambridge University Press, 2015,
  \href{http://dx.doi.org/10.1017/CBO9781107705968}{10.1017/CBO9781107705968}.

\bibitem{C-SESV1}
R.~Couso-Santamar\'\i{}a, J.~D. Edelstein, R.~Schiappa and M.~Vonk,
  \emph{{Resurgent Transseries and the Holomorphic Anomaly}},
  \href{http://dx.doi.org/10.1007/s00023-015-0407-z}{\emph{Ann. Henri Poincare}
  {\bf 17} (2016) 331--399}, [\href{http://arxiv.org/abs/1308.1695}{{\tt
  1308.1695}}].

\bibitem{C-SESV2}
R.~Couso-Santamar\'\i{}a, J.~D. Edelstein, R.~Schiappa and M.~Vonk,
  \emph{{Resurgent Transseries and the Holomorphic Anomaly: Nonperturbative
  Closed Strings in Local ${\mathbb{C}\mathbb{P}^2}$}},
  \href{http://dx.doi.org/10.1007/s00220-015-2358-0}{\emph{Commun. Math. Phys.}
  {\bf 338} (2015) 285--346}, [\href{http://arxiv.org/abs/1407.4821}{{\tt
  1407.4821}}].

\bibitem{BCOV1}
M.~Bershadsky, S.~Cecotti, H.~Ooguri and C.~Vafa, \emph{{Holomorphic anomalies
  in topological field theories}},
  \href{http://dx.doi.org/10.1016/0550-3213(93)90548-4}{\emph{Nucl. Phys. B}
  {\bf 405} (1993) 279--304}, [\href{http://arxiv.org/abs/hep-th/9302103}{{\tt
  hep-th/9302103}}].

\bibitem{BCOV2}
M.~Bershadsky, S.~Cecotti, H.~Ooguri and C.~Vafa, \emph{{Kodaira-Spencer theory
  of gravity and exact results for quantum string amplitudes}},
  \href{http://dx.doi.org/10.1007/BF02099774}{\emph{Commun. Math. Phys.} {\bf
  165} (1994) 311--428}, [\href{http://arxiv.org/abs/hep-th/9309140}{{\tt
  hep-th/9309140}}].

\bibitem{GuM2}
J.~Gu and M.~Mari{\~n}o, \emph{{Exact multi-instantons in topological string
  theory}},
  \href{http://dx.doi.org/10.21468/SciPostPhys.15.4.179}{\emph{SciPost Phys.}
  {\bf 15} (2023) 179}, [\href{http://arxiv.org/abs/2211.01403}{{\tt
  2211.01403}}].

\bibitem{GKKM}
J.~Gu, A.-K. Kashani-Poor, A.~Klemm and M.~Mari{\~n}o, \emph{{Non-perturbative
  topological string theory on compact Calabi-Yau 3-folds}},
  \href{http://dx.doi.org/10.21468/SciPostPhys.16.3.079}{\emph{SciPost Phys.}
  {\bf 16} (2024) 079}, [\href{http://arxiv.org/abs/2305.19916}{{\tt
  2305.19916}}].

\bibitem{IM}
K.~Iwaki and M.~Mari{\~n}o, \emph{{Resurgent Structure of the Topological
  String and the First Painlev\'e Equation}},
  \href{http://dx.doi.org/https://doi.org/10.3842/SIGMA.2024.028}{\emph{SIGMA}
  {\bf 20} (2024) 028}, [\href{http://arxiv.org/abs/2307.02080}{{\tt
  2307.02080}}].

\bibitem{MS}
M.~Mari{\~n}o and M.~Schwick, \emph{{Non-perturbative real topological
  strings}},  \href{http://arxiv.org/abs/2309.12046}{{\tt 2309.12046}}.

\bibitem{GuM3}
J.~Gu and M.~Mari{\~n}o, \emph{{On the resurgent structure of quantum
  periods}},
  \href{http://dx.doi.org/10.21468/SciPostPhys.15.1.035}{\emph{SciPost Phys.}
  {\bf 15} (2023) 035}, [\href{http://arxiv.org/abs/2211.03871}{{\tt
  2211.03871}}].

\bibitem{AMP}
S.~Alexandrov, M.~Mari{\~n}o and B.~Pioline, \emph{{Resurgence of refined
  topological strings and dual partition functions}},
  \href{http://dx.doi.org/https://doi.org/10.3842/SIGMA.2024.073}{\emph{SIGMA}
  {\bf 20} (2024) 073}, [\href{http://arxiv.org/abs/2311.17638}{{\tt
  2311.17638}}].

\bibitem{Gu23}
J.~Gu, \emph{{Relations between Stokes constants of unrefined and
  Nekrasov-Shatashvili topological strings}},
  \href{http://dx.doi.org/https://doi.org/10.1007/JHEP05(2024)199}{\emph{JHEP}
  (2024) 199}, [\href{http://arxiv.org/abs/2307.02079}{{\tt 2307.02079}}].

\bibitem{GuGuo}
J.~Gu and G.~Guo, \emph{{Resurgent Wilson loops in refined topological
  string}},  \href{http://arxiv.org/abs/2403.04528}{{\tt 2403.04528}}.

\bibitem{PS}
S.~Pasquetti and R.~Schiappa, \emph{{Borel and Stokes Nonperturbative Phenomena
  in Topological String Theory and c=1 Matrix Models}},
  \href{http://dx.doi.org/10.1007/s00023-010-0044-5}{\emph{Ann. Henri Poincare}
  {\bf 11} (2010) 351--431}, [\href{http://arxiv.org/abs/0907.4082}{{\tt
  0907.4082}}].

\bibitem{ASTT}
M.~Alim, A.~Saha, J.~Teschner and I.~Tulli, \emph{{Mathematical Structures of
  Non-perturbative Topological String Theory: From GW to DT Invariants}},
  \href{http://dx.doi.org/10.1007/s00220-022-04571-y}{\emph{Commun. Math.
  Phys.} {\bf 399} (2023) 1039--1101},
  [\href{http://arxiv.org/abs/2109.06878}{{\tt 2109.06878}}].

\bibitem{GHN}
A.~Grassi, Q.~Hao and A.~Neitzke, \emph{{Exponential Networks, WKB and the
  Topological String}},
  \href{http://dx.doi.org/https://doi.org/10.3842/SIGMA.2023.064}{\emph{SIGMA}
  {\bf 19} (2023) 064}, [\href{http://arxiv.org/abs/2201.11594}{{\tt
  2201.11594}}].

\bibitem{AHT}
M.~Alim, L.~Hollands and I.~Tulli, \emph{{Quantum curves, resurgence and exact
  WKB}},
  \href{http://dx.doi.org/https://doi.org/10.3842/SIGMA.2023.009}{\emph{SIGMA}
  {\bf 19} (2023) 009}, [\href{http://arxiv.org/abs/2203.08249}{{\tt
  2203.08249}}].

\bibitem{GuM}
J.~Gu and M.~Mari{\~n}o, \emph{{Peacock patterns and new integer invariants in
  topological string theory}},
  \href{http://dx.doi.org/10.21468/SciPostPhys.12.2.058}{\emph{SciPost Phys.}
  {\bf 12} (2022) 058}, [\href{http://arxiv.org/abs/2104.07437}{{\tt
  2104.07437}}].

\bibitem{ABK}
M.~Aganagic, V.~Bouchard and A.~Klemm, \emph{{Topological Strings and (Almost)
  Modular Forms}},
  \href{http://dx.doi.org/10.1007/s00220-007-0383-3}{\emph{Commun. Math. Phys.}
  {\bf 277} (2008) 771--819}, [\href{http://arxiv.org/abs/hep-th/0607100}{{\tt
  hep-th/0607100}}].

\bibitem{CI}
T.~Coates and H.~Iritani, \emph{{Gromov--Witten invariants of local $\IP^2$ and
  modular forms}}, \href{http://dx.doi.org/10.1215/21562261-2021-0010}{\emph{J.
  Math. (Kyoto)} {\bf 61} (2021) 543 -- 706},
  [\href{http://arxiv.org/abs/1804.03292}{{\tt 1804.03292}}].

\bibitem{Bousseau20}
P.~Bousseau, H.~Fan, S.~Guo and L.~Wu, \emph{{Holomorphic anomaly equation for
  $(\IP^2,E)$ and the {N}ekrasov--{S}hatashvili limit of local $\IP^2$}},
  \href{http://dx.doi.org/10.1017/fmp.2021.3}{\emph{Forum Math. Pi} {\bf 9}
  (2021) e3}, [\href{http://arxiv.org/abs/2001.05347}{{\tt 2001.05347}}].

\bibitem{FR1maths}
V.~Fantini and C.~Rella, \emph{Modular resurgent structures},
  \href{http://arxiv.org/abs/2404.11550}{{\tt 2404.11550}}.

\bibitem{reviewM}
M.~Mari{\~n}o, \emph{{Spectral Theory and Mirror Symmetry}}, {\emph{Proc. Symp.
  Pure Math.} {\bf 98} (2018) 259},
  [\href{http://arxiv.org/abs/1506.07757}{{\tt 1506.07757}}].

\bibitem{CGPO}
P.~Candelas, X.~C. De~La~Ossa, P.~S. Green and L.~Parkes, \emph{{A Pair of
  Calabi-Yau manifolds as an exactly soluble superconformal theory}},
  \href{http://dx.doi.org/10.1016/0550-3213(91)90292-6}{\emph{Nucl. Phys. B}
  {\bf 359} (1991) 21--74}.

\bibitem{HW}
M.-x. Huang and X.-f. Wang, \emph{{Topological Strings and Quantum Spectral
  Problems}}, \href{http://dx.doi.org/10.1007/JHEP09(2014)150}{\emph{JHEP} {\bf
  09} (2014) 150}, [\href{http://arxiv.org/abs/1406.6178}{{\tt 1406.6178}}].

\bibitem{MiMo}
A.~D. Mironov and A.~Morozov, \emph{{Nekrasov Functions and Exact
  Bohr-Zommerfeld Integrals}},
  \href{http://dx.doi.org/10.1007/JHEP04(2010)040}{\emph{JHEP} {\bf 04} (2010)
  040}, [\href{http://arxiv.org/abs/0910.5670}{{\tt 0910.5670}}].

\bibitem{ACDKV}
M.~Aganagic, M.~C.~N. Cheng, R.~Dijkgraaf, D.~Krefl and C.~Vafa, \emph{{Quantum
  Geometry of Refined Topological Strings}},
  \href{http://dx.doi.org/10.1007/JHEP11(2012)019}{\emph{JHEP} {\bf 11} (2012)
  019}, [\href{http://arxiv.org/abs/1105.0630}{{\tt 1105.0630}}].

\bibitem{KM}
R.~M. Kashaev and M.~Mari{\~n}o, \emph{{Operators from mirror curves and the
  quantum dilogarithm}},
  \href{http://dx.doi.org/10.1007/s00220-015-2499-1}{\emph{Commun. Math. Phys.}
  {\bf 346} (2016) 967--994}, [\href{http://arxiv.org/abs/1501.01014}{{\tt
  1501.01014}}].

\bibitem{CGuM}
S.~Codesido, J.~Gu and M.~Mari{\~n}o, \emph{{Operators and higher genus mirror
  curves}}, \href{http://dx.doi.org/10.1007/JHEP02(2017)092}{\emph{JHEP} {\bf
  02} (2017) 092}, [\href{http://arxiv.org/abs/1609.00708}{{\tt 1609.00708}}].

\bibitem{MZ}
M.~Mari{\~n}o and S.~Zakany, \emph{{Matrix models from operators and
  topological strings}},
  \href{http://dx.doi.org/10.1007/s00023-015-0422-0}{\emph{Ann. Henri Poincare}
  {\bf 17} (2016) 1075--1108}, [\href{http://arxiv.org/abs/1502.02958}{{\tt
  1502.02958}}].

\bibitem{KMZ}
R.~M. Kashaev, M.~Mari{\~n}o and S.~Zakany, \emph{{Matrix models from operators
  and topological strings, 2}},
  \href{http://dx.doi.org/10.1007/s00023-016-0471-z}{\emph{Ann. Henri Poincare}
  {\bf 17} (2016) 2741--2781}, [\href{http://arxiv.org/abs/1505.02243}{{\tt
  1505.02243}}].

\bibitem{CGM}
S.~Codesido, A.~Grassi and M.~Mari{\~n}o, \emph{{Exact results in $
  \mathcal{N}=8 $ Chern-Simons-matter theories and quantum geometry}},
  \href{http://dx.doi.org/10.1007/JHEP07(2015)011}{\emph{JHEP} {\bf 07} (2015)
  011}, [\href{http://arxiv.org/abs/1409.1799}{{\tt 1409.1799}}].

\bibitem{HMMO}
Y.~Hatsuda, M.~Mari{\~n}o, S.~Moriyama and K.~Okuyama, \emph{{Non-perturbative
  effects and the refined topological string}},
  \href{http://dx.doi.org/10.1007/JHEP09(2014)168}{\emph{JHEP} {\bf 09} (2014)
  168}, [\href{http://arxiv.org/abs/1306.1734}{{\tt 1306.1734}}].

\bibitem{GV}
R.~Gopakumar and C.~Vafa, \emph{{M-theory and topological strings -- {II}}},
  \href{http://arxiv.org/abs/hep-th/9812127}{{\tt hep-th/9812127}}.

\bibitem{PT}
R.~Pandharipande and R.~P. Thomas, \emph{{Curve counting via stable pairs in
  the derived category}},
  \href{http://dx.doi.org/10.1007/s00222-009-0203-9}{\emph{Invent. Math.} {\bf
  178} (2009) 407--447}, [\href{http://arxiv.org/abs/0707.2348}{{\tt
  0707.2348}}].

\bibitem{NS}
N.~A. Nekrasov and S.~L. Shatashvili, \emph{{Quantization of Integrable Systems
  and Four Dimensional Gauge Theories}},  in \emph{{16th ICMP}}, pp.~265--289,
  8, 2009.
\newblock \href{http://arxiv.org/abs/0908.4052}{{\tt 0908.4052}}.
\newblock \href{http://dx.doi.org/10.1142/9789814304634_0015}{DOI}.

\bibitem{CKK}
J.~Choi, S.~Katz and A.~Klemm, \emph{{The refined BPS index from stable pair
  invariants}},
  \href{http://dx.doi.org/10.1007/s00220-014-1978-0}{\emph{Commun. Math. Phys.}
  {\bf 328} (2014) 903--954}, [\href{http://arxiv.org/abs/1210.4403}{{\tt
  1210.4403}}].

\bibitem{NO}
N.~A. Nekrasov and A.~Okounkov, \emph{{Membranes and Sheaves}},
  \href{http://arxiv.org/abs/1404.2323}{{\tt 1404.2323}}.

\bibitem{N}
N.~A. Nekrasov, \emph{{Seiberg-Witten prepotential from instanton counting}},
  \href{http://dx.doi.org/10.4310/ATMP.2003.v7.n5.a4}{\emph{Adv. Theor. Math.
  Phys.} {\bf 7} (2003) 831--864},
  [\href{http://arxiv.org/abs/hep-th/0206161}{{\tt hep-th/0206161}}].

\bibitem{Ka}
A.-K. Kashani-Poor, \emph{{Quantization condition from exact WKB for difference
  equations}}, \href{http://dx.doi.org/10.1007/JHEP06(2016)180}{\emph{JHEP}
  {\bf 06} (2016) 180}, [\href{http://arxiv.org/abs/1604.01690}{{\tt
  1604.01690}}].

\bibitem{KS1}
R.~M. Kashaev and S.~M. Sergeev, \emph{{Spectral equations for the modular
  oscillator}}, \href{http://dx.doi.org/10.1142/9789813233867_0015}{\emph{Rev.
  Math. Phys.} {\bf 30} (2018) 1840009},
  [\href{http://arxiv.org/abs/1703.06016}{{\tt 1703.06016}}].

\bibitem{KS2}
R.~M. Kashaev and S.~Sergeev, \emph{{On the Spectrum of the Local ${\mathbb
  {P}}^2$ Mirror Curve}},
  \href{http://dx.doi.org/10.1007/s00023-020-00960-y}{\emph{Ann. Henri
  Poincare} {\bf 21} (2020) 3479--3497},
  [\href{http://arxiv.org/abs/1904.12315}{{\tt 1904.12315}}].

\bibitem{GM2}
A.~Grassi and M.~Mari{\~n}o, \emph{{The complex side of the TS/ST
  correspondence}}, \href{http://dx.doi.org/10.1088/1751-8121/aaec4b}{\emph{J.
  Phys. A} {\bf 52} (2019) 055402},
  [\href{http://arxiv.org/abs/1708.08642}{{\tt 1708.08642}}].

\bibitem{MP}
M.~Mari{\~n}o and P.~Putrov, \emph{{ABJM theory as a Fermi gas}},
  \href{http://dx.doi.org/10.1088/1742-5468/2012/03/P03001}{\emph{J. Stat.
  Mech.} {\bf 1203} (2012) P03001}, [\href{http://arxiv.org/abs/1110.4066}{{\tt
  1110.4066}}].

\bibitem{KKN}
V.~A. Kazakov, I.~K. Kostov and N.~A. Nekrasov, \emph{{D particles, matrix
  integrals and KP hierarchy}},
  \href{http://dx.doi.org/10.1016/S0550-3213(99)00393-4}{\emph{Nucl. Phys. B}
  {\bf 557} (1999) 413--442}, [\href{http://arxiv.org/abs/hep-th/9810035}{{\tt
  hep-th/9810035}}].

\bibitem{GM}
A.~Grassi and M.~Mari{\~n}o, \emph{{M-theoretic matrix models}},
  \href{http://dx.doi.org/10.1007/JHEP02(2015)115}{\emph{JHEP} {\bf 02} (2015)
  115}, [\href{http://arxiv.org/abs/1403.4276}{{\tt 1403.4276}}].

\bibitem{Bousseau22}
P.~Bousseau, P.~Descombes, B.~Le~Floch and B.~Pioline, \emph{{BPS Dendroscopy
  on Local $\IP^2$}},
  \href{http://dx.doi.org/https://doi.org/10.1007/s00220-024-04938-3}{\emph{Commun.
  Math. Phys.} {\bf 405} (2024) 108},
  [\href{http://arxiv.org/abs/2210.10712}{{\tt 2210.10712}}].

\bibitem{GGuM}
S.~Garoufalidis, J.~Gu and M.~Mari{\~n}o, \emph{{The Resurgent Structure of
  Quantum Knot Invariants}},
  \href{http://dx.doi.org/10.1007/s00220-021-04076-0}{\emph{Commun. Math.
  Phys.} {\bf 386} (2021) 469--493},
  [\href{http://arxiv.org/abs/2007.10190}{{\tt 2007.10190}}].

\bibitem{GGuM2}
S.~Garoufalidis, J.~Gu and M.~Mari{\~n}o, \emph{{Peacock patterns and
  resurgence in complex Chern-Simons theory}},
  \href{http://dx.doi.org/https://doi.org/10.1007/s40687-023-00391-1}{\emph{Res.
  Math. Sci.} {\bf 10} (2023) 29}, [\href{http://arxiv.org/abs/2012.00062}{{\tt
  2012.00062}}].

\bibitem{Vonk23}
C.~Marinissen, A.~van Spaendonck and M.~Vonk, \emph{{Resurgence of large order
  relations}},  \href{http://arxiv.org/abs/2312.12497}{{\tt 2312.12497}}.

\bibitem{CKYZ}
T.~M. Chiang, A.~Klemm, S.-T. Yau and E.~Zaslow, \emph{{Local mirror symmetry:
  Calculations and interpretations}},
  \href{http://dx.doi.org/10.4310/ATMP.1999.v3.n3.a3}{\emph{Adv. Theor. Math.
  Phys.} {\bf 3} (1999) 495--565},
  [\href{http://arxiv.org/abs/hep-th/9903053}{{\tt hep-th/9903053}}].

\bibitem{MR}
M.~Mari\~no and C.~Rella, \emph{{On the structure of wave functions in complex
  Chern-Simons theory}},  \href{http://arxiv.org/abs/2312.00624}{{\tt
  2312.00624}}.

\bibitem{DGLZ}
T.~Dimofte, S.~Gukov, J.~Lenells and D.~Zagier, \emph{{Exact Results for
  Perturbative Chern-Simons Theory with Complex Gauge Group}},
  \href{http://dx.doi.org/10.4310/CNTP.2009.v3.n2.a4}{\emph{Commun. Num. Theor.
  Phys.} {\bf 3} (2009) 363--443}, [\href{http://arxiv.org/abs/0903.2472}{{\tt
  0903.2472}}].

\bibitem{AK}
J.~E.~Andersen and R.~M. Kashaev, \emph{{A TQFT from Quantum Teichm\"uller
  Theory}}, \href{http://dx.doi.org/10.1007/s00220-014-2073-2}{\emph{Commun.
  Math. Phys.} {\bf 330} (2014) 887--934},
  [\href{http://arxiv.org/abs/1109.6295}{{\tt 1109.6295}}].

\bibitem{Beem:2012mb}
C.~Beem, T.~Dimofte and S.~Pasquetti, \emph{{Holomorphic Blocks in Three
  Dimensions}}, \href{http://dx.doi.org/10.1007/JHEP12(2014)177}{\emph{JHEP}
  {\bf 12} (2014) 177}, [\href{http://arxiv.org/abs/1211.1986}{{\tt
  1211.1986}}].

\bibitem{Dimofte:2014zga}
T.~Dimofte, \emph{{Complex Chern\textendash{}Simons Theory at Level k via the
  3d\textendash{}3d Correspondence}},
  \href{http://dx.doi.org/10.1007/s00220-015-2401-1}{\emph{Commun. Math. Phys.}
  {\bf 339} (2015) 619--662}, [\href{http://arxiv.org/abs/1409.0857}{{\tt
  1409.0857}}].

\bibitem{Sduality}
N.~A. Nekrasov, H.~Ooguri and C.~Vafa, \emph{{S duality and topological
  strings}}, \href{http://dx.doi.org/10.1088/1126-6708/2004/10/009}{\emph{JHEP}
  {\bf 10} (2004) 009}, [\href{http://arxiv.org/abs/hep-th/0403167}{{\tt
  hep-th/0403167}}].

\bibitem{Fantini-talk-IHES}
V.~Fantini, ``{M}ini school on {W}all-{C}rossing {S}tructures, {A}nalyticity,
  and {R}esurgence: $q$-{S}eries, {R}esurgence, and {M}odularity.''
  \url{https://www.youtube.com/watch?v=x_B5C_QNiQc}.

\bibitem{Hadamard}
J.~Hadamard, \emph{{Théorème sur les séries entières}},
  \href{http://dx.doi.org/10.1007/BF02417870}{\emph{Acta Math.} {\bf 22} (1900)
  55 -- 63}.

\bibitem{wheeler-thesis}
C.~Wheeler, \emph{{Modular q-difference equations and quantum invariants of
  hyperbolic three-manifolds}}.
\newblock PhD thesis, {Rheinische Friedrich--Wilhelms--Universit\"at Bonn},
  2023.

\bibitem{wheeler-book}
C.~Wheeler, ``{Quantum modular forms and 3-manifolds}.'' \textit{In
  preparation}.

\bibitem{costin-garoufalidis}
O.~Costin and S.~Garoufalidis, \emph{Resurgence of the {K}ontsevich--{Z}agier
  series}, \href{http://dx.doi.org/10.5802/aif.2639}{\emph{Annales de
  l’Institut Fourier} {\bf 61} (2011) 1225--1258},
  [\href{http://arxiv.org/abs/math/0609619}{{\tt math/0609619}}].

\bibitem{qCS7}
J.~{E. Andersen} and W.~{E. Misteg{\r{a}}rd}, \emph{{Resurgence Analysis of
  Quantum Invariants of Seifert Fibered Homology Spheres}},
  \href{http://dx.doi.org/https://doi.org/10.1112/jlms.12506}{\emph{J. Lond.
  Math. Soc.} {\bf 105} (2022) 709--764},
  [\href{http://arxiv.org/abs/1811.05376}{{\tt 1811.05376}}].

\bibitem{zagier_modular}
D.~Zagier, \emph{{Q}uantum modular forms}, {\emph{Quanta of Maths} {\bf 11}
  (2010) 659--675}.

\bibitem{bringmann-q-hyper}
K.~Bringmann, J.~Lovejoy and L.~Rolen, \emph{{On Some Special Families of
  q-hypergeometric Maass Forms}},
  \href{http://dx.doi.org/10.1093/imrn/rnx057}{\emph{Int. Math. Res. Not.} {\bf
  2018} (2018) 5537--5561}, [\href{http://arxiv.org/abs/1603.01783}{{\tt
  1603.01783}}].

\bibitem{goswami2021quantum-theta}
A.~Goswami and R.~Osburn, \emph{Quantum modularity of partial theta series with
  periodic coefficients},
  \href{http://dx.doi.org/doi:10.1515/forum-2020-0201}{\emph{Forum Math.} {\bf
  33} (2021) 451--463}, [\href{http://arxiv.org/abs/2012.02457}{{\tt
  2012.02457}}].

\bibitem{zagier2001vassiliev}
D.~Zagier, \emph{Vassiliev invariants and a strange identity related to the
  {D}edekind eta-function},
  \href{http://dx.doi.org/https://doi.org/10.1016/S0040-9383(00)00005-7}{\emph{Topology}
  {\bf 40} (2001) 945--960}.

\bibitem{zagier-talk}
D.~Zagier, ``{HIM} {L}ectures. {T}rimester {P}rogramm on {D}ynamics: {T}opology
  and {N}umbers.'' \url{https://www.youtube.com/watch?v=2Rj_xh3UKrU}.

\bibitem{qCS1}
R.~Lawrence and D.~Zagier, \emph{Modular forms and quantum invariants of
  3-manifolds}, \href{http://dx.doi.org/10.4310/AJM.1999.v3.n1.a5}{\emph{Asian
  J. Math.} {\bf 3} (1999) 93--107}.

\bibitem{DG}
T.~Dimofte and S.~Garoufalidis, \emph{Quantum modularity and complex
  {C}hern-{S}imons theory},
  \href{http://dx.doi.org/10.4310/CNTP.2018.v12.n1.a1}{\emph{Commun. Num.
  Theor. Phys.} {\bf 12} (2018) 1--52},
  [\href{http://arxiv.org/abs/1511.05628}{{\tt 1511.05628}}].

\bibitem{CCFGH}
M.~C.~N. Cheng, S.~Chun, F.~Ferrari, S.~Gukov and S.~M. Harrison, \emph{{3d
  Modularity}}, \href{http://dx.doi.org/10.1007/JHEP10(2019)010}{\emph{JHEP}
  {\bf 10} (2019) 010}, [\href{http://arxiv.org/abs/1809.10148}{{\tt
  1809.10148}}].

\bibitem{Cheng1}
M.~C.~N. Cheng, F.~Ferrari and G.~Sgroi, \emph{{Three-Manifold Quantum
  Invariants and Mock Theta Functions}},
  \href{http://dx.doi.org/10.1098/rsta.2018.0439}{\emph{Phil. Trans. Roy. Soc.
  Lond.} {\bf 378} (2019) 20180439},
  [\href{http://arxiv.org/abs/1912.07997}{{\tt 1912.07997}}].

\bibitem{Cheng2}
M.~C.~N. Cheng, I.~Coman, D.~Passaro and G.~Sgroi, \emph{{Quantum Modular
  $\widehat{Z}^G$-Invariants}},
  \href{http://dx.doi.org/10.3842/SIGMA.2024.018}{\emph{SIGMA} {\bf 20} (2024)
  018}, [\href{http://arxiv.org/abs/2304.03934}{{\tt 2304.03934}}].

\bibitem{kontsevich--analytic}
M.~Kontsevich and Y.~Soibelman, \emph{Analyticity and resurgence in
  wall-crossing formulas},
  \href{http://dx.doi.org/10.1007/s11005-022-01529-y}{\emph{Lett. Math. Phys.}
  {\bf 112} (2022) Paper No. 32, 56},
  [\href{http://arxiv.org/abs/2005.10651}{{\tt 2005.10651}}].

\bibitem{IK1}
K.~Iwaki and O.~Kidwai, \emph{{Topological Recursion and Uncoupled BPS
  Structures I: BPS Spectrum and Free Energies}},
  \href{http://dx.doi.org/10.1016/j.aim.2022.108191}{\emph{Adv. Math.} {\bf
  398} (2022) 108191}, [\href{http://arxiv.org/abs/2010.05596}{{\tt
  2010.05596}}].

\bibitem{IK2}
K.~Iwaki and O.~Kidwai, \emph{{Topological Recursion and Uncoupled BPS
  Structures II: Voros Symbols and the $\tau$-Function}},
  \href{http://dx.doi.org/10.1007/s00220-022-04563-y}{\emph{Commun. Math.
  Phys.} {\bf 399} (2023) 519--572},
  [\href{http://arxiv.org/abs/2108.06995}{{\tt 2108.06995}}].

\bibitem{GrGuM}
A.~Grassi, J.~Gu and M.~Mari{\~n}o, \emph{{Non-perturbative approaches to the
  quantum Seiberg-Witten curve}},
  \href{http://dx.doi.org/10.1007/JHEP07(2020)106}{\emph{JHEP} {\bf 07} (2020)
  106}, [\href{http://arxiv.org/abs/1908.07065}{{\tt 1908.07065}}].

\bibitem{DFR}
M.~R. Douglas, B.~Fiol and C.~Romelsberger, \emph{{The Spectrum of BPS branes
  on a noncompact Calabi-Yau}},
  \href{http://dx.doi.org/10.1088/1126-6708/2005/09/057}{\emph{JHEP} {\bf 09}
  (2005) 057}, [\href{http://arxiv.org/abs/hep-th/0003263}{{\tt
  hep-th/0003263}}].

\bibitem{GMN}
D.~Gaiotto, G.~W. Moore and A.~Neitzke, \emph{{Wall-crossing, Hitchin systems,
  and the WKB approximation}},
  \href{http://dx.doi.org/10.1016/j.aim.2012.09.027}{\emph{Adv. Math.} {\bf
  234} (2013) 239--403}, [\href{http://arxiv.org/abs/0907.3987}{{\tt
  0907.3987}}].

\bibitem{ESW}
R.~Eager, S.~A. Selmani and J.~Walcher, \emph{{Exponential Networks and
  Representations of Quivers}},
  \href{http://dx.doi.org/10.1007/JHEP08(2017)063}{\emph{JHEP} {\bf 08} (2017)
  063}, [\href{http://arxiv.org/abs/1611.06177}{{\tt 1611.06177}}].

\bibitem{Longhi1}
P.~Longhi, \emph{{Instanton Particles and Monopole Strings in 5D SU(2)
  Supersymmetric Yang-Mills Theory}},
  \href{http://dx.doi.org/10.1103/PhysRevLett.126.211601}{\emph{Phys. Rev.
  Lett.} {\bf 126} (2021) 211601}, [\href{http://arxiv.org/abs/2101.01681}{{\tt
  2101.01681}}].

\bibitem{Longhi2}
F.~Del~Monte and P.~Longhi, \emph{{The threefold way to quantum periods: WKB,
  TBA equations and q-Painlev\'e}},
  \href{http://dx.doi.org/10.21468/SciPostPhys.15.3.112}{\emph{SciPost Phys.}
  {\bf 15} (2023) 112}, [\href{http://arxiv.org/abs/2207.07135}{{\tt
  2207.07135}}].

\bibitem{Faddeev2}
L.~D. Faddeev and R.~M. Kashaev, \emph{{Quantum Dilogarithm}},
  \href{http://dx.doi.org/10.1142/S0217732394000447}{\emph{Mod. Phys. Lett. A}
  {\bf 9} (1994) 427--434}, [\href{http://arxiv.org/abs/hep-th/9310070}{{\tt
  hep-th/9310070}}].

\bibitem{Faddeev--roots}
S.~Garoufalidis and R.~M. Kashaev, \emph{Evaluation of state integrals at
  rational points},
  \href{http://dx.doi.org/10.4310/CNTP.2015.v9.n3.a3}{\emph{Commun. Num. Theor.
  Phys.} {\bf 9} (2015) 549--582}, [\href{http://arxiv.org/abs/1411.6062}{{\tt
  1411.6062}}].

\bibitem{Katsurada}
M.~Katsurada, \emph{Asymptotic expansions of certain q-series and a formula of
  {R}amanujan for specific values of the riemann zeta function},
  \href{http://dx.doi.org/10.4064/aa107-3-5}{\emph{Acta Arith.} {\bf 107}
  (2003) 269--298}.

\bibitem{GK}
S.~Garoufalidis and R.~M. Kashaev, \emph{{Resurgence of Faddeev's quantum
  dilogarithm}},  \href{http://arxiv.org/abs/2008.12465}{{\tt 2008.12465}}.

\end{thebibliography}\endgroup

\end{document}